\documentclass[11pt,authoryear]{article}

\usepackage[margin=1in]{geometry}
\usepackage[utf8]{inputenc}
\usepackage[T1]{fontenc}
\usepackage{amsmath,amssymb,amsthm,mathtools}
\usepackage{amsfonts}
\usepackage{algorithm}
\usepackage{algpseudocode}
\usepackage{booktabs}
\usepackage{graphicx}
\usepackage{longtable}
\usepackage[numbers]{natbib}
\usepackage{xcolor}
\usepackage{url}
\usepackage[colorlinks=true,citecolor=blue,linkcolor=blue,urlcolor=blue,hypertexnames=false]{hyperref}
\usepackage{authblk}

\hypersetup{
    pdftitle={Approximation algorithms for the prize-collecting rural postman problem}, pdfauthor={Hong Li, Jianping Li, Wei Li, Runtao Xie, Xiaoxiao Yang}, pdfkeywords={arc routing, prize-collecting routing, approximation algorithms, linear programming}
}

\allowdisplaybreaks[1]
\newtheorem{theorem}{Theorem}[section]
\newtheorem{lemma}[theorem]{Lemma}

\theoremstyle{definition}

\theoremstyle{remark}

\title{Approximation algorithms for the prize-collecting rural postman problem\thanks{This work was supported by the National Natural Science Foundation of China (No. 12361066) and the Project of Yunling Scholars Training of Yunnan Province (No. K264202011820).}}

\author[1]{Hong Li\thanks{Email: \texttt{honglimath@126.com}.}}
\author[1]{Jianping Li\thanks{Corresponding author. Email: \texttt{jianping@ynu.edu.cn}.}}
\author[1]{Wei Li}
\author[1]{Runtao Xie}
\author[1]{Xiaoxiao Yang}

\affil[1]{School of Mathematics and Statistics, Yunnan University}
\date{}

\begin{document}
\maketitle

\begin{abstract}
In this paper, we study the prize-collecting rural postman problem (PCRPP), a variant of the rural postman problem. In an instance of the PCRPP, one is given an undirected graph whose edges have nonnegative lengths and nonnegative profits, together with a specified root vertex. The goal is to find a closed walk that starts and ends at the root vertex and minimizes the sum of the walk length and the profits of all edges that the walk does not traverse. A natural way to design an approximation algorithm for the PCRPP is to construct a prize-collecting traveling salesman problem (PCTSP) instance from the given PCRPP instance, apply an approximation algorithm to the PCTSP instance, and then convert the resulting solution to the PCTSP instance into a solution to the PCRPP instance. We show that this approach has an inherent factor-two barrier: even if the constructed PCTSP instance is solved exactly, the resulting solution to the PCRPP instance can have objective value arbitrarily close to twice the optimum value of the PCRPP instance. Our main result is a polynomial time approximation algorithm with an approximation ratio strictly smaller than \(1.6\) for the PCRPP. On a public benchmark set of 118 instances, the proposed algorithm has average and maximum optimality gaps of \(3.39\%\) and \(12.12\%\), respectively.
\end{abstract}

\noindent\textbf{Keywords:} Arc routing; Prize-collecting routing; Approximation algorithms; Linear programming.

\section{Introduction}\label{sec:intro}

The rural postman problem is a classical arc routing problem \citep{Orloff1974}. In this problem, we are given an undirected graph with nonnegative edge lengths and a set of required edges, and the task is to find a minimum-length closed walk that traverses all required edges. A natural variant arises when the set of required edges is no longer prescribed in advance. Instead, each edge has a profit that is collected once the edge is traversed; one must choose which edges to traverse based on the tradeoff between their profits and the additional travel length needed to include them in a closed walk. \citet{Araoz2006} introduced this model under the name privatized rural postman problem, and \citet{Araoz2009} later studied the same model explicitly as the prize-collecting rural postman problem (PCRPP).

In the PCRPP, we are given an undirected graph in which each edge has a nonnegative length and a nonnegative profit, together with a specified root vertex \(r\). The problem is to find a closed walk, not necessarily simple, that starts and ends at \(r\), such that the profit of an edge is collected when the walk traverses this edge at least once. The objective is to maximize the collected profit minus the walk length. Note that the profit of an edge can be collected at most once, whereas the walk length counts an edge as many times as it is traversed. Since feasible solutions may have negative objective value and the optimal value of this maximization formulation may be zero, this formulation is not well suited to multiplicative approximation analysis. We therefore use the corresponding minimization version in this paper. In this version, the objective is to minimize the walk length plus the total profit that is not collected by the walk, i.e., the walk length plus the total profit of the edges that it does not traverse. For the same instance, the two formulations have the same optimal solutions, since, for every feasible solution, the minimization objective value equals the total edge profit of the instance minus the maximization objective value.

\subsection{Mathematical model}
\label{sec:model}

Consider a PCRPP instance \(G=(V,E;w,p;r)\), where \(r\in V\) is the root vertex, \(w_e\ge 0\) denotes the length of edge \(e\), and \(p_e\ge 0\) denotes its profit. We may assume that \(G\) is connected. Indeed, no closed walk starting and ending at \(r\) can traverse an edge in a connected component not containing \(r\). Thus it suffices to restrict the instance to the connected component containing \(r\); under the minimization objective, this changes the objective value of every feasible solution by the same fixed constant and does not affect the approximation analysis.

We present an integer programming formulation that explicitly describes the structure of optimal solutions under the minimization objective. Let \(x_e\) denote the number of times edge \(e\) is traversed, let \(z_e\in\{0,1\}\) indicate whether the profit of \(e\) is collected, let \(y_v\in\{0,1\}\) indicate whether vertex \(v\) is visited, and let \(k_v\in\mathbb Z_{\ge0}\) denote the number of times vertex \(v\) is visited. Because all edge lengths are nonnegative and each edge profit is collected at most once, there exists an optimal solution with \(x_e\in\{0,1,2\}\) for every edge \(e\). Thus the following model is sufficient to capture an optimal solution to the PCRPP instance:

\begin{align}
\min \quad &
\sum_{e\in E} w_e x_e + \sum_{e\in E} p_e (1-z_e)
\label{eq:obj_min}\\
\text{\textit{s.t.}} \quad
& x(\delta(v)) = 2k_v,
&& \forall\, v\in V,
\label{eq:parity} \\
& x(\delta(S)) \ge 2y_v,
&& \forall\, S\subseteq V\setminus\{r\},\ \forall\, v\in S,
\label{eq:conn} \\
& z_e \le x_e \le 2z_e,
&& \forall\, e\in E,
\label{eq:usage} \\
& x_e \le 2y_u,\ x_e \le 2y_v,
&& \forall\, e=uv\in E,
\label{eq:visit} \\
& z_e \le y_u,\ z_e \le y_v,
&& \forall\, e=uv\in E,
\label{eq:couple} \\
& y_r = 1,\ z_e,y_v\in\{0,1\},\
x_e\in\{0,1,2\},\
k_v\in\mathbb Z_{\ge0}.
\label{eq:vars}
\end{align}

In \eqref{eq:obj_min}, the first term equals the walk length, and the second term equals the total profit of the edges whose profits are not collected. Constraint~\eqref{eq:parity} enforces that all vertex degrees are even. Constraint~\eqref{eq:conn} is a rooted subtour-elimination constraint: it ensures edge connectivity between every visited vertex and the root vertex \(r\). Constraint~\eqref{eq:usage} enforces \(z_e=1\) exactly when \(x_e\ge1\). Constraint~\eqref{eq:visit} ensures that the endpoints of every traversed edge are visited, and constraint~\eqref{eq:couple} ensures that the profit of an edge can be collected only if both endpoints are visited. This formulation is included for structural and expository purposes only; it is not the formulation used in our approximation algorithm.

To relate the minimization objective in \eqref{eq:obj_min} to the original maximization formulation of the PCRPP, recall that \citet{Araoz2006,Araoz2009} use the objective

\begin{equation}
\max \sum_{e\in E} p_e z_e - \sum_{e\in E} w_e x_e .
\label{eq:obj_araoz}
\end{equation}
For every feasible solution,
\[
\sum_{e\in E} w_e x_e + \sum_{e\in E} p_e(1-z_e)
=
\sum_{e\in E} p_e
-
\left(
\sum_{e\in E} p_e z_e
-
\sum_{e\in E} w_e x_e
\right).
\]
Since the sum of all edge profits is fixed for the instance, minimizing \eqref{eq:obj_min} and maximizing \eqref{eq:obj_araoz} produce exactly the same optimal solutions. However, this transformation does not preserve multiplicative approximation ratios. Throughout the paper, we state all approximation ratios for this minimization objective~\eqref{eq:obj_min}.

\subsection{Related work}
\label{sec:related}

The PCRPP belongs to the broader class of arc routing problems with profits. \citet{Araoz2006} introduced the problem under the name privatized rural postman problem, proved its NP-hardness, and gave polynomial time exact algorithms for several special cases. \citet{Araoz2009} later studied the same model as the PCRPP and proposed exact and heuristic algorithms for the PCRPP. \citet{Palma2011} proposed a tabu-search heuristic algorithm for the PCRPP. Other closely related arc routing problems with profits include the clustered prize-collecting arc routing problem \citep{AraozFranquesa2009}, the windy clustered prize-collecting arc routing problem \citep{CorberanFernandezFranquesaSanchis2011}, the time-dependent prize-collecting arc routing problem \citep{BlackEgleseWohlk2013}, the directed profitable rural postman problem \citep{ArchettiGuastarobaSperanza2014,ColombiMansini2014}, and the profitable windy rural postman problem \citep{Avila2016}. For a broader account of arc routing with profits, we refer the reader to \citet{ArchettiSperanza2015}. Most work on these models has focused on exact and heuristic algorithms.

Another related model is the restricted Chinese postman problem with penalties, which can be viewed as the minimization version of the PCRPP with the additional requirement that all vertices be visited. \citet{ZhuPan2021} studied this problem and gave a \(1.5\)-approximation algorithm. More recently, \citet{PanZhu2024} studied approximation algorithms for multiple-vehicle extensions of the restricted Chinese postman problem with penalties. These models are different from the PCRPP considered in this paper: the PCRPP imposes no requirement that all vertices be visited. Thus the above approximation algorithms do not apply to the PCRPP.

Our approximation analysis is closely connected to the prize-collecting traveling salesman problem (PCTSP). We use the standard minimization form of the PCTSP: given a complete graph with nonnegative edge lengths \(w\) satisfying the triangle inequality, a root vertex \(r\), and a nonnegative penalty \(\pi_v\) on each vertex \(v\), the goal is to find a closed walk that starts and ends at the root vertex \(r\) and minimizes the walk length plus the total penalty of the vertices not visited by the walk. This objective has a length-plus-penalty form similar to the minimization version of the PCRPP, but the two problems differ in where the penalties or profits are placed: the PCTSP assigns penalties to vertices, whereas the PCRPP assigns profits to edges.

The PCTSP has been studied mainly from the viewpoint of approximation algorithms. \citet{Balas1989} introduced the PCTSP. \citet{Bienstock1993} used threshold rounding to obtain a \(2.5\)-approximation algorithm. \citet{GoemansWilliamson1995} developed a primal-dual method and obtained a \(2\)-approximation algorithm. \citet{Goemans2009} combined threshold rounding with a refined analysis of the primal-dual method and obtained a \(1.915\)-approximation algorithm. More recently, \citet{Blauth2023} used splitting-off techniques, a tree decomposition, and parity correction to obtain a \(1.774\)-approximation algorithm. Building on this line of work, \citet{Blauth2026} combined splitting-off techniques and a tree decomposition with pruning and parity correction to obtain a better-than-\(1.6\)-approximation algorithm.

\subsection{Our contributions}
\label{sec:contributions}

Previous studies on the PCRPP and related profitable arc routing problems have mainly focused on exact and heuristic algorithms. We are not aware of prior work that systematically studies polynomial time approximation algorithms with worst-case guarantees for the PCRPP.

A natural way to design an approximation algorithm for the PCRPP is to construct a PCTSP instance from the given PCRPP instance, apply an approximation algorithm to the PCTSP instance, and then convert the resulting solution to the PCTSP instance into a solution to the PCRPP instance. We refer to this idea as the reduction-based approach. The reduced PCTSP instance is constructed as follows. We first form a subdivided graph from the original graph. For each edge \(e=uv\) with \(p_e>0\), introduce a representative vertex \(s_e\), replace \(e\) by two edges \(us_e\) and \(s_ev\), each of length \(w_e/2\), and assign penalty \(\pi_{s_e}=p_e\) to \(s_e\). Each edge \(e\) with \(p_e=0\) is kept with its original length. We then construct a complete graph on the vertex set \(R=\{r\}\cup\{s_e:p_e>0\}\). For any two distinct vertices \(a,b\in R\), the length of the edge \(ab\) is the shortest path length between \(a\) and \(b\) in the subdivided graph. These edge lengths satisfy the triangle inequality by construction. Let \(G^R\) denote this complete graph. The reduced PCTSP instance is defined on \(G^R\), with root vertex \(r\), penalty \(\pi_{s_e}=p_e\) for every representative vertex \(s_e\), and \(\pi_r=0\). Given a solution to the reduced PCTSP instance, the reduction-based approach interprets every visited representative vertex \(s_e\) as selecting the original edge \(e\), and then constructs a closed walk that traverses all selected original edges. However, this approach has an inherent factor-two barrier: even if the reduced PCTSP instance is solved exactly, it cannot guarantee an approximation ratio strictly below \(2\) for the PCRPP. Figure~\ref{fig:reduction-barrier} gives a simple instance illustrating this barrier.

\begin{figure}[htbp]
 \centering
 \includegraphics[width=0.82\textwidth]{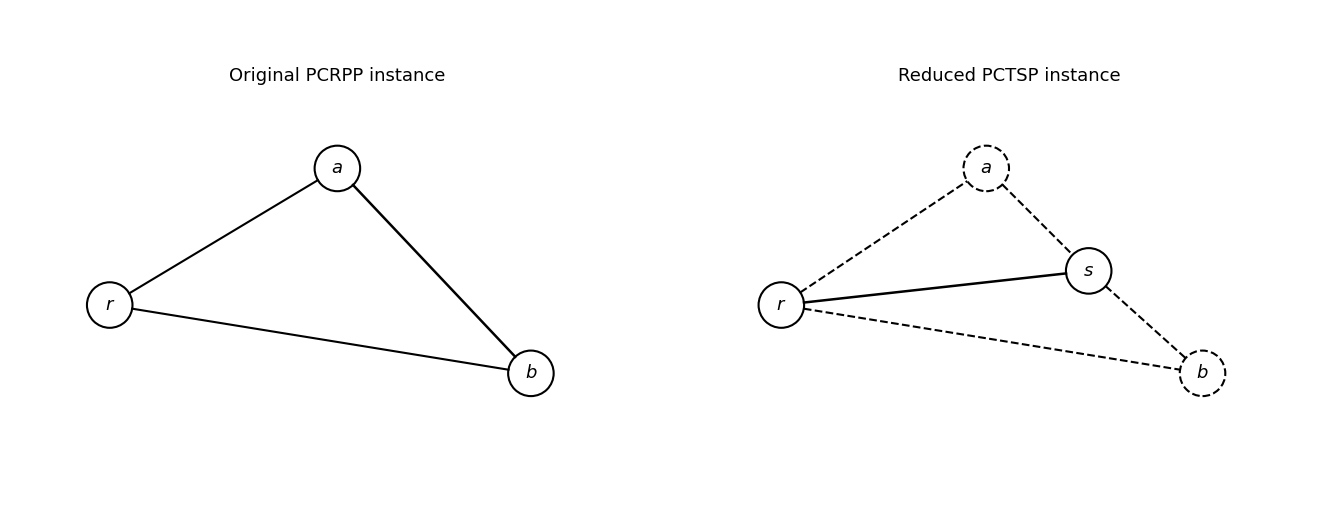}
 \caption{A factor-two barrier for the natural reduction-based approach from the PCRPP to the PCTSP. In the PCRPP instance, the edge lengths are \(w_{ra}=\varepsilon\), \(w_{ab}=1\), and \(w_{rb}=1\), where \(0<\varepsilon<1/2\). The only positive-profit edge is \(ab\), with \(p_{ab}=1+3\varepsilon\). The rooted closed walk containing only \(r\) has objective value \(1+3\varepsilon\), and it is optimal for the PCRPP. In the reduced PCTSP instance, the edge \(ab\) is represented by a vertex \(s\) with penalty \(\pi_s=1+3\varepsilon\). The length of the edge \(rs\) in the complete metric graph is \(1/2+\varepsilon\), so the walk \(r\to s\to r\) has objective value \(1+2\varepsilon\), which is smaller than the penalty \(\pi_s=1+3\varepsilon\). Hence the optimal solution to the PCTSP instance visits \(s\). Any feasible solution to the PCRPP instance that traverses \(ab\) has objective value at least \(2+\varepsilon\), achieved by \(r\to a\to b\to r\). Therefore the ratio between the objective value of any such feasible solution to the PCRPP instance and the optimal value of the PCRPP instance is at least \((2+\varepsilon)/(1+3\varepsilon)\), which tends to \(2\) as \(\varepsilon\to0\).}
 \label{fig:reduction-barrier}
\end{figure}

For comparison in the computational experiments, we also instantiate this reduction-based approach as a PCTSP-reduction algorithm in Appendix~\ref{app:pctsp-reduction}. The PCTSP-reduction algorithm is a \(2\rho\)-approximation algorithm for the PCRPP, where \(\rho\) denotes the approximation ratio of the PCTSP algorithm applied to the reduced PCTSP instance. The details of the algorithm and its approximation-ratio analysis are given in Appendix~\ref{app:pctsp-reduction}. In particular, using the better-than-\(1.6\)-approximation algorithm of \citet{Blauth2026} gives a better-than-\(3.2\) PCTSP-reduction algorithm.

Our main contribution is a polynomial time approximation algorithm that treats edge profits directly, rather than through the above edge-to-vertex reduction. Our work is highly motivated by the recent PCTSP algorithm of \citet{Blauth2026}. We first transform the input instance into a preprocessed complete graph. We show that some optimal solution to the original PCRPP instance corresponds to a canonical solution to the PCRPP instance on this preprocessed complete graph with objective value no larger than the optimal value of the original instance. We then formulate an integer programming model for these canonical solutions and obtain a linear programming relaxation. The optimal value of the resulting linear programming relaxation is used as a lower bound in the approximation analysis. We solve this linear programming relaxation and, based on an optimal solution to this linear programming relaxation, use splitting-off techniques, our edge-profit tree decomposition, and a pruning step to construct a polynomial-size set of trees. Although splitting-off is length-nonincreasing in a complete metric graph, the preprocessed complete graph used here is not a complete metric graph; a technical part of the analysis proves that the splitting-off operations used by the algorithm still do not increase the length term of the linear programming relaxation. The key point of our edge-profit tree decomposition is that it preserves the coupling between each positive-profit edge and its two endpoints, so that edge profits can be analyzed directly. For each tree in this set, the algorithm expands the relevant edges in the preprocessed complete graph back to paths in the original graph and then performs parity correction to obtain an Eulerian multigraph. An Eulerian circuit in this multigraph gives a candidate solution, and the algorithm returns the candidate solution with minimum objective value. This gives a polynomial time approximation algorithm with an approximation ratio strictly smaller than \(1.6\) for the minimization version of the PCRPP.

This guarantee matches the best known better-than-\(1.6\)-approximation guarantee for the PCTSP. This comparison is meaningful because the PCTSP admits a direct approximation-preserving reduction to the PCRPP. Given a PCTSP instance, we keep all edges in the PCTSP instance as zero-profit edges and, for every non-root vertex \(v\), add a zero-length leaf edge \(vv'\) with profit equal to the penalty of \(v\). A rooted closed walk can collect this edge profit at zero additional length exactly when it visits \(v\). Under the natural correspondence between rooted closed walks, the objective value of a walk in the constructed PCRPP instance is exactly the objective value of the corresponding walk in the original PCTSP instance.

We also conduct computational experiments on the public 118-instance benchmark set of \citet{Araoz2009}. Since the original benchmark reports values for the maximization formulation, we convert the reported optimal values to the minimization objective used in this paper. The experiments compare our algorithm with these converted optimal values and with the PCTSP-reduction algorithm in Appendix~\ref{app:pctsp-reduction}, which also has a worst-case approximation guarantee. Thus the experimental comparison focuses on algorithms with approximation guarantees.

The remainder of the paper is organized as follows. Section~\ref{sec:prelim} introduces preliminaries. Section~\ref{sec:approx_alg} presents the algorithm for the PCRPP. Section~\ref{sec:analysis} analyzes the approximation ratio of the algorithm. Section~\ref{sec:experiments} reports the computational experiments, and Section~\ref{sec:conclusion} concludes the paper.

\section{Preliminaries}
\label{sec:prelim}

\subsection{Notations}
\label{sec:notation}
Let \(G=(V,E;w,p;r)\) be a connected PCRPP instance, where \(r\in V\) is the root vertex, \(w_e\ge0\) is the length of edge \(e\), and \(p_e\ge0\) is its profit. For a subset of vertices \(S\subseteq V\), we denote by \(\delta(S)\) the set of edges with exactly one endpoint in \(S\), and by \(E(S)\) the set of edges with both endpoints in \(S\). For two disjoint vertex sets \(S_1,S_2\subseteq V\), let \(\delta(S_1,S_2)\) denote the set of edges with one endpoint in \(S_1\) and the other endpoint in \(S_2\). For a single vertex \(v\in V\), we write \(\delta(v)\) instead of \(\delta(\{v\})\). Given an edge vector \(x\in\mathbb R^E\), an edge set \(F\subseteq E\), and a vertex set \(S\subseteq V\), define \(x(F)=\sum_{e\in F}x_e\), \(w(F)=\sum_{e\in F}w_e\), and let \(V(F)\) be the set of vertices incident to at least one edge in \(F\). We use \(\chi_E^F\in\{0,1\}^E\) and \(\chi_V^S\in\{0,1\}^V\) to denote the edge-incidence vector of \(F\) and vertex-incidence vector of \(S\), respectively; that is, \((\chi_E^F)_e=1\) if and only if \(e\in F\), and \((\chi_V^S)_v=1\) if and only if \(v\in S\). The set of positive-profit edges is denoted by \(E^+=\{e\in E:p_e>0\}\), and the set of zero-profit edges is denoted by \(E^0=E\setminus E^+\). When edge multisets are used, their multiset union is denoted by \(\uplus\). For a graph or multigraph \(H\), we write \(w(H)\) for the total length of all edges in \(H\), counting an edge as many times as it appears, and \(\operatorname{odd}(H)\) for the set of vertices of odd degree in \(H\). For a vertex set \(S\subseteq V(H)\), let \(\delta_H(S)\) denote the set, or multiset, of edges of \(H\) with exactly one endpoint in \(S\), with multiplicity counted when \(H\) is a multigraph. By the handshaking lemma, \(\operatorname{odd}(H)\) has even cardinality. For a walk \(W\), we write \(w(W)\) for its length, counting each edge as many times as it is traversed. We say that an edge \(e=uv\) satisfies the triangle inequality if its length \(w_e\) is at most the total length of every \(u\)-to-\(v\) path in \(G\). If all edges in a graph satisfy the triangle inequality, we call the graph a metric graph.

\subsection{Fundamental lemmas}
\label{sec:lemmas}
We recall several standard tools used in our PCRPP algorithm and its analysis. We first recall the fractional splitting-off technique of \citet{Lovasz1976}, \citet{Mader1978}, and \citet{Frank1992}. Let \(G=(V,E)\) be a complete graph and let \(x\in\mathbb R_{\ge0}^E\) be a nonnegative edge vector. For two vertices \(s,t\in V\), the minimum \(s\)-\(t\) cut size with respect to \(x\) is $\min\{x(\delta(S)): S\subseteq V,\ s\in S,\ t\notin S\}.$ A splitting-off operation at a vertex \(v\in V\) is defined as follows. In the non-degenerate case, choose two distinct vertices \(u,w\in V\setminus\{v\}\) and a number \(0<\varepsilon\le \min\{x_{vu},x_{vw}\}\); then decrease \(x_{vu}\) and \(x_{vw}\) by \(\varepsilon\), and increase \(x_{uw}\) by \(\varepsilon\). In the degenerate case, choose a vertex \(u\in V\setminus \{v\}\) and a number \(0<\varepsilon\le x_{vu}\), and simply decrease \(x_{vu}\) by \(\varepsilon\). Such a splitting-off operation is feasible if, after the operation, the minimum \(s\)-\(t\) cut size does not decrease for any two vertices \(s,t\in V\setminus\{v\}\). A complete splitting at \(v\) is a sequence of feasible splitting-off operations after which \(x_e=0\) for every edge \(e\in\delta(v)\).

\begin{lemma}[{\citealp[Theorem~4.1]{Blauth2023}}]
\label{lem:splitting}
Let \(G=(V,E;w)\) be a complete graph, let \(x\in\mathbb R_{\ge0}^E\), and let \(v\in V\). A complete splitting at \(v\) can be computed in polynomial time. That is, one can compute a sequence of feasible splitting-off operations at \(v\) and obtain an edge vector \(x'\) such that \(x'_e=0\) for every \(e\in\delta(v)\), while the minimum \(s\)-\(t\) cut size does not decrease for any two vertices \(s,t\in V\setminus\{v\}\). Moreover, if \(G=(V,E;w)\) is a metric complete graph, then
\[
\sum_{e\in E} w_e x'_e\le \sum_{e\in E} w_e x_e .
\]
\end{lemma}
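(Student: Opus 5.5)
This statement is cited from \citet[Theorem~4.1]{Blauth2023}; we sketch how it can be proved from the classical fractional splitting-off theorem of \citet{Lovasz1976}, \citet{Mader1978} and \citet{Frank1992}. The plan is to remove the weight on \(\delta(v)\) one edge pair at a time, using a maximal feasible amount \(\varepsilon\) for each splitting-off operation. Feasibility is then certified by a cut computation.

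\textbf{Preserved quantity.} For each pair \(s,t\in V\setminus\{v\}\), let \(\lambda(s,t)\) be the minimum \(s\)-\(t\) cut value of the current vector \(x\). Only these values must be preserved.

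\textbf{Choosing \(\varepsilon\).} Fix two edges \(vu\) and \(vw\) with positive weight. Splitting off \(\varepsilon\) on this pair lowers \(x(\delta(S))\) by \(2\varepsilon\) exactly when \(S\), or its complement, contains \(u\) and \(w\) but not \(v\). For every other \(S\) the cut value is unchanged. The largest feasible \(\varepsilon\) is therefore a minimum, over pairs \(s,t\), of the expression
\[
\tfrac12\bigl(x(\delta(S))-\lambda(s,t)\bigr)
\]
taken over sets \(S\) separating \(s\) from \(t\), with \(u,w\in S\) and \(v\notin S\). This can be computed with polynomially many max-flow computations. We contract \(u,w\) together with \(s\), and force \(v\) to the other side. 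We do this for each \(s,t\), and also for each of the two orientations, which amounts to swapping the roles of \(s\) and \(t\).

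\textbf{Existence of a good pair.} If \(x(\delta(v))\) is positive, the Frank--Mader theorem guarantees some pair \(u,w\) with a positive feasible \(\varepsilon\). The only exception is the situation handled by the degenerate operation. There, decreasing a single edge \(vu\) is feasible, for instance when \(x(\delta(v))\) exceeds what the cut requirements need, as in the odd-cut obstruction.

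\textbf{Polynomial termination.} This is the main obstacle, since a naive sequence of operations might never terminate. The approach is to take \(\varepsilon\) maximal each time. Then either one of \(x_{vu}, x_{vw}\) drops to zero, or some set \(S\) becomes tight for the pair \(\{u,w\}\). Standard uncrossing of tight sets bounds how often the second outcome can occur before an edge of \(\delta(v)\) is eliminated. Since \(|\delta(v)|\le|V|-1\), this yields polynomially many steps. If that argument proves delicate, I would instead follow Frank's construction for the degenerate case. It reduces the problem to computing a Gomory--Hu tree and splitting along its structure.

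\textbf{Length bound in the metric case.} This part is immediate. A non-degenerate operation changes \(\sum_e w_e x_e\) by
\[
\varepsilon\,(w_{uw}-w_{vu}-w_{vw})\le 0,
\]
by the triangle inequality. A degenerate operation changes it by \(-\varepsilon w_{vu}\le 0\), since lengths are nonnegative. Summing over the whole sequence gives \(\sum_e w_ex'_e\le\sum_e w_ex_e\).
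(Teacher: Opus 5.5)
The paper does not prove this lemma; it imports it from Blauth and N\"agele, so your sketch has to stand on its own. Its overall structure is sound, and two parts are fine: the max-flow rule for the maximal amount (capped by $\min\{x_{vu},x_{vw}\}$, with the analogous rule for degenerate steps), and the metric bound, since each step changes the length by $\varepsilon(w_{uw}-w_{vu}-w_{vw})\le 0$ or by $-\varepsilon w_{vu}\le 0$. The two load-bearing steps, however, are asserted rather than proved.

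\emph{Existence.} Mader's theorem is an integral statement with exceptions ($d(v)=3$, or a cut edge at $v$). So invoking it for real $x$ and declaring that ``the only exception is the situation handled by the degenerate operation'' is exactly the fractional theorem you need, not something you have derived. The ``odd-cut obstruction'' is also the wrong picture. Parity obstructions vanish fractionally; degenerate steps are needed for a different reason, e.g.\ when one edge $vu$ carries more than half of $x(\delta(v))$. A standard route for rational $x$ is to scale by $2N$ so that every entry is an even integer. The resulting multigraph is Eulerian, so $d(v)$ is even and there is no cut edge, and it stays Eulerian under splitting. Mader's theorem then yields an admissible pair at $v$ until $d(v)=0$; a pair consisting of two parallel copies of $vu$ is exactly a degenerate step. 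Your max-flow amounts are rational, so this covers the algorithm, and a compactness argument extends it to real $x$. You also need this statement at every intermediate vector. That is legitimate because splitting at $v$ never increases $\lambda(s,t)$ for $s,t\neq v$, so feasibility keeps $\lambda$ fixed and greedy choices cannot reach a dead end.

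\emph{Termination.} ``Standard uncrossing'' is not an argument here, and the Gomory--Hu fallback is too vague to count. The missing idea is monotonicity. For $S\subseteq V\setminus\{v\}$, a split on $(vu,vw)$ changes $x(\delta(S))$ by $-2\varepsilon$ if $u,w\in S$ and by $0$ otherwise. A degenerate step at $u$ changes it by $-\varepsilon$ or $0$. Meanwhile the requirement $R(S)=\max\{\lambda(s,t):s\in S,\ t\notin S\cup\{v\}\}$ never changes. Hence a set that becomes tight stays tight, and edges of $\delta(v)$ never regain weight. After one maximal step on $\{u,w\}$ the pair is therefore dead for good: either $x_{vu}$ or $x_{vw}$ is $0$, or a tight $S\ni u,w$ blocks it. The same holds after a maximal degenerate step at $u$. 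One pass of maximal steps over all pairs, followed by one pass over all single neighbours, thus kills every operation, and by the existence statement this forces $x(\delta(v))=0$. That is $O(|V|^2)$ steps, each needing $O(|V|^2)$ max-flow computations, with no uncrossing required.
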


We next recall an exact tree decomposition that constructs a probability distribution over trees from a feasible solution to a linear programming relaxation for the PCTSP. We first state the PCTSP linear programming relaxation. On a metric complete graph \(G=(V,E)\), with root vertex \(r\), edge lengths \(w_e\), and vertex penalties \(\pi_v\), the PCTSP linear programming relaxation, denoted by \textsc{PCTSP-LP}, is

\begin{align}
\min \quad &
\sum_{e\in E} w_e x_e + \sum_{v\in V}\pi_v(1-y_v)
\label{eq:PCTSP-lp-obj}\\
\text{\textit{s.t.}} \quad
& x(\delta(v)) = 2y_v,
&& \forall\, v\in V\setminus\{r\}, \label{eq:PCTSP-lp-deg1}\\
& x(\delta(r)) \le 2, &&
\label{eq:PCTSP-lp-deg2}\\
& x(\delta(S)) \ge 2y_v,
&& \forall\, S\subseteq V\setminus\{r\},\ \forall\, v\in S,
\label{eq:PCTSP-lp-cut}\\
& y_r = 1, &&
\label{eq:PCTSP-lp-root}\\
& x_e \ge 0,
&& \forall\, e\in E, \label{eq:PCTSP-lp-x}\\
& 0\le y_v \le 1,
&& \forall\, v\in V\setminus\{r\}. \label{eq:PCTSP-lp-y}
\end{align}

The following exact tree decomposition was obtained by \citet{Blauth2026} for feasible solutions to PCTSP-LP. Although the PCTSP is usually defined on a metric complete graph, this decomposition is purely structural: it uses only feasibility of the linear programming relaxation and does not rely on the metric property of the edge lengths. Its proof applies complete splittings to a solution to PCTSP-LP and then undoes the resulting splitting-off operations in reverse order to construct the trees. We use the lemma only as a structural decomposition result.

\begin{lemma}[ {\citealp[Lemma~5]{Blauth2026}}]
\label{lem:treedecomp}
Let \((x,y)\) be a feasible solution to \eqref{eq:PCTSP-lp-obj}--\eqref{eq:PCTSP-lp-y}. Suppose that there is an edge \(e_0=rv_0\in E\) such that \(x_{e_0}\ge1\) and \(y_{v_0}=1\). Then one can compute in polynomial time a polynomial-size set \(\mathcal T\) of trees and a probability vector \(\lambda\in[0,1]^{\mathcal T}\) where \(\lambda_T\) is the probability assigned to \(T\), such that \(\sum_{T\in\mathcal T}\lambda_T=1\), every \(T\in\mathcal T\) contains the root vertex \(r\), and
\[
\sum_{T\in\mathcal T}\lambda_T\chi_E^{E(T)}
=
x-\chi_E^{\{e_0\}},
\qquad
\sum_{T\in\mathcal T}\lambda_T\chi_V^{V(T)}
=
y .
\]
Equivalently, if \(T\) is sampled according to \(\lambda\), then
\[
\Pr[e\in E(T)]
=
\begin{cases}
x_{e}-1, &  e=e_0,\\
x_e, & \forall e\in E\setminus\{e_0\},
\end{cases} 
\qquad
\Pr[v\in V(T)]=y_v
\quad \forall\, v\in V .
\]
\end{lemma}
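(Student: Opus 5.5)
The plan is to prove the decomposition by induction on the number of non-root vertices with positive fractional visit value, using the complete splitting of Lemma~\ref{lem:splitting} to remove one vertex at a time and then undoing the splitting-off operations in reverse order. This is the structure the paper attributes to \citet{Blauth2026}.

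\textbf{Normalization.} First I would remove from consideration every vertex \(v\neq r\) with \(y_v=0\). By \eqref{eq:PCTSP-lp-deg1} all edges at such a vertex have \(x_e=0\), so it lies in no tree and contributes nothing to either identity.

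I would also observe that \(x(\delta(r))\le 2\) together with \(x_{e_0}\ge1\) forces the following:
\begin{itemize}
\item the remaining mass at \(r\) is at most \(1\);
\item the cut constraint at \(v_0\) makes \(v_0\) "fully attached" to \(r\).
\end{itemize}

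\textbf{Base case.} The base case is \(V=\{r,v_0\}\). Here \(x_{rv_0}\in[1,2]\), and the single tree \(\{rv_0\}\), with probability \(x_{rv_0}-1\), together with the trivial tree \(\{r\}\), with probability \(2-x_{rv_0}\), gives the identities. This uses \(y_{v_0}=1\), \(y_r=1\) and the degree constraints.

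\textbf{Inductive step.} Pick a vertex \(v\notin\{r,v_0\}\) with the smallest \(y_v\), and apply a complete splitting at \(v\) via Lemma~\ref{lem:splitting}, obtaining \(x'\) on \(V\setminus\{v\}\). I would check that \((x',y|_{V\setminus\{v\}})\) is again feasible for PCTSP-LP:
\begin{itemize}
\item Non-degenerate splitting preserves degrees of other vertices.
\item Degenerate splitting drops degree. I would argue that feasibility of the operation (min cuts preserved) rules out a degree drop at a vertex \(u\) whose degree equals its min cut value. I expect this to be the delicate step, and I would use the edge \(e_0\) to handle the root degree inequality.
\item Cut constraints are preserved since min \(s\)-\(t\) cuts do not decrease.
\item Since \(r\) and \(v_0\) survive, \(x'_{e_0}\ge1\) still holds, because the split only adds to edges among the remaining vertices.
\end{itemize}
Induction then yields trees \(\mathcal T'\) with weights \(\lambda'\) realizing \(x'-\chi^{\{e_0\}}\) and \(y\).

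\textbf{Reinserting \(v\) (the main obstacle).} Process the recorded splitting operations in reverse. Each non-degenerate operation replaced \(\varepsilon\) of \(vu,vw\) by \(\varepsilon\) of \(uw\). So for trees containing edge \(uw\), I would select a sub-distribution of total mass \(\varepsilon\) and replace \(uw\) by the two edges \(uv,vw\). This keeps connectivity and acyclicity, provided \(v\) is not yet in that tree; otherwise I would replace \(uw\) by a single edge to \(v\) in a way that keeps a tree.

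The goal of this bookkeeping is that the marginals become \(x-\chi^{\{e_0\}}\) on edges and that \(v\) ends up in trees of total mass exactly \(y_v\). Undoing degenerate operations requires attaching \(v\) via a pendant edge in trees of appropriate mass.

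I expect the main difficulty to be making the vertex marginal exactly \(y_v\), which means avoiding double counting when several undone operations touch the same tree. I would first try to resolve it by the half-degree counting \(x(\delta(v))=2y_v\):
\begin{itemize}
\item the first time \(v\) enters a tree, the operation adds two edges at \(v\) and creates one new vertex occurrence;
\item subsequent additions of \(v\) into a tree already containing it must delete one edge to avoid a cycle, and this is compensated by the degenerate or pendant choices.
\end{itemize}
Splitting the tree distribution proportionally whenever a partial mass is needed keeps \(\mathcal T\) polynomial in size. This follows because the number of splitting operations is polynomial, and each operation splits at most one tree into two.
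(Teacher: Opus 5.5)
Your outline (complete splitting at a vertex of minimum $y_v$, induction, undoing the operations in reverse) matches the structure the paper attributes to \citet{Blauth2026}. However, the step that carries the lemma, reinserting $v$ into trees of total mass exactly $y_v$, is left open, and the mechanisms you name for it are either unavailable or unspecified.

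First, there are no degenerate operations to compensate with. For $u\neq r$, the cut $\delta(u)$ has value $2y_u$ and is therefore a minimum $u$-$r$ cut. The cut constraint for $S=V\setminus\{r\}$ together with $y_{v_0}=1$ forces $x(\delta(r))=2$, so $\delta(r)$ is a minimum $r$-$v_0$ cut. Lowering a single $x_{vu}$ would lower one of these cuts, so every operation is non-degenerate and all degrees are preserved. It is $y_{v_0}=1$, not $e_0$, that protects the root. The same degree argument corrects your base case: $x(\delta(v_0))=2y_{v_0}$ forces $x_{rv_0}=2$, whereas your two-tree distribution would give $\Pr[v_0\in V(T)]=x_{rv_0}-1$.

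Second, the case ``$v$ already lies in a selected tree'' cannot be avoided by choosing trees cleverly. If every undo step only subdivided trees not containing $v$, all trees would be subdivisions of $rv_0$, i.e.\ $r$-$v_0$ paths. Adding $e_0$ back would then express, for example, the Held--Karp point made of two triangles with edge values $1/2$ joined by a perfect matching of value $1$ (with $y\equiv1$ and $e_0$ a matching edge) as a convex combination of Hamiltonian cycles containing all three matching edges. That is impossible, since a cycle crosses the cut between the triangles an even number of times. So your counting $\sum_j\varepsilon_j=y_v$ only says how much $v$-mass is needed; it does not show that this mass can be placed.

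The missing idea is a re-routing step whose feasibility is exactly where the minimality of $y_v$ enters; you make this choice but never use it. When undoing $(v;u,w;\varepsilon)$ with current vector $x''$, proceed as follows.
\begin{itemize}
\item Select trees containing $uw$ of total mass $\varepsilon$. For $uw=e_0$ this needs $x''_{e_0}-1\ge\varepsilon$. That holds because $x_{e_0}$ only increases during the splitting, so $x''_{e_0}-\varepsilon\ge x_{e_0}\ge1$; this is the role of $x_{e_0}\ge1$.
\item Subdivide the selected trees not containing $v$; call their mass $\alpha$.
\item For a selected tree with $v\in V(T)$ lying on the $u$-side of $T-uw$, replace $uw$ by $vw$ alone. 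Then attach $uv$ as a pendant edge to another tree that contains $u$ but not $v$ (symmetrically on the $w$-side).
\end{itemize}
Such host trees exist. Let $\sigma$ be the $v$-mass inserted by earlier undo steps. The unselected trees containing $u$ but not $v$ have mass at least $y_u-\sigma-\alpha\ge y_v-\sigma-\alpha\ge\varepsilon-\alpha$, which is the mass of selected trees already containing $v$. This uses $y_u\ge y_v$, which holds because $v$ minimizes $y$ and $y_r=y_{v_0}=1$, and $\sigma+\varepsilon\le y_v$. The same bound for $w$, plus a two-sided Hall check, lets you choose the $u$-hosts and $w$-hosts disjointly.

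Each undo step then adds exactly $\varepsilon$ to the $v$-mass, changes the edge marginals by $\varepsilon(\chi^{uv}+\chi^{vw}-\chi^{uw})$, and splits only $O(1)$ trees. This gives both identities and polynomial size. Without this argument or an equivalent one, your proposal does not establish $\Pr[v\in V(T)]=y_v$.
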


Finally, we recall the parity-correction tool used after a tree has been constructed. Let \(G=(V,E;w)\) be a weighted graph and let \(Q\subseteq V\) have even cardinality. A \(Q\)-join is an edge multiset \(J\) over \(E\) such that a vertex \(v\) has odd degree in \((V,J)\) if and only if \(v\in Q\). For a graph or multigraph \(H\), adding an \(\operatorname{odd}(H)\)-join to \(H\), by multiset union, makes all degrees even.

The linear programming relaxation for the minimum-length \(Q\)-join, denoted by \textsc{Q-Join-LP}, is
\begin{align}
\min \quad &
\sum_{e\in E} w_e x_e
\label{eq:qjoin-lp-obj}\\
\text{\textit{s.t.}} \quad
& x(\delta(S))\ge 1,
&& \forall\, S\subseteq V \text{ with } |S\cap Q| \text{ odd},
\label{eq:qjoin-lp-cut}\\
& x_e\ge0,
&& \forall\, e\in E.
\label{eq:qjoin-lp-nonneg}
\end{align}

\begin{lemma}[{\citealp{Edmonds1973}}]
\label{lem:tjoin}
Let \(G=(V,E;w)\) be a weighted graph, and let \(Q\subseteq V\) have even cardinality. A minimum-length \(Q\)-join can be computed in polynomial time, and its length is no more than the length of any feasible fractional solution to \textup{\textsc{Q-Join-LP}}.
\end{lemma}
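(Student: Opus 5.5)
The statement to prove is Lemma~\ref{lem:tjoin}. It has two parts: a minimum-length \(Q\)-join can be computed in polynomial time, and its length is at most the value of any feasible solution to \textsc{Q-Join-LP}. The plan is to use Edmonds' polyhedral description of the dominant of the \(Q\)-join polytope together with the standard reduction of minimum \(Q\)-join to minimum-weight perfect matching.

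\emph{Algorithmic part.} First compute all-pairs shortest-path distances \(d\) in \(G\) with respect to \(w\); this is possible since \(w\ge 0\). Next form the complete graph on \(Q\) with weights \(d\) and compute a minimum-weight perfect matching \(M\), which Edmonds' blossom algorithm does in polynomial time. Replace each matched pair \(\{a,b\}\) by a shortest \(a\)-\(b\) path in \(G\) and take the multiset union \(J\). Each vertex of \(Q\) is an endpoint of exactly one path, and every other vertex occurs as an interior vertex an even number of times. Hence \(J\) is a \(Q\)-join, with \(w(J)=d(M)\).

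For optimality, take any \(Q\)-join \(J'\). Its edge multiset decomposes into \(|Q|/2\) paths pairing up the vertices of \(Q\), plus cycles. This follows by repeatedly walking from an odd vertex until another odd vertex is reached. Since \(w\ge0\), we get \(w(J')\ge d(M)\).

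\emph{LP part.} It suffices to show that every feasible \(x\) of \textsc{Q-Join-LP} satisfies \(w^\top x\ge d(M)\). The route is duality:
\begin{enumerate}
\item Write the dual: maximize \(\sum_S \mu_S\) subject to \(\sum_{S:\,e\in\delta(S)}\mu_S\le w_e\) for all \(e\), with \(\mu\ge0\), where \(S\) ranges over the \(Q\)-odd sets.
\item Construct a dual solution of value \(d(M)\) from the optimal dual of the perfect-matching LP on \((Q,d)\). Edmonds' theorem says this LP is integral and has a laminar optimal dual supported on odd subsets \(U\subseteq Q\); complementary slackness then gives dual value \(d(M)\).
\item Lift each odd set \(U\subseteq Q\) to a vertex set \(S\subseteq V\) with \(S\cap Q=U\), for instance via a ``moat'' construction based on shortest-path distances.
\end{enumerate}

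A cleaner alternative avoids lifting the dual. The polyhedron \(\{x\ge0: x(\delta(S))\ge1 \ \forall\, Q\text{-odd } S\}\) equals the dominant of the convex hull of \(Q\)-joins; this is Edmonds--Johnson, itself derived from the perfect matching polytope theorem. Given that, every feasible \(x\) dominates a convex combination of \(Q\)-joins. With \(w\ge0\) this gives \(w^\top x\ge\min_J w(J)\), which is exactly the claim.

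\emph{Main obstacle.} The hard step is this polyhedral fact, namely the integrality of the \(Q\)-cut relaxation. I would try to prove it by reduction to the perfect matching polytope. Shrink \(x\) onto the metric closure of \(Q\) as follows. Since \(x\) supports a fractional flow, and every cut separating \(Q\) oddly has value at least \(1\), a Gomory--Hu style argument shows that \(y_{ab}\) can be defined from \(x\) so that \(y\) lies in the perfect matching polytope dominant on \(Q\), with \(d^\top y\le w^\top x\). Applying Edmonds' perfect matching polytope theorem to \(y\) then finishes the proof. If this reduction proves too delicate, the fallback is to cite Edmonds--Johnson (1973) directly, since the lemma is attributed to \citet{Edmonds1973}.
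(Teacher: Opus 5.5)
Your proposal is correct and matches the paper, which gives no argument of its own beyond citing \citet{Edmonds1973}. The metric-closure plus minimum-weight perfect matching construction settles the algorithmic part, and the LP bound follows exactly as you say from the Edmonds--Johnson description of the \(Q\)-join dominant together with \(w\ge0\). One wording fix in the parity argument: a vertex need not occur as an interior vertex an even number of times; rather, each interior occurrence, including at vertices of \(Q\), contributes exactly \(2\) to its degree.

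Your attempted self-contained proof of the polyhedral fact is not a proof as written. The ``Gomory--Hu style'' construction of \(y\) with \(d^\top y\le w^\top x\) is the entire difficulty and is left unjustified. Moreover, Edmonds' perfect matching theorem cannot be applied to a point that only satisfies odd-cut constraints on the complete graph \(K_Q\). That polyhedron is strictly larger than the dominant of the perfect matching polytope: for \(|Q|=4\) it contains the incidence vector of a star \(K_{1,3}\). So it is the citation fallback that actually carries the LP part.
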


\section{Approximation algorithms}
\label{sec:approx_alg}

In this section, we present our approximation algorithm for the PCRPP. The algorithm treats edge profits directly, rather than using the PCTSP-reduction approach described in the introduction. We first preprocess the input instance into a complete graph and show that some optimal solution to the original PCRPP instance corresponds to a canonical solution to the PCRPP instance on this complete graph with objective value no larger than the optimal value of the original instance. We then formulate an integer programming model for these canonical solutions and solve its linear programming relaxation. Starting from an optimal solution to this relaxation, we use splitting-off operations and an edge-profit tree decomposition to construct a polynomial-size set of rooted trees. The key feature of this decomposition is that it preserves the coupling between each positive-profit edge and its two endpoints. The pruning step is incorporated into the construction of the candidate trees. For each candidate tree, we restore the relevant complete-graph edges to paths in the original graph, obtaining a connected edge multiset that contains the root. We then add a minimum-length \(Q\)-join to correct parities, obtain an Eulerian multigraph, extract a rooted closed walk as a candidate solution, and output the candidate solution with minimum length plus uncollected profits.

The remainder of this section formalizes the preprocessing step, the integer programming formulation for canonical solutions and its linear programming relaxation, and the best-of-many algorithm for the PCRPP.

\subsection{Preprocessing}
\label{sec:preprocessing}

Let \(G=(V,E;w,p;r)\) be the original connected PCRPP instance. We transform \(G\) into a preprocessed complete graph \(\hat G=(\hat V,\hat E;\hat w,\hat p;r)\). The purpose of the preprocessing is to separate positive-profit edges while keeping their lengths and profits unchanged, and then to complete the graph by zero-profit shortest path edges.

First, we modify the root. Let \(k\) be the number of positive-profit edges incident to \(r\). If \(k>0\), introduce \(k\) new vertices \(r_1,\ldots,r_k\). Reassign the endpoint \(r\) of each positive-profit edge incident to \(r\) to a distinct vertex \(r_i\), and add a zero-length, zero-profit edge \(rr_i\) for each \(i=1,2,\dots,k\).

Second, we modify the remaining vertices. For each vertex \(v\in V\setminus\{r\}\), let \(k\) be the number of positive-profit edges incident to \(v\) after the root modification. If \(k>1\), introduce \(k\) new vertices \(v_1,\ldots,v_k\). Reassign the endpoint \(v\) of each positive-profit edge incident to \(v\) to a distinct vertex \(v_i\), and add a zero-length, zero-profit edge \(vv_i\) for each \(i=1,2,\dots,k\).

We refer to the two steps above as the vertex-copying step. Let \(\hat V\) be the vertex set obtained after this step. We construct the complete graph on \(\hat V\) as follows. Every positive-profit edge obtained from an original positive-profit edge is kept with its length and profit. For every other pair \(u,v\in\hat V\), we add the edge \(uv\), set \(\hat p_{uv}=0\), and define \(\hat w_{uv}\) as the shortest path length between \(u\) and \(v\) after the vertex-copying step. We denote the set of positive-profit edges of \(\hat G\) by \(\hat E^+\).

The preprocessed complete graph has the following properties.
\begin{enumerate}
 \item[(P1)] The root vertex \(r\) is not incident to any positive-profit edge.
 \item[(P2)] Every vertex in \(\hat V\) is incident to at most one positive-profit edge. Hence the edges in \(\hat E^+\) are pairwise vertex-disjoint.
 \item[(P3)] Every zero-profit edge \(uv\in \hat E\setminus \hat E^+\) satisfies the triangle inequality: its length \(\hat w_{uv}\) is at most the total length of any \(u\)-to-\(v\) path in \(\hat G\).
\end{enumerate}
Properties (P1) and (P2) follow directly from the vertex-copying step. Property (P3) is immediate from the shortest path definition of the zero-profit edges.

After preprocessing, we formulate an integer programming model on the preprocessed complete graph \(\hat G=(\hat V,\hat E;\hat w,\hat p;r)\). This model describes the canonical solutions that will be used in the lower-bound argument. Here \(\hat w_e\) and \(\hat p_e\) denote the length and profit of edge \(e\in\hat E\), respectively, and \(\hat E^+=\{e\in\hat E:\hat p_e>0\}\). The model is
\begin{align}
\min \quad &
\sum_{e\in \hat{E}} \hat{w}_e x_e
+
\sum_{e\in \hat{E}} \hat{p}_e (1-x_e)
\label{eq:pcrpp-ip-obj}\\
\text{\textit{s.t.}} \quad
& x(\delta(v)) = 2y_v,
&& \forall\, v\in \hat{V}\setminus\{r\},
\label{eq:pcrpp-ip-deg1}\\
& x(\delta(r)) \le 2,
&&
\label{eq:pcrpp-ip-deg2}\\
& x(\delta(S)) \ge 2y_v,
&& \forall\, S\subseteq \hat{V}\setminus\{r\},\ \forall\, v\in S,
\label{eq:pcrpp-ip-cut}\\
& y_u = y_v = x_{uv},
&& \forall\, uv\in \hat{E}^+,
\label{eq:pcrpp-ip-couple}\\
& y_r = 1,
&&
\label{eq:pcrpp-ip-root}\\
& x_e \in \{0,1\},
&& \forall\, e\in \hat{E},
\label{eq:pcrpp-ip-x}\\
& y_v \in \{0,1\},
&& \forall\, v\in \hat{V}\setminus\{r\}.
\label{eq:pcrpp-ip-y}
\end{align}
We denote this integer programming model by \textsc{PCRPP-IP}. The model is not intended to describe every feasible solution to the PCRPP instance on \(\hat G\). Rather, it describes a restricted class of canonical solutions that is sufficient for the lower-bound proof: as shown in Theorem~\ref{thm:lp_bound}, some optimal solution to the original PCRPP instance corresponds to a feasible solution to \textsc{PCRPP-IP} with objective value no larger than the optimal value of the original instance.

The linear programming relaxation of \textsc{PCRPP-IP}, denoted by \textsc{PCRPP-LP}, is obtained by replacing \eqref{eq:pcrpp-ip-x} and \eqref{eq:pcrpp-ip-y} with
\[
0\le x_e\le 1 \quad \forall\, e\in\hat E^+,\qquad
x_e\ge0 \quad \forall\, e\in\hat E\setminus\hat E^+,\qquad
0\le y_v\le1 \quad \forall\, v\in\hat V\setminus\{r\}.
\]
The upper bound on \(x_e\) is imposed only for positive-profit edges; for zero-profit edges, nonnegativity is sufficient in the relaxation. Although \textsc{PCRPP-LP} contains exponentially many cut constraints, it is solvable in polynomial time by the ellipsoid method \citep{Khachiyan1979,Grotschel1981}. Indeed, all constraints except \eqref{eq:pcrpp-ip-cut} can be checked directly, and the cut constraints can be separated by a polynomial number of maximum-flow or minimum-cut computations.

We next state two theorems about the linear programming relaxation \textsc{PCRPP-LP}; both theorems are proved in Section~\ref{sec:analysis}. Theorem~\ref{thm:lp_bound} states that the optimal objective value of \textsc{PCRPP-LP} on the preprocessed complete graph gives a valid lower bound for the optimal objective value of the original PCRPP instance.

\begin{theorem}
\label{thm:lp_bound}
Let \(\mathrm{OPT}\) be the optimal objective value of the PCRPP on the original graph \(G\), and let \(\mathrm{OPT}_{\mathrm{LP}}\) be the optimal objective value of \textup{\textsc{PCRPP-LP}} on the preprocessed complete graph \(\hat G\). Then
\[
\mathrm{OPT}_{\mathrm{LP}}\le \mathrm{OPT}.
\]
\end{theorem}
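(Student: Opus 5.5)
The plan is to take an optimal closed walk \(W^*\) in \(G\) and turn it into a feasible integral solution \((x,y)\) of \textsc{PCRPP-IP} on \(\hat G\) whose objective value is at most that of \(W^*\). Since \textsc{PCRPP-LP} relaxes \textsc{PCRPP-IP}, this gives \(\mathrm{OPT}_{\mathrm{LP}}\le \mathrm{OPT}\). The integral solution will be a single simple cycle through \(r\) that uses each collected positive-profit edge of \(\hat E^+\) exactly once, with the pieces in between replaced by zero-profit shortcut edges. The cut, degree and coupling constraints are then automatic.

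\textbf{Step 1: lift \(W^*\) to the vertex-copied graph.} Let \(G'\) be the graph after the vertex-copying step, before completion. Each traversal in \(W^*\) of a positive-profit edge \(e=uv\) enters \(u\) and leaves \(v\) (or the reverse). In \(G'\) we route it as \(u\to u_i\to v_j\to v\), using the zero-length edges \(uu_i\) and \(vv_j\); if no copy was made at an endpoint, that endpoint is used directly. Zero-profit edges are traversed unchanged. This gives a closed walk \(W'\) in \(G'\) from \(r\) with the same length and the same set of collected positive-profit edges, now viewed as edges of \(\hat E^+\). Consequently, for any two vertices \(a,b\) on \(W'\), the length of the portion of \(W'\) between them is at least the shortest \(a\)--\(b\) path length in \(G'\). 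By construction this shortest path length equals \(\hat w_{ab}\) whenever \(ab\notin\hat E^+\).

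\textbf{Step 2: shortcut to a cycle.} Let \(e_1=a_1b_1,\dots,e_k=a_kb_k\) be the distinct positive-profit edges traversed by \(W'\). They are listed in the order of their \emph{first} traversal, and each is oriented in the direction of that first traversal. Consider the cyclic sequence \(r,a_1,b_1,a_2,b_2,\dots,a_k,b_k,r\).
\begin{itemize}
\item By (P1) and (P2), all of these vertices are distinct.
\item For the same reason, each consecutive pair other than \((a_i,b_i)\) is not a pair in \(\hat E^+\). So its connecting edge in \(\hat G\) is a zero-profit shortest-path edge.
\end{itemize}
We set \(x_e=1\) on the edges \(e_i\) and on the connecting edges of this cycle, and \(x_e=0\) elsewhere. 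We set \(y_v=1\) exactly for the vertices in the sequence. The degenerate cases are handled as follows.
\begin{itemize}
\item If \(k=0\), take \(x=0\) and \(y=\chi^{\{r\}}\). The objective is then \(\sum_e p_e\), which is at most the cost of any walk collecting nothing, and in particular at most the cost of \(W^*\).
\item If \(k=1\), the cycle \(r,a_1,b_1,r\) uses three distinct edges, so it is still valid.
\end{itemize}
Feasibility is then checked directly. Degrees are \(2\) on the cycle and \(0\) off it, and \(x(\delta(r))\le 2\). Every set \(S\subseteq\hat V\setminus\{r\}\) containing a cycle vertex is crossed at least twice. The coupling constraint \(y_a=y_b=x_{ab}\) holds both for collected edges and for uncollected ones, whose endpoints lie off the cycle.

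\textbf{Step 3: compare costs.} The uncollected-profit term equals exactly the uncollected profit of \(W^*\), because original and preprocessed positive edges correspond bijectively with equal profits. For the length term, the first traversals of \(e_1,\dots,e_k\) split \(W'\) into these \(k\) traversals and \(k+1\) intermediate subwalks. The subwalks run from \(r\) to \(a_1\), from \(b_i\) to \(a_{i+1}\), and from \(b_k\) to \(r\). By Step 1 and (P3), each shortcut edge is no longer than the corresponding subwalk. Later repeated traversals of some \(e_j\) lie inside these subwalks and are simply absorbed, which is allowed since shortest paths in \(G'\) may use positive edges. Summing, \(\sum_e\hat w_ex_e\le w(W')=w(W^*)\).

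The main obstacle is exactly this bookkeeping in Step 3. One must ensure that the shortcut lengths \(\hat w\) are shortest-path distances in \(G'\), including the positive edges and the zero-length copy edges, so that subwalks containing repeated traversals are still dominated. One must also ensure that no shortcut pair coincides with a positive-profit pair, whose \(\hat G\)-edge carries the original length rather than a distance. Properties (P1)--(P3) are what make both points go through.
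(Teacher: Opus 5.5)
Your proposal is correct and follows essentially the same route as the paper. Like the paper, you order the collected positive-profit edges by first traversal, orient them accordingly, and form the simple cycle $r\to a_1\to b_1\to\cdots\to a_k\to b_k\to r$ in $\hat G$; you then check \textsc{PCRPP-IP} feasibility via (P1)--(P2) and bound each connecting edge by its intermediate subwalk via (P3). Your explicit lift of $W^*$ to the vertex-copied graph is a slightly more careful justification of the step the paper covers with a single appeal to (P3).
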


Theorem~\ref{thm:pcrpp_treedecomp} gives the edge-profit tree decomposition used by our algorithm. Starting from any feasible solution to \textsc{PCRPP-LP}, it constructs a probability distribution over rooted trees in the preprocessed complete graph. The edge-profit tree decomposition controls the expected length of the tree, preserves the probability with which each positive-profit edge appears in the tree, and, crucially, every tree in the support preserves the coupling between each positive-profit edge and its two endpoints.

\begin{theorem}
\label{thm:pcrpp_treedecomp}
Given a feasible solution \((x,y)\) to \textup{\textsc{PCRPP-LP}} on \(\hat G\), one can compute in polynomial time a polynomial-size set \(\mathcal T\) of trees in \(\hat G\) and a probability vector \(\lambda\in[0,1]^{\mathcal T}\) where \(\lambda_T\) is the probability assigned to \(T\), such that \(\sum_{T\in\mathcal T}\lambda_T=1\) and every \(T\in\mathcal T\) contains the root vertex \(r\). If a tree \(T\) is sampled according to \(\lambda\), then
\[
\mathbb E\left[\hat w(T)\right]
\le
\sum_{e\in\hat E}\hat w_e x_e.
\]
Moreover, for every vertex \(v\in\hat V\setminus\{r\}\), and for every positive-profit edge \(e\in\hat E^+\),
\[
\Pr[v\in V(T)]=y_v,\qquad \Pr[e\in E(T)]=x_e.
\]
Furthermore, every tree \(T\in\mathcal T\) with \(\lambda_T>0\) satisfies the following coupling property: for every positive-profit edge \(e=uv\in\hat E^+\),
\[
e\in E(T)
\quad\Longleftrightarrow\quad
u\in V(T)
\quad\Longleftrightarrow\quad
v\in V(T).
\]
\end{theorem}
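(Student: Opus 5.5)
The plan is to reduce to Lemma~\ref{lem:treedecomp} on a modified instance, using splitting-off to remove the fractional positive-profit structure. Three features must be arranged: the edge $e_0=rv_0$ with $x_{e_0}\ge 1$ and $y_{v_0}=1$, the length bound, and the coupling property.

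\textbf{Step 1: contract positive-profit edges.} For every $e=uv\in\hat E^+$ we have $x_e=y_u=y_v$ and $x(\delta(u))=2y_u$. So $u$ sends exactly $y_u$ units of $x$ outside $e$, and likewise for $v$. Contract each such edge into a single vertex $s_e$ with penalty $\hat p_e$ and $y_{s_e}=x_e$. Edges of $\hat E^+$ are vertex-disjoint by (P2) and avoid $r$ by (P1), so this is well defined. In the contracted graph the degree of $s_e$ is $2x_e=2y_{s_e}$, and every cut constraint is inherited. Parallel edges arising from contraction can be kept, since Lemma~\ref{lem:treedecomp} is purely structural; alternatively, keep track of which endpoint each edge attaches to. Record that the endpoint information is needed later to expand $s_e$ back to the edge $uv$ plus its incident tree edges.

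\textbf{Step 2: produce the root edge.} Add an auxiliary vertex $v_0$ with $y_{v_0}=1$ and two parallel copies of $rv_0$, each of value $1$. Equivalently, add a zero-length edge $e_0=rv_0$ with $x_{e_0}=2$. This makes $x(\delta(r))$ too large, so we first apply complete splitting at $r$ (Lemma~\ref{lem:splitting}) to the original part. That is, we use the standard trick from \citet{Blauth2026}: split off $r$ entirely and reattach. More simply, we may replace $r$'s role by requiring only feasibility of \textsc{PCTSP-LP} with $x(\delta(r))\le2$. The cleaner route is this: apply Lemma~\ref{lem:splitting} at $r$ to obtain $x'$ with $x'(\delta(r))=0$ and all cuts preserved. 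Then add $e_0=rv_0$ with value $2$ and connect $v_0$ to nothing else. Cut constraints for sets avoiding $r$ follow from the preserved $s$-$t$ connectivities, since each $v$ had connectivity at least $2y_v$ to $r$, and hence to everything $r$ was linked with. Routing this connectivity through $v_0$ in place of $r$ is the delicate bookkeeping. Lemma~\ref{lem:treedecomp} then gives trees $T$ with $\Pr[e\in T]=x'_e$ and $\Pr[v\in T]=y_v$, and we delete $v_0$ and $e_0$ (the edge is in the tree with probability $1$).

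\textbf{Step 3: uncontract, and the length bound (main obstacle).} In each tree, replace $s_e$ by the edge $uv$. Each tree edge incident to $s_e$ is attached to $u$ or to $v$ according to the recorded endpoint. This yields a tree, and the coupling property holds because $s_e\in T$ iff $uv\in T$ iff $u\in T$ iff $v\in T$. The probabilities are also correct: $\Pr[e\in E(T)]=y_{s_e}=x_e$. The hard part is $\mathbb E[\hat w(T)]\le\sum\hat w_ex_e$. The issue is that splitting-off creates new edges $ab$ of length $\hat w_{ab}$, where $a$ or $b$ may be endpoints of profit edges, and $\hat G$ is not metric. I would prove that each splitting step at $r$ (or at any vertex incident only to zero-profit edges) pairs two zero-profit edges $ra,rb$. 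Then (P3) gives $\hat w_{ab}\le\hat w_{ra}+\hat w_{rb}$, because $ab$ is zero-profit: the only profit edges are the fixed ones between contracted endpoints, and splitting at $r$ never creates an edge between $u$ and $v$ for $uv\in\hat E^+$, since that would require edges $ru,rv$ to a contracted vertex, which are excluded. If the contraction forces identification, I would instead split at $r$ before contracting. The total length is then nonincreasing, and the resulting tree expectation equals $\sum \hat w_e x'_e\le\sum\hat w_ex_e$. Polynomiality follows from the two cited lemmas.
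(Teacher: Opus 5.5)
Your Step~2 fails, and it is where the key idea of the proof lies. After a complete splitting at $r$ you have $x'(\delta(r))=0$, and you then attach $v_0$ to $r$ only. Splitting-off at $r$ preserves connectivities between pairs of vertices \emph{other than} $r$, but it removes all connectivity to $r$ itself. Hence the rooted cut constraints~\eqref{eq:PCTSP-lp-cut} fail. Take $S$ to be the set of all vertices except $r$ and $v_0$: then $x'(\delta(S))=0<2y_v$ for every $v\in S$ with $y_v>0$. There is no connectivity left to ``route through $v_0$'', so Lemma~\ref{lem:treedecomp} cannot be invoked. This is not bookkeeping. The real difficulty is to create an edge $e_0$ at the root with value at least $1$ while keeping $x(\delta(r))\le 2$ and all rooted cuts. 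Moving all root edges to a new vertex $v_0$ does not work either. The degree constraint at $v_0$ then forces $x_{rv_0}=2-x(\delta(r))$, which can be below $1$.

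The paper's device is to add a copy $r'$ of the root with $\bar y_{r'}=1$. Every root edge is split in half, $\bar x_{rv}=\bar x_{r'v}=x_{rv}/2$, with $r'v$ as long as $rv$. The zero-length edge $e_0=rr'$ gets $\bar x_{e_0}=2-x(\delta(r))/2\ge 1$. Then:
\begin{itemize}
\item $r$ and $r'$ both have degree exactly $2$;
\item every $S\not\ni r$ containing $r'$ has cut value at least $\bar x_{e_0}+x(\delta(r))/2=2$;
\item every $S$ avoiding $r$ and $r'$ has the same cut value as under $x$.
\end{itemize}
After Lemma~\ref{lem:treedecomp}, one merges $r'$ into $r$. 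If this creates a cycle, one deletes an edge of it incident to $r$; that edge is zero-profit by (P1).

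With this construction, the obstacle you identify in Step~3 disappears. Lemma~\ref{lem:treedecomp} gives exact edge marginals, so the expected length of the auxiliary tree equals $\sum_{e\in\hat E}\hat w_ex_e$. The two halves of each root edge together contribute $\hat w_{rv}x_{rv}$, and $e_0$ has length zero. Merging $r'$ into $r$ only replaces $r'v$ by the equally long $rv$ or deletes edges. No splitting at $r$ and no metric argument are needed.

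Your contraction in Step~1 is sound:
\begin{itemize}
\item the contracted vertex has degree $x(\delta(u))+x(\delta(v))-2x_e=2x_e$;
\item cut constraints lift because $r\notin\{u,v\}$;
\item re-expanding $s_e$ into $uv$, attaching each merged edge at its cheaper endpoint, yields a tree and does not increase length.
\end{itemize}
So Step~1 combined with the root-copy construction would be a valid alternative route, giving the coupling deterministically for every tree. It is unnecessary, though. The paper gets the coupling from the marginals alone: $\Pr[e\in E(T)]=x_e=y_u=\Pr[u\in V(T)]$ and $\{e\in E(T)\}\subseteq\{u\in V(T)\}$. Hence the two events coincide on every tree with $\lambda_T>0$, and likewise for $v$.
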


\subsection{The best-of-many algorithm}
\label{sec:algorithm}

Before presenting the algorithm, we define the edge-profit core used in the pruning step. Let \(T\) be a tree in \(\hat G\) that contains the root vertex \(r\), let \(x\in\mathbb R_{\ge0}^{\hat E}\) be an edge vector, and let \(\gamma\in[0,1]\). The edge-profit core of \(T\) with respect to \(x\) and \(\gamma\), denoted by \(\operatorname{core}_{x}(T,\gamma)\), is the minimal subtree of \(T\) that contains \(r\) and contains every positive-profit edge \(e\in E(T)\cap\hat E^+\) with \(x_e\ge\gamma\). Equivalently, if \(T\) contains at least one positive-profit edge \(e\) with \(x_e\ge\gamma\), then \(\operatorname{core}_{x}(T,\gamma)\) is the union of the unique paths in \(T\) from \(r\) to the endpoints of all positive-profit edges \(e\in E(T)\cap\hat E^+\) with \(x_e\ge\gamma\). If no such edge exists, we set \(\operatorname{core}_{x}(T,\gamma)=(\{r\},\emptyset)\). The edge-profit core can be constructed by the following pruning procedure: repeatedly delete a leaf \(v\neq r\) and its incident edge whenever \(v\) is not incident to a positive-profit edge \(e\in E(T)\cap\hat E^+\) with \(x_e\ge\gamma\). The procedure stops when every leaf distinct from \(r\) is incident to such a positive-profit edge \(e\in E(T)\cap\hat E^+\) with \(x_e\ge\gamma\). Figure~\ref{fig:coretree} illustrates this edge-profit core.

\begin{figure}[htbp]\small
 \centering
 \includegraphics[width=0.7\textwidth]{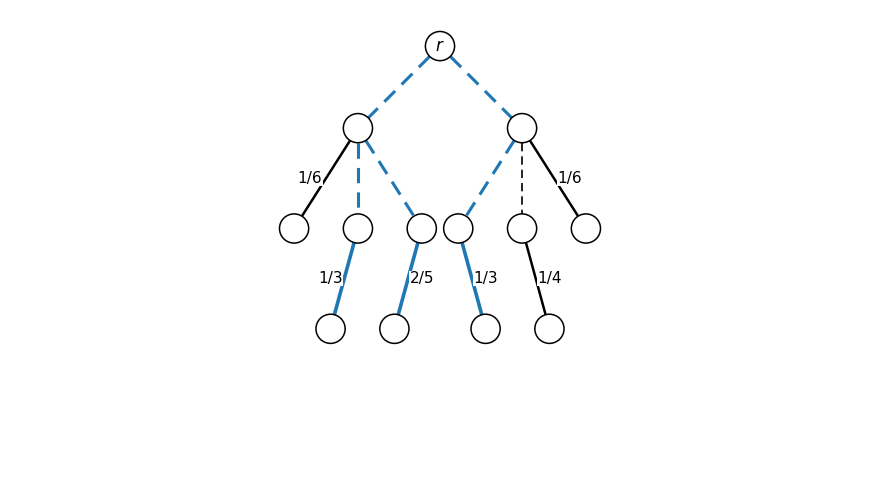}
 \caption{An illustration of the edge-profit core. Solid edges are positive-profit edges, dashed edges are zero-profit edges, and blue edges are contained in \(\operatorname{core}_{x}(T,\gamma)\) for \(\gamma=1/3\). The numbers on positive-profit edges indicate the corresponding values of \(x_e\).}
 \label{fig:coretree}
\end{figure}

With the edge-profit core defined, we now describe the best-of-many algorithm. For each threshold \(\delta\), the algorithm applies complete splitting-off operations, using Lemma~\ref{lem:splitting} as the splitting tool. Lemma~\ref{lem:low_y_splitting_pcrpp} guarantees that the resulting pair \((\tilde x,\tilde y)\) is feasible for \textsc{PCRPP-LP}. Hence the edge-profit tree decomposition of Theorem~\ref{thm:pcrpp_treedecomp} can be applied to \((\tilde x,\tilde y)\). For each tree produced by this decomposition and each relevant threshold \(\gamma\), the algorithm forms the corresponding edge-profit core, restores it to the original graph, adds a minimum-length \(Q\)-join to correct parities, and evaluates the resulting rooted closed walk as a candidate solution. The algorithm returns the best candidate solution. We summarize the procedure in Algorithm~\ref{alg:best_of_many}.

\begin{algorithm}[htbp]\small
\caption{Best-of-Many Algorithm for the PCRPP}
\label{alg:best_of_many}
\begin{algorithmic}[1]
\State \textbf{Input:} A connected PCRPP instance \(G=(V,E;w,p;r)\).
\State Apply the preprocessing of Section~\ref{sec:preprocessing} to obtain the preprocessed complete graph \(\hat G=(\hat V,\hat E;\hat w,\hat p;r)\) and the set of positive-profit edges \(\hat E^+\).
\State Solve \textsc{PCRPP-LP} on \(\hat G\), and let \((x^*,y^*)\) be an optimal solution.
\State Initialize \(W_A\) as the trivial rooted closed walk that stays at \(r\), and set \(\mathrm{ALG}\) to its objective value.
\For{each outer threshold \(\delta\in\{y_v^*:v\in\hat V\setminus\{r\}\}\)}
 \State Set \((\tilde x,\tilde y)\leftarrow(x^*,y^*)\).
 \For{each \(v\in\hat V\setminus\{r\}\) with \(0<\tilde y_v<\delta\)}
 \State Apply a complete splitting at \(v\), using Lemma~\ref{lem:splitting}, update the current edge vector \(\tilde x\), and set \(\tilde y_v\leftarrow0\).
 \EndFor
 \State By Lemma~\ref{lem:low_y_splitting_pcrpp}, the resulting pair \((\tilde x,\tilde y)\) is feasible for \textsc{PCRPP-LP}. Apply Theorem~\ref{thm:pcrpp_treedecomp} to \((\tilde x,\tilde y)\), obtaining a polynomial-size set \(\mathcal T\) of trees.
 \For{each \(T\in\mathcal T\)}
 \For{each inner threshold \(\gamma\in \{\tilde x_e:e\in E(T)\cap\hat E^+\}\)}
 \State Construct the edge-profit core \(\operatorname{core}_{\tilde x}(T,\gamma)\) in \(\hat G\).
 \State Restore \(\operatorname{core}_{\tilde x}(T,\gamma)\) to the original graph by replacing each zero-profit edge by the shortest path defining its length, and then merging the copied vertices back to their original vertices. Let \(H_{T,\gamma}\) be the resulting connected multigraph in \(G\).
 \State Let \(\operatorname{odd}(H_{T,\gamma})\) denote the set of vertices of odd degree in \(H_{T,\gamma}\), compute a minimum-length \(\operatorname{odd}(H_{T,\gamma})\)-join \(J_{T,\gamma}\) in the original graph \(G\), using Lemma~\ref{lem:tjoin}.
 \State Form the Eulerian multigraph \(M_{T,\gamma}=H_{T,\gamma}\uplus J_{T,\gamma}\).
 \State Extract an Eulerian tour of \(M_{T,\gamma}\), viewed as a closed walk \(W_{T,\gamma}\) rooted at \(r\) in \(G\).
 \State Let \(A_{T,\gamma}\) be the length of \(W_{T,\gamma}\) plus the total profit of the edges not traversed by \(W_{T,\gamma}\).
 \If{\(A_{T,\gamma}<\mathrm{ALG}\)}
 \State \(\mathrm{ALG}\leftarrow A_{T,\gamma}\).
 \State \(W_A\leftarrow W_{T,\gamma}\).
 \EndIf
 \EndFor
 \EndFor
\EndFor
\State \textbf{Output:} \(W_A\) and its objective value \(\mathrm{ALG}\).
\end{algorithmic}
\end{algorithm}

The algorithm always outputs a feasible solution to the PCRPP instance and runs in polynomial time. The preprocessing step runs in polynomial time: it creates at most two copied vertices for each edge of the original graph, computes the required shortest path lengths, and constructs the preprocessed complete graph \(\hat G\). Moreover, \(|\hat V|\le |V|+2|E|\) and \(|\hat E|=O(|\hat V|^2)=O((|V|+|E|)^2)\), so the preprocessed instance has polynomial size. The linear programming relaxation \(\textsc{PCRPP-LP}\) can be solved in polynomial time. For each outer threshold \(\delta\in\{y_v^*:v\in\hat V\setminus\{r\}\}\), the complete splittings can be computed in polynomial time. The number of outer thresholds is at most \(|\hat V|\). By Theorem~\ref{thm:pcrpp_treedecomp}, the set \(\mathcal T\) has polynomial size and can be computed in polynomial time. For each tree \(T\), the number of inner thresholds \(\gamma\in\{\tilde x_e:e\in E(T)\cap\hat E^+\}\) is at most \(|\hat E^+|\le |E|\). For every enumerated pair \((T,\gamma)\), constructing the edge-profit core, restoring it to the original graph, computing a minimum-length \(Q\)-join, and evaluating the resulting rooted closed walk can all be done in polynomial time. Therefore the algorithm generates and evaluates only polynomially many candidate solutions, each in polynomial time.

\section{Theoretical Analysis}
\label{sec:analysis}

In this section, we prove the statements used in the analysis of the approximation algorithm. We first prove Theorem~\ref{thm:lp_bound}, which shows that the optimal value of the linear programming relaxation \textsc{PCRPP-LP} on the preprocessed complete graph \(\hat G\) gives a valid lower bound for the optimal value of the original PCRPP instance.

\begin{proof}[Proof of Theorem~\ref{thm:lp_bound}]
Let \(W^*\) be an optimal rooted closed walk in the original graph \(G\), and let \(E^+(W^*)\subseteq E^+\) be the set of positive-profit edges traversed by \(W^*\). If \(E^+(W^*)=\emptyset\), then the trivial canonical solution to the PCRPP instance on the preprocessed complete graph, with \(x=0\), \(y_r=1\), and \(y_v=0\) for all \(v\in\hat V\setminus\{r\}\), is feasible for \textsc{PCRPP-IP} and has objective value \(\sum_{e\in E^+}p_e\le \mathrm{OPT}\). Hence it is also a feasible solution to \textsc{PCRPP-LP} with value at most \(\mathrm{OPT}\).

Assume now that \(E^+(W^*)\neq\emptyset\). Let \(\hat E^+(W^*)\subseteq\hat E^+\) be the set of positive-profit edges in \(\hat G\) corresponding to the edges in \(E^+(W^*)\). These edges have the same lengths and profits as their corresponding original edges, are pairwise vertex-disjoint by Property~(P2), and are not incident to \(r\) by Property~(P1).

Traverse \(W^*\) from \(r\), and order the edges in \(E^+(W^*)\) by the time at which they are first traversed. For the \(i\)-th edge in this order, where \(i=1,\ldots,|E^+(W^*)|\), let \(u_i^*v_i^*\in\hat E^+(W^*)\) be the corresponding positive-profit edge in \(\hat G\). The notation is chosen so that \(u_i^*\) corresponds to the endpoint of the original edge first visited by \(W^*\) when this edge is first traversed, and \(v_i^*\) corresponds to the endpoint visited immediately afterwards. Define the canonical solution \(\hat W^*\) in \(\hat G\) by
\[
\hat W^* = r\to u_1^*\to v_1^*\to u_2^*\to v_2^*
\to \cdots \to
u_{|E^+(W^*)|}^*\to v_{|E^+(W^*)|}^*\to r .
\]
In this sequence, each edge \(u_i^*v_i^*\) is a positive-profit edge in \(\hat E^+(W^*)\), and all other edges are zero-profit edges of \(\hat G\). Since the vertices \(u_i^*,v_i^*\) are all distinct and none of them is \(r\), \(\hat W^*\) is a simple cycle containing \(r\).

Let \(\hat x\) be the edge-incidence vector of \(\hat W^*\), and let \(\hat y_v=1\) exactly for the vertices appearing in \(\hat W^*\). Then \((\hat x,\hat y)\) is feasible for \textsc{PCRPP-IP}. The constraints \eqref{eq:pcrpp-ip-deg1}--\eqref{eq:pcrpp-ip-cut} and \eqref{eq:pcrpp-ip-root}--\eqref{eq:pcrpp-ip-y} follow from the fact that \(\hat W^*\) is a simple cycle containing \(r\). For the coupling constraints \eqref{eq:pcrpp-ip-couple}, let \(e=uv\in\hat E^+\). If \(e\in\hat E^+(W^*)\), then \(e=u_i^*v_i^*\) for some \(i\), and hence \(\hat x_e=\hat y_u=\hat y_v=1\). If \(e\notin\hat E^+(W^*)\), then neither endpoint of \(e\) appears in \(\hat W^*\), by Properties~(P1) and (P2), and therefore \(\hat x_e=\hat y_u=\hat y_v=0\). Thus \((\hat x,\hat y)\) is also feasible for \textsc{PCRPP-LP}.

It remains to compare objective values. The positive-profit edges of \(\hat W^*\) have the same total length as the corresponding edges in \(E^+(W^*)\). By Property~(P3), the total length of the zero-profit edges that connect the positive-profit edges in \(\hat W^*\) is at most the total length of the corresponding paths in \(W^*\) that connect the edges in \(E^+(W^*)\). Therefore the length of \(\hat W^*\) is at most the length of \(W^*\). Moreover, \(\hat W^*\) collects exactly the positive-profit edges in \(\hat E^+(W^*)\), so its uncollected-profit term is exactly the total profit of the positive-profit edges not traversed by \(W^*\). Hence the objective value of \((\hat x,\hat y)\) is at most the objective value of \(W^*\), which is \(\mathrm{OPT}\). Consequently, \(\mathrm{OPT}_{\mathrm{LP}}\le \mathrm{OPT}\).
\end{proof}

We now prove Theorem~\ref{thm:pcrpp_treedecomp} about the edge-profit tree decomposition.

\begin{proof}[Proof of Theorem~\ref{thm:pcrpp_treedecomp}]
Let \((x,y)\) be a feasible solution to \textsc{PCRPP-LP} on \(\hat G=(\hat V,\hat E;\hat w,\hat p;r)\). Construct an auxiliary complete graph \(G^A=(V^A,E^A)\) by adding a copy \(r'\) of the root vertex \(r\), and set \(V^A=\hat V\cup\{r'\}\). The root remains \(r\). The auxiliary edge \(e_0=rr'\) has length zero; for every \(v\in\hat V\setminus\{r\}\), the edge \(r'v\) has the same length as \(rv\); and all edges of \(\hat G\) keep their original lengths.

Define an auxiliary edge vector \(\bar x\) and vertex vector \(\bar y\) on \(G^A\) as follows. Set \(\bar y_v=y_v\) for \(v\in\hat V\) and \(\bar y_{r'}=1\). For the edge variables, set \(\bar x_{uv}=x_{uv}\) for all \(u,v\in\hat V\setminus\{r\}\), \(\bar x_{rv}=\bar x_{r'v}=x_{rv}/2\) for all \(v\in\hat V\setminus\{r\}\), and \(\bar x_{e_0}=2-x(\delta(r))/2\). Since \(x(\delta(r))\le2\), we have \(\bar x_{e_0}\ge1\).

We verify that \((\bar x,\bar y)\) satisfies \eqref{eq:PCTSP-lp-deg1}--\eqref{eq:PCTSP-lp-y} on \(G^A\). For every \(v\in\hat V\setminus\{r\}\), the definition of \(\bar x\) gives \(\bar x(\delta(v))=x(\delta(v))=2y_v=2\bar y_v\). Also, \(\bar x(\delta(r'))=\bar x_{e_0}+\sum_{v\in\hat V\setminus\{r\}}\bar x_{r'v} =2=2\bar y_{r'}\). Hence the degree constraints \eqref{eq:PCTSP-lp-deg1} hold for all vertices in \(V^A\setminus\{r\}\). The root satisfies \(\bar x(\delta(r))=2\), so \eqref{eq:PCTSP-lp-deg2} holds. The constraint \eqref{eq:PCTSP-lp-root} holds because \(\bar y_r=y_r=1\), and the nonnegativity constraints \eqref{eq:PCTSP-lp-x} and \eqref{eq:PCTSP-lp-y} follow from the definition of \((\bar x,\bar y)\) and the feasibility of \((x,y)\) for \textsc{PCRPP-LP}. We next verify the cut constraints \eqref{eq:PCTSP-lp-cut}. Fix \(v\in V^A\setminus\{r\}\) and \(S\subseteq V^A\setminus\{r\}\) with \(v\in S\). First suppose that \(r'\in S\). By the definition of \(\bar x\),
\[
\begin{aligned}
\bar x(\delta(S))
&\ge
\bar x_{e_0}
+\bar x\bigl(\delta(\{r\},S\setminus\{r'\})\bigr)
+\bar x\bigl(\delta(V^A\setminus(S\cup\{r\}), \{r'\})\bigr) \\
&=\bar x_{e_0}
+\bar x\bigl(\delta(\{r\},S\setminus\{r'\})\bigr)
+\bar x\bigl(\delta(\hat V\setminus(S\cup\{r\}), \{r'\})\bigr)
\\
&=
\bar x_{e_0}
+\frac12 x\bigl(\delta(\{r\},S\setminus\{r'\})\bigr)
+\frac12 x\bigl(\delta(\{r\},\hat V\setminus(S\cup\{r\}))\bigr) \\
&=
\bar x_{e_0}+\frac12 x(\delta(r))
=
2
\ge
2\bar y_{v} .
\end{aligned}
\]
Now suppose that \(r'\notin S\). Then \(S\subseteq V^A\setminus\{r,r'\}=\hat V\setminus\{r\}\), and the definition of \(\bar x\) gives \(\bar x(\delta(S))=x(\delta(S))\). Hence \(\bar x(\delta(S))\ge2y_v=2\bar y_v\) by the cut constraint \eqref{eq:pcrpp-ip-cut} of \textsc{PCRPP-LP}. Hence \eqref{eq:PCTSP-lp-cut} holds.

Thus \((\bar x,\bar y)\) is feasible for \textsc{PCTSP-LP} on \(G^A\). The distinguished edge required by Lemma~\ref{lem:treedecomp} is \(e_0=rr'\); indeed, \(\bar x_{e_0}\ge1\) and \(\bar y_{r'}=1\). Applying Lemma~\ref{lem:treedecomp}, we obtain a polynomial-size set \(\mathcal L^A\) of rooted trees in \(G^A\) and a probability vector \(\lambda^A\in[0,1]^{\mathcal L^A}\), where \(\lambda^A_L\) is the probability assigned to \(L\). For every \(e\in E^A\setminus\{e_0\}\), we have \(\Pr[e\in E(L)]=\bar x_e\), while \(\Pr[e_0\in E(L)]=\bar x_{e_0}-1\). We will use \(\Pr[v\in V(L)]=\bar y_v\) to prove the coupling between each positive-profit edge and its two endpoints.

For each \(L\in\mathcal L^A\), obtain \(\Phi(L)\) by merging \(r\) and \(r'\) and, if necessary, deleting one edge from the resulting closed circuit. More explicitly, for every edge \(r'v\in E(L)\setminus\{rr'\}\), with \(v\in\hat V\setminus\{r\}\), if \(rv\notin E(L)\), add the edge \(rv\); then delete the vertex \(r'\) and all edges incident to \(r'\). If the resulting graph contains a closed circuit, then this closed circuit contains the root vertex \(r\); otherwise it would already give a closed circuit in the tree \(L\), a contradiction. In this case, delete one edge of this closed circuit that is incident to \(r\) to give a tree. By Property~(P1), this deleted edge is not a positive-profit edge. Hence \(E(L)\cap\hat E^+=E(\Phi(L))\cap\hat E^+\). We denote the resulting rooted tree in \(\hat G\) by \(\Phi(L)\).

Let \(\mathcal T=\{\Phi(L):L\in\mathcal L^A\}\). For each \(T\in\mathcal T\), define
\[
\lambda_T=\sum_{L\in\mathcal L^A:\,\Phi(L)=T}\lambda^A_L .
\]
Then \(\lambda\) is a probability vector over \(\mathcal T\), where \(\lambda_T\) is the probability assigned to \(T\), and \(\sum_{T\in\mathcal T}\lambda_T=1\). Moreover, every \(T\in\mathcal T\) contains the root vertex \(r\), and \(\mathcal T\) is polynomial-size and computable in polynomial time.

Since \(w(e_0)=0\) and, for every edge \(rv\in \hat E\) incident to \(r\), the corresponding edge \(r'v\) in \(G^A\) satisfies \(w(r'v)=w(rv)\), the construction of \(\Phi(L)\) does not increase length for any \(L\in\mathcal L^A\). Therefore, sampling \(T\) according to \(\lambda\) and using Lemma~\ref{lem:treedecomp}, we obtain
\[
\mathbb E\left[\hat w(T)\right]
\le
\sum_{e\in\hat E}\hat w_e x_e .
\]
Moreover, since \(E(L)\cap\hat E^+=E(\Phi(L))\cap\hat E^+\) for every \(L\in\mathcal L^A\), Lemma~\ref{lem:treedecomp} gives
\[
\Pr[e\in E(T)]=x_e
\qquad \forall\,e\in\hat E^+ .
\]

It remains to prove the coupling between each positive-profit edge and its two endpoints. The construction of \(\Phi(L)\) does not change which vertices of \(\hat V\) are present. Hence \(\Pr[v\in V(T)]=y_v\) for every \(v\in\hat V\). Fix \(e=uv\in\hat E^+\). The coupling constraints~\eqref{eq:pcrpp-ip-couple} of \textsc{PCRPP-LP} give \(x_e=y_u=y_v\). Together with the edge probability above, this gives \(\Pr[e\in E(T)]=\Pr[u\in V(T)]=\Pr[v\in V(T)]\). Since the event \(e\in E(T)\) is contained in both events \(u\in V(T)\) and \(v\in V(T)\), no tree with positive probability can contain one endpoint of \(e\) without containing \(e\). Therefore, for every \(T\in\mathcal T\) with \(\lambda_T>0\),
\[
e\in E(T)
\quad\Longleftrightarrow\quad
u\in V(T)
\quad\Longleftrightarrow\quad
v\in V(T).
\]
\end{proof}

The next lemma isolates the effect of the outer-threshold splitting step in Algorithm~\ref{alg:best_of_many}. Although the preprocessed complete graph \(\hat G\) is not fully metric, the full sequence of complete splitting-off operations for a fixed threshold \(\delta\) preserves feasibility of \textsc{PCRPP-LP} and does not increase \(\sum_{e\in\hat E}\hat w_e x_e\).

\begin{lemma}\label{lem:low_y_splitting_pcrpp}
Fix an outer threshold \(\delta\) in Algorithm~\ref{alg:best_of_many}. After the complete splitting-off operations for this threshold have been performed, the resulting pair \((\tilde x,\tilde y)\) is feasible for \textup{\textsc{PCRPP-LP}}. Moreover,
\[
\sum_{e\in\hat E}\hat w_e\tilde x_e
\le
\sum_{e\in\hat E}\hat w_e x^*_e .
\]
For every \(v\in\hat V\setminus\{r\}\), the following two cases hold: if \(y_v^*<\delta\), then \(\tilde y_v=0\); if \(y_v^*\ge\delta\), then \(\tilde y_v=y_v^*\). Furthermore, for every positive-profit edge \(e\in\hat E^+\), we have
\[
\tilde x_e =
\begin{cases}
0, & x_e^*<\delta,\\
x_e^*, & x_e^*\ge \delta .
\end{cases}
\]
\end{lemma}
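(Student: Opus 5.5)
The plan is to track what a single splitting-off operation at a vertex \(v\) with \(0<\tilde y_v<\delta\) does to each constraint of \textsc{PCRPP-LP}, and then induct over the sequence of complete splittings performed for the fixed threshold \(\delta\).

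\textbf{Step 1: coupled vertices are split together.} By the coupling constraint \eqref{eq:pcrpp-ip-couple}, if \(v\) is an endpoint of a positive-profit edge \(e=vu\), then \(y_u^*=y_v^*=x_e^*\). So either both endpoints of \(e\) are split, and then \(\tilde x_e=0\) and \(\tilde y_u=\tilde y_v=0\), or neither is. This already gives the claimed description of \(\tilde y\) and, provided no splitting operation ever increases a positive-profit edge (Step 3), also of \(\tilde x_e\) for \(e\in\hat E^+\).

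\textbf{Step 2: degree and cut constraints.} I would verify these directly.
\begin{itemize}
\item A splitting-off operation at \(v\) changes \(x(\delta(u))\) only for \(u=v\), or for \(u\) a partner in a degenerate step. In a non-degenerate step, a partner \(u\) loses \(\varepsilon\) on \(uv\) and gains \(\varepsilon\) on \(uw\), so its degree is unchanged.
\item In particular \(x(\delta(r))\) never increases, so \eqref{eq:pcrpp-ip-deg2} is preserved.
\item The degree equalities \eqref{eq:pcrpp-ip-deg1} for \(u\neq v\) need the complete splitting to be non-degenerate, except at the end where \(x(\delta(v))\) has become zero. I would take this from the construction in Lemma~\ref{lem:splitting}: degenerate steps are only needed when \(x(\delta(v))\) is odd in an Eulerian sense. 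Since all other degrees are untouched there, I would argue that every step can be taken non-degenerate. Checking this is a small point to verify carefully.
\item For the cut constraints \eqref{eq:pcrpp-ip-cut}, fix \(s\in S\) with \(s\neq v\) and \(\tilde y_s>0\). Before splitting, the minimum \(s\)-\(r\) cut is at least \(2y_s\) by Menger's theorem applied to \eqref{eq:pcrpp-ip-cut}. Feasibility of each operation keeps it at least \(2y_s\). Hence \(\tilde x(\delta(S))\ge 2\tilde y_s\) for every \(S\ni s\), including sets \(S\) that contain \(v\).
\item Constraints involving \(v\) itself are trivial after the split, since \(\tilde y_v=0\) and \(\tilde x(\delta(v))=0\).
\end{itemize}

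\textbf{Step 3: length monotonicity and protecting positive-profit edges (main obstacle).} A step increasing \(x_{uw}\) by \(\varepsilon\) changes the length term by \(\varepsilon(\hat w_{uw}-\hat w_{uv}-\hat w_{vw})\).
\begin{itemize}
\item If \(uw\) is zero-profit, this is nonpositive by Property~(P3), since \(u\to v\to w\) is a \(u\)-\(w\) path.
\item The difficulty is when \(uw\in\hat E^+\). Then \(\hat w_{uw}\) need not satisfy the triangle inequality. Moreover, increasing \(x_{uw}\) would break \(x_{uw}=y_u=y_w\).
\end{itemize}
I would first try to show that such a pair never arises. When we split at \(v\), the vertex \(v\) is not adjacent via a positive-profit edge to \(u\) or \(w\) unless \(u\) or \(w\) is its own partner. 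If \(u\) and \(w\) form a positive-profit pair and both carry positive \(x\)-weight to \(v\), I would exploit the freedom in choosing the splitting pairs: a pair \(\{vu,vw\}\) with \(uw\in\hat E^+\) can be replaced by a pair \(\{vu,vw'\}\) with \(w'\neq w\), via an uncrossing/exchange argument on tight cuts. Such a \(w'\) must exist because \(x(\delta(v))\) is spread over vertices beyond the single pair \(\{u,w\}\), since \(u\) and \(w\) each have only one positive-profit edge.

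If this fails, the fallback is to run Lemma~\ref{lem:splitting} on an auxiliary complete multigraph. In it, every pair \(uw\in\hat E^+\) also gets a parallel zero-profit edge whose length is the shortest-path length between \(u\) and \(w\) avoiding the pair; this auxiliary graph is metric on its zero-profit part, so the length inequality holds. I would then show that any weight landing on such a parallel edge can be rerouted along the path realizing its length, without violating cuts and without increasing length. This shows the positive-profit coordinates are never increased, which completes the edge formula of Step 1.
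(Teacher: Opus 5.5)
Your Steps 1 and 2 are on the right track, but Step 3, which you correctly identify as the crux, is left unresolved, and neither route you sketch works. The exchange argument assumes you may choose which pairs \(\{vu,vw\}\) get split. The algorithm instead applies whatever complete splitting Lemma~\ref{lem:splitting} returns, so the lemma must hold for every feasible splitting sequence. Your reason for the existence of \(w'\), namely that \(u\) and \(w\) each carry only one positive-profit edge, also says nothing about how \(x(\delta(v))\) is distributed among the neighbours of \(v\). The multigraph fallback fails concretely. Rerouting weight \(\varepsilon\) from a parallel copy of \(uw\) along a \(u\)-\(w\) path through an intermediate vertex \(z\) raises \(x(\delta(z))\) by \(2\varepsilon\), which destroys the degree equality \eqref{eq:pcrpp-ip-deg1} at \(z\). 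Moreover, \(z\) may be an already split vertex or an endpoint of another positive-profit edge, in which case \eqref{eq:pcrpp-ip-couple} breaks as well.

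The missing idea is a short cut argument showing that the bad situation cannot arise in any feasible splitting sequence. Take \(e=uw\in\hat E^+\) at any moment before splitting at either endpoint has started, and let \(x'\) be the current vector. Splitting at other vertices never increases any vertex degree and never decreases \(x_e\). By coupling, \(y_u^*=y_w^*=x_e^*\), so \(x'(\delta(u)),x'(\delta(w))\le 2x_e^*\) and \(x'_e\ge x_e^*\). Hence \(x'(\delta(\{u,w\}))=x'(\delta(u))+x'(\delta(w))-2x'_e\le 4x_e^*-2x'_e\). Since \(r\notin\{u,w\}\) by (P1), \(\delta(\{u,w\})\) is a \(u\)-\(r\) cut, and by Lemma~\ref{lem:splitting} its size must stay at least \(2y_u^*=2x_e^*\). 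Therefore \(x'_e\le x_e^*\): a positive-profit coordinate is never increased, and once splitting at an endpoint starts it only decreases, ending at \(0\). This gives the edge formula. It also means every non-degenerate step raises only a zero-profit edge, so your per-step (P3) bound yields the length inequality.

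The paper instead proves the length bound globally. It applies the metric part of Lemma~\ref{lem:splitting} with the shortest-path lengths \(d\) of \(\hat G\), then uses \(d=\hat w\) on zero-profit edges, \(d\le\hat w\) on \(\hat E^+\), and \(\tilde x_e\le x_e^*\) on \(\hat E^+\).

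The same min-cut reasoning also repairs your degree step without the non-degeneracy claim, which Lemma~\ref{lem:splitting} does not provide. For an unsplit \(u\), the singleton \(\delta(u)\) is a \(u\)-\(r\) cut, so \(\tilde x(\delta(u))\ge 2y_u^*\). Degrees never increase, which forces equality.
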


\begin{proof}
By the construction in Algorithm~\ref{alg:best_of_many}, if \(y_v^*<\delta\), then \(\tilde y_v=0\), while if \(y_v^*\ge\delta\), then \(\tilde y_v=y_v^*\). In particular, \(\tilde y_r=1\). We next verify that \((\tilde x,\tilde y)\) is feasible for \textsc{PCRPP-LP}.

We start with the degree constraints~\eqref{eq:pcrpp-ip-deg1}. Fix \(v\in\hat V\setminus\{r\}\) with \(y_v^*\ge\delta\). For the optimal solution \((x^*,y^*)\) to \textsc{PCRPP-LP}, the cut constraints~\eqref{eq:pcrpp-ip-cut} imply that the minimum \(v\)-to-\(r\) cut size is at least \(2y_v^*\), while the singleton cut \(\delta(v)\) has size \(x^*(\delta(v))=2y_v^*\) by the degree constraints~\eqref{eq:pcrpp-ip-deg1}. Hence this minimum cut size is exactly \(2y_v^*\). Since no complete splitting is performed at \(v\) or at \(r\), Lemma~\ref{lem:splitting} preserves this minimum cut size throughout the splitting sequence. Thus the final singleton cut \(\delta(v)\), being a \(v\)-to-\(r\) cut, has size at least \(2y_v^*\). On the other hand, the degree of \(v\) does not increase throughout the splitting sequence, so \(\tilde x(\delta(v))\le x^*(\delta(v))=2y_v^*\). Therefore \(\tilde x(\delta(v))=2y_v^*=2\tilde y_v\). Now consider \(v\in\hat V\setminus\{r\}\) with \(y_v^*<\delta\). If \(0<y_v^*<\delta\), then complete splitting is performed at \(v\), and the degree of \(v\) becomes zero after this complete splitting. If \(y_v^*=0\), then \(x^*(\delta(v))=0\) by the degree constraints~\eqref{eq:pcrpp-ip-deg1}, so the degree of \(v\) is already zero at the beginning of the splitting sequence. Complete splitting does not increase the degree of \(v\). Hence \(\tilde x(\delta(v))=0=2\tilde y_v\). Thus the degree constraints~\eqref{eq:pcrpp-ip-deg1} hold for every non-root vertex. The root-degree constraint~\eqref{eq:pcrpp-ip-deg2} is preserved because all complete splittings are performed at non-root vertices and do not increase the degree of \(r\).

We now verify the cut constraints~\eqref{eq:pcrpp-ip-cut}. Let \(S\subseteq\hat V\setminus\{r\}\) and \(v\in S\). If \(y_v^*<\delta\), then \(\tilde y_v=0\), and the constraint is immediate. If \(y_v^*\ge\delta\), then the argument above shows that the minimum \(v\)-to-\(r\) cut size with respect to \(\tilde x\) is at least \(2y_v^*=2\tilde y_v\). Since \(\delta(S)\) separates \(v\) from \(r\), we have \(\tilde x(\delta(S))\ge 2\tilde y_v\). Hence the cut constraints~\eqref{eq:pcrpp-ip-cut} hold.

It remains to verify the coupling constraints~\eqref{eq:pcrpp-ip-couple} for positive-profit edges. Let \(e=uv\in\hat E^+\). By the coupling constraints~\eqref{eq:pcrpp-ip-couple} for \((x^*,y^*)\), we have \(x_e^*=y_u^*=y_v^*\). If \(x_e^*<\delta\), then \(y_u^*<\delta\) and \(y_v^*<\delta\). The degree constraints~\eqref{eq:pcrpp-ip-deg1}, already proved for \((\tilde x,\tilde y)\), give \(\tilde x(\delta(u))=\tilde x(\delta(v))=0\). Hence \(\tilde x_e=0=\tilde y_u=\tilde y_v\). Now suppose that \(x_e^*\ge\delta\). Then \(y_u^*\ge\delta\) and \(y_v^*\ge\delta\), so no complete splitting is performed at \(u\) or at \(v\), and \(\tilde y_u=\tilde y_v=x_e^*\). Since neither endpoint of \(e\) is the vertex at which complete splitting is performed, \(\tilde x_e\ge x_e^*\). We prove the reverse inequality. If \(\tilde x_e>x_e^*\), then, since the degrees of \(u\) and \(v\) do not increase throughout the splitting sequence, we have \(\tilde x(\delta(u))\le x^*(\delta(u))=2x_e^*\) and \(\tilde x(\delta(v))\le x^*(\delta(v))=2x_e^*\). Hence the cut \(\delta(\{u,v\})\) has size at most \(\tilde x(\delta(u))+\tilde x(\delta(v))-2\tilde x_e \le 4x_e^*-2\tilde x_e<2x_e^*=2y_u^*\). By Property~(P1), \(r\notin\{u,v\}\), so \(\delta(\{u,v\})\) is a \(u\)-to-\(r\) cut. This contradicts Lemma~\ref{lem:splitting}, since the minimum \(u\)-to-\(r\) cut size is preserved throughout the splitting sequence and remains at least \(2y_u^*\). Therefore \(\tilde x_e\le x_e^*\), and hence \(\tilde x_e=x_e^*=\tilde y_u=\tilde y_v\). Thus the coupling constraints~\eqref{eq:pcrpp-ip-couple} hold for all positive-profit edges. Moreover, the same argument shows that \(\tilde x_e=0\) if \(x_e^*<\delta\), and \(\tilde x_e=x_e^*\) if \(x_e^*\ge\delta\), for every \(e\in\hat E^+\).

Constraint~\eqref{eq:pcrpp-ip-root}, and the relaxed bound constraints corresponding to~\eqref{eq:pcrpp-ip-x} and \eqref{eq:pcrpp-ip-y} are satisfied by the construction of \((\tilde x,\tilde y)\). Therefore \((\tilde x,\tilde y)\) is feasible for \textsc{PCRPP-LP}.

Finally, we prove the length inequality. For an edge \(e=uv\in\hat E\), let \(d_e\) be the shortest path distance between \(u\) and \(v\) in the graph \(\hat G\). Then \((\hat V,\hat E,d)\) is a metric complete graph. The complete splitting operations that produce \(\tilde x\) from \(x^*\) are performed on \(\hat G\) and do not depend on edge lengths; the edge lengths \(d\) are introduced only for the length analysis. Lemma~\ref{lem:splitting} gives \(\sum_{e\in\hat E}d_e\tilde x_e\le\sum_{e\in\hat E}d_e x_e^*\). By Property~(P3), \(\hat w_e=d_e\) for every zero-profit edge \(e\), and \(d_e\le\hat w_e\) for every positive-profit edge \(e\in\hat E^+\). Since the coupling argument above also gives \(\tilde x_e\le x_e^*\) for every positive-profit edge, we obtain
\[
\sum_{e\in\hat E}\hat w_e\tilde x_e
=
\sum_{e\in\hat E}d_e \tilde x_e
+
\sum_{e\in\hat E^+}(\hat w_e-d_e)\tilde x_e
\le
\sum_{e\in\hat E}d_e x_e^*
+
\sum_{e\in\hat E^+}(\hat w_e-d_e)x_e^*
=
\sum_{e\in\hat E}\hat w_e x_e^* .
\]
\end{proof}

We now analyze the candidate solutions generated by Algorithm~\ref{alg:best_of_many}. Lemma~\ref{lem:qjoin_cost_bound} and the proofs of Theorems~\ref{thm:golden_ratio} and \ref{thm:below_1_6} adapt the analyses of \cite{Blauth2026} from the PCTSP to the PCRPP. Fix an outer threshold \(\delta\), and let \((\tilde x,\tilde y)\) be the pair obtained after the complete splitting operations for this threshold \(\delta\). By Lemma~\ref{lem:low_y_splitting_pcrpp}, \((\tilde x,\tilde y)\) is feasible for \textsc{PCRPP-LP}. Apply Theorem~\ref{thm:pcrpp_treedecomp} to \((\tilde x,\tilde y)\), and let \(\mathcal T\) and \(\lambda\) be the resulting set of trees and probability vector. For \(T\in\mathcal T\) and an inner threshold \(\gamma\), we keep the notation \(H_{T,\gamma}\), \(J_{T,\gamma}\), and \(W_{T,\gamma}\) from Algorithm~\ref{alg:best_of_many}. The walk \(W_{T,\gamma}\) is feasible for the original PCRPP instance. The next two lemmas analyze, for fixed \(\delta\), \(T\), and \(\gamma\), the length of the restored multigraph \(H_{T,\gamma}\) and the length of the parity correction \(J_{T,\gamma}\), respectively. 

\begin{lemma}\label{lem:tree_cost_bound}
For every fixed \(\delta\), \(T\in\mathcal T\), and \(\gamma\),
\[
w(H_{T,\gamma})
\le
\hat w(\operatorname{core}_{\tilde x}(T,\gamma)) .
\]
Moreover, if \(T\) is sampled from \(\mathcal T\) according to \(\lambda\), then for every fixed \(\delta\) and \(\gamma\),
\[
\mathbb E[w(H_{T,\gamma})]
\le
\mathbb E\!\left[\hat w(T)\right]
\le
\sum_{e\in\hat E}\hat w_e\tilde x_e ,
\]
where the expectation is over the choice of \(T\).
\end{lemma}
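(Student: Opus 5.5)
The plan is to prove the deterministic inequality first and then average it over \(T\).

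\emph{First inequality.} Fix \(\delta\), \(T\), and \(\gamma\), and write \(C=\operatorname{core}_{\tilde x}(T,\gamma)\). The multigraph \(H_{T,\gamma}\) is obtained from \(C\) edge by edge, so it suffices to compare each edge of \(C\) with what it becomes in \(G\).

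\begin{itemize}
\item A zero-profit edge \(ab\) of \(C\) becomes a shortest \(a\)-to-\(b\) path in the graph produced by the vertex-copying step, and that path has length exactly \(\hat w_{ab}\). This path may use zero-length copy edges \(vv_i\) and copied positive-profit edges. Merging copied vertices back to their originals deletes the copy edges, which have length zero, and turns every other edge into an edge of \(G\) of the same length. So the restored path has length at most \(\hat w_{ab}\); it is at most rather than equal because zero-length copy edges disappear.
\item A positive-profit edge of \(C\) comes from an original edge with the same length. It is restored to that original edge.
\end{itemize}

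Summing over the edges of \(C\), with multiplicity, gives \(w(H_{T,\gamma})\le \hat w(C)\). Connectivity of \(H_{T,\gamma}\) is not needed here.

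\emph{Second inequality.} Since \(C\) is a subtree of \(T\) and all lengths are nonnegative, \(\hat w(C)\le \hat w(T)\). Hence \(w(H_{T,\gamma})\le \hat w(T)\) holds pointwise for every \(T\in\mathcal T\). Taking expectations over \(T\sim\lambda\) gives \(\mathbb E[w(H_{T,\gamma})]\le \mathbb E[\hat w(T)]\). This holds for any fixed \(\gamma\), even though the algorithm only uses \(\gamma\) from the edge set of \(T\). The pointwise inequality is valid for any \(\gamma\in[0,1]\), so the expectation is well defined with \(\gamma\) fixed.

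\emph{Third inequality.} Lemma~\ref{lem:low_y_splitting_pcrpp} shows that \((\tilde x,\tilde y)\) is feasible for \textsc{PCRPP-LP}. Theorem~\ref{thm:pcrpp_treedecomp} then applies and gives \(\mathbb E[\hat w(T)]\le\sum_{e}\hat w_e\tilde x_e\).

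The only step needing care is the restoration bookkeeping. One has to note that the shortest paths defining zero-profit lengths live in the copied graph, so they may contain zero-length copy edges and copied positive-profit edges, and that merging never increases length. This is routine, so I expect no real obstacle.
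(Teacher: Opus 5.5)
Your proof is correct and follows essentially the same route as the paper. You bound the restored multigraph edge by edge, use \(\hat w(\operatorname{core}_{\tilde x}(T,\gamma))\le\hat w(T)\) for a subtree, and then apply the expected-length guarantee of Theorem~\ref{thm:pcrpp_treedecomp} to the pair \((\tilde x,\tilde y)\), which is feasible by Lemma~\ref{lem:low_y_splitting_pcrpp}. (One small aside: deleting zero-length copy edges does not change length, so each restored path actually has length exactly \(\hat w_{ab}\); the inequality is unaffected.)
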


\begin{proof}
Fix \(T\in\mathcal T\). Restoring \(\operatorname{core}_{\tilde x}(T,\gamma)\) to the original graph does not increase length. Each positive-profit edge is kept as the corresponding original edge with the same length, and each zero-profit edge is replaced by a shortest path whose length equals the length of the zero-profit edge in the preprocessed instance. Hence
\[
w(H_{T,\gamma})
\le
\hat w(\operatorname{core}_{\tilde x}(T,\gamma)) .
\]

Moreover, \(\operatorname{core}_{\tilde x}(T,\gamma)\) is a subtree of \(T\). Thus
\[
\hat w(\operatorname{core}_{\tilde x}(T,\gamma))
\le
\hat w(T) .
\]
Taking expectation over \(T\) sampled according to \(\lambda\), and applying the expected-length guarantee in Theorem~\ref{thm:pcrpp_treedecomp}, gives
\[
\mathbb E\!\left[
\hat w(\operatorname{core}_{\tilde x}(T,\gamma))
\right]
\le
\mathbb E\!\left[\hat w(T)\right]
\le
\sum_{e\in\hat E}\hat w_e\tilde x_e .
\]
Combining this with the fixed-\(T\) restoration bound proves the lemma.
\end{proof}

\begin{lemma}\label{lem:qjoin_cost_bound}
Fix \(\delta,\gamma\), and let \(T\in\mathcal T\) with \(\lambda_T>0\). Let \(\eta_1>\eta_2>\cdots>\eta_q\) be the distinct positive values among \(\{\tilde x_e:e\in\hat E^+\}\). For each \(i\in\{1,\ldots,q\}\), set \(T_i=\operatorname{core}_{\tilde x}(T,\eta_i)\), and let \(T_0\) be the trivial tree consisting only of the root. Then the minimum-length \(\operatorname{odd}(H_{T,\gamma})\)-join \(J_{T,\gamma}\) computed by the algorithm satisfies
\[
w(J_{T,\gamma})
\le
\frac{1}{3-\delta}\sum_{e\in\hat E}\hat w_e\tilde x_e
+
\sum_{i:\eta_i\ge\gamma}
\left(1-\frac{2\eta_i}{3-\delta}\right)
\hat w(E(T_i)\setminus E(T_{i-1})) .
\]
\end{lemma}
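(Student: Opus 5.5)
We construct an explicit feasible solution to \textsc{Q-Join-LP} for \(Q=\operatorname{odd}(H_{T,\gamma})\) in the original graph \(G\), and then apply Lemma~\ref{lem:tjoin}. The candidate is a combination of two vectors. The first is \(\frac{1}{3-\delta}\tilde x\), mapped to \(G\) by replacing each zero-profit edge of \(\hat G\) with its shortest path and merging copied vertices; this does not increase length. The second is the increment vector
\[
z=\sum_{i:\eta_i\ge\gamma}\Bigl(1-\tfrac{2\eta_i}{3-\delta}\Bigr)\chi^{E(T_i)\setminus E(T_{i-1})},
\]
restored in the same way. The coefficients are nonnegative because \(\eta_i\le1\) and \(\delta\le1\), so \(2\eta_i\le 2<3-\delta\). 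By construction the core satisfies \(\operatorname{core}_{\tilde x}(T,\gamma)=T_{i^*}\), where \(\eta_{i^*}\) is the smallest \(\eta_i\ge\gamma\), and the nested trees \(T_0\subseteq T_1\subseteq\cdots\subseteq T_{i^*}\) partition its edges into the layers \(E(T_i)\setminus E(T_{i-1})\). The length of the combined vector is exactly the claimed right-hand side, so it remains to check the odd-cut constraints.

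Fix \(S\subseteq V(G)\) with \(|S\cap Q|\) odd. Since \(H_{T,\gamma}\) is connected, contains \(r\), and \(|\delta_H(S)|\) is odd, we may assume \(r\notin S\) after complementing \(S\). Pull \(S\) back to \(\hat G\), placing all copies of an original vertex with it; restoration preserves crossing parity. The cut \(\delta_{H}(S)\) then contains an odd number of core edges, so some core edge crosses \(S\). Let \(i\) be the smallest index such that \(T_i\) has an edge crossing \(\delta(S)\).

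\textbf{Case 1: \(\delta(S)\) contains at least two edges of \(T_i\).} Then \(z(\delta(S))\ge 2\bigl(1-\tfrac{2\eta_i}{3-\delta}\bigr)\), provided both crossing edges carry weight at least \(1-\tfrac{2\eta_i}{3-\delta}\). This requires the coefficients to increase as \(i\) increases, which holds because \(\eta_i\) is decreasing. It is also enough that \(S\) separates a vertex of \(T_i\) from \(r\). In that situation \(S\) contains an endpoint \(u\) of a positive-profit edge \(e\) with \(\tilde x_e=\eta_i\) and \(u\) in the tree. Combining the cut constraint \(\tilde x(\delta(S))\ge 2\tilde y_u=2\eta_i\) with \(\delta\le\tilde y_u\), we get
\[
\frac{\tilde x(\delta(S))}{3-\delta}+2-\frac{4\eta_i}{3-\delta}\ge 1 .
\]

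\textbf{Case 2: exactly one crossing edge.} The odd-parity argument implies that \(S\) cuts off a leaf-side subtree of \(T_i\) containing an endpoint of a positive-profit edge \(e=uv\) with \(\tilde x_e\ge\eta_i\). By the coupling property of Theorem~\ref{thm:pcrpp_treedecomp}, both endpoints of \(e\) lie in the tree. The bound \(\tilde x(\delta(S))\ge 2\eta_i\) then gives
\[
\frac{2\eta_i}{3-\delta}+1-\frac{2\eta_i}{3-\delta}=1 .
\]
The case where \(S\) separates \(u\) from \(v\) is also covered, since \(u\) alone is separated from \(r\).

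The main obstacle is that \(S\) may separate some vertex of \(T_i\) from \(r\) without the crossing edge lying in the layer \(E(T_i)\setminus E(T_{i-1})\). We handle this by picking \(i\) minimal, so that all crossing core edges have index at least \(i\) and hence weight at least \(1-\tfrac{2\eta_i}{3-\delta}\). We also need \(\tilde x(\delta(S))\ge 2\eta_i\), which comes from a leaf of \(T_i\) on the \(S\)-side. Every leaf of \(T_i\) other than \(r\) is an endpoint of an edge with \(\tilde x\ge\eta_i\), so a leaf-containing side exists whenever \(S\) cuts \(T_i\). Checking that this still yields the bound \(1\) in Case 1 is where we expect to use \(\tilde y_u\ge\delta\) from Lemma~\ref{lem:low_y_splitting_pcrpp}, together with \(\eta_i\ge\delta\) for all surviving positive-profit edges.
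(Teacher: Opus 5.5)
\textbf{The gap is in Case~1.} Several parts of your plan are fine: the fractional vector, nonnegativity of the coefficients, the layer decomposition of $T_\gamma=\operatorname{core}_{\tilde x}(T,\gamma)$, and pulling $S$ back to a union of fibers in $\hat G$, which preserves crossing parity and cannot increase cut values. Your single-crossing-edge case is also correct: the subtree of $T_i$ below the crossing edge lies in $S$ and contains a leaf of $T_i$, so coupling gives $\tilde x(\delta(S))\ge 2\eta_i$.

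In Case~1 you assert that $S$ contains an endpoint of a positive-profit edge with $\tilde x_e\ge\eta_i$, since ``a leaf-containing side exists whenever $S$ cuts $T_i$.'' This is false once two or more edges of $T_i$ cross $S$. Suppose $T_i$ contains a path $r\to a\to b$ followed by a positive-profit edge $bc$, where $a$ is not an endpoint of any positive-profit edge. Take $S$ with $S\cap V(T_i)=\{a\}$. Then $S$ contains no leaf of $T_i$, and nothing forces $\tilde y_a\ge\eta_i$. The only available bound is $\tilde x(\delta(S))\ge 2\tilde y_a\ge 2\delta$. With only the two crossing edges of layer $i$ counted, your terms give
\[
\frac{2\delta}{3-\delta}+2-\frac{4\eta_i}{3-\delta}=\frac{6-4\eta_i}{3-\delta},
\]
which is below $1$ whenever $\eta_i>(3+\delta)/4$. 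The fact $\eta_i\ge\delta$ that you invoke does not help, because what is needed here is an upper bound on $\eta_i$.

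\textbf{The missing idea is parity of the whole core.} Since $|\delta_{T_\gamma}(S)|$ is odd, two crossing edges of $T_i\subseteq T_\gamma$ force at least three crossing edges of $T_\gamma$. So split on $|\delta_{T_\gamma}(S)|=1$ versus $|\delta_{T_\gamma}(S)|\ge 3$. In the second case:
\begin{itemize}
\item Each crossing edge lies in some layer $j$ and carries coefficient $1-\frac{2\eta_j}{3-\delta}\ge 1-\frac{2}{3-\delta}$.
\item Because $r\notin S$, the set $S$ contains a non-root vertex $v$ of $T$. Since $\lambda_T>0$ and $\Pr[v\in V(T)]=\tilde y_v$, we have $\tilde y_v>0$. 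Lemma~\ref{lem:low_y_splitting_pcrpp} gives $\tilde y_v\in\{0\}\cup[\delta,1]$, hence $\tilde y_v\ge\delta$ and $\tilde x(\delta(S))\ge 2\delta$.
\end{itemize}
Together these give
\[
z(\delta(S))\ge\frac{2\delta}{3-\delta}+3\Bigl(1-\frac{2}{3-\delta}\Bigr)=1 .
\]
This is exactly how the constant $3-\delta$ is calibrated in the paper's proof. With this replacement for Case~1, and your Case~2 for a single crossing edge, the argument goes through and coincides with the paper's.
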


\begin{proof}
Write \(T_\gamma=\operatorname{core}_{\tilde x}(T,\gamma)\), and let \(\hat J\) be a minimum-length \(\operatorname{odd}(T_\gamma)\)-join in the preprocessed graph \(\hat G\). We restore \(\hat J\) to the original graph by replacing each zero-profit edge by the corresponding shortest path in \(G\), keeping each positive-profit edge as the corresponding original edge, and merging copied vertices. It is straightforward to check from this restoration that the resulting edge multiset has length at most \(\hat w(\hat J)\) and is an \(\operatorname{odd}(H_{T,\gamma})\)-join in \(G\). Since \(J_{T,\gamma}\) is a minimum-length \(\operatorname{odd}(H_{T,\gamma})\)-join in \(G\), we have \(w(J_{T,\gamma})\le \hat w(\hat J)\). Thus every upper bound on \(\hat w(\hat J)\) is also an upper bound on \(w(J_{T,\gamma})\).

Define \(z\in\mathbb R_{\ge0}^{\hat E}\) by
\[
z=
\frac{1}{3-\delta}\tilde x
+
\sum_{i:\eta_i\ge\gamma}
\left(1-\frac{2\eta_i}{3-\delta}\right)
\chi_{\hat E}^{E(T_i)\setminus E(T_{i-1})}.
\]
Since \(\eta_i\le1\) and \(\delta\le1\), we have \(1-2\eta_i/(3-\delta)\ge0\) for every \(i\) with \(\eta_i\ge\gamma\). We show that \(z\) is feasible for \textsc{Q-Join-LP} on \(\hat G\) with \(Q=\operatorname{odd}(T_\gamma)\). The nonnegativity constraints~\eqref{eq:qjoin-lp-nonneg} are immediate, so it remains to verify the cut constraints~\eqref{eq:qjoin-lp-cut}.

Let \(S\subseteq\hat V\) satisfy \(|S\cap\operatorname{odd}(T_\gamma)|\) odd. By the handshaking identity, \(|\delta_{T_\gamma}(S)|\) is odd. Replacing \(S\) by its complement if necessary, assume that \(r\notin S\).

First suppose that \(|\delta_{T_\gamma}(S)|\ge3\). Then \(S\) contains a non-root vertex \(v\) of \(T_\gamma\). We observe that every non-root vertex of \(T\) satisfies \(\tilde y_v\ge\delta\). Indeed, let \(v\in V(T)\setminus\{r\}\). Since \(\lambda_T>0\), the equality \(\Pr[v\in V(T)]=\tilde y_v\) in Theorem~\ref{thm:pcrpp_treedecomp}, applied to the pair \((\tilde x,\tilde y)\), implies \(\tilde y_v>0\). By Lemma~\ref{lem:low_y_splitting_pcrpp}, every non-root vertex has either \(\tilde y_v=0\) or \(\tilde y_v=y_v^*\ge\delta\). Hence \(\tilde y_v\ge\delta\). In particular, \(\tilde y_v\ge\delta\), and the cut constraints~\eqref{eq:pcrpp-ip-cut} give \(\tilde x(\delta(S))\ge2\delta\). Moreover,
\[
E(T_\gamma)
=
\bigcup_{i:\eta_i\ge\gamma}
\bigl(E(T_i)\setminus E(T_{i-1})\bigr).
\]
Hence
\[
\sum_{i:\eta_i\ge\gamma}
\chi_{\hat E}^{E(T_i)\setminus E(T_{i-1})}(\delta(S))
= |\delta_{T_\gamma}(S)| \ge 3 .
\]
Since \(\eta_i\le1\) for every \(i\), we have \(1-2\eta_i/(3-\delta)\ge 1-2/(3-\delta)\). Therefore
\[
z(\delta(S))
\ge
\frac{2\delta}{3-\delta}
+
\left(1-\frac{2}{3-\delta}\right)
\sum_{i:\eta_i\ge\gamma}
\chi_{\hat E}^{E(T_i)\setminus E(T_{i-1})}(\delta(S))
\ge
\frac{2\delta}{3-\delta}
+
3\left(1-\frac{2}{3-\delta}\right)
=1 .
\]
Thus constraint~\eqref{eq:qjoin-lp-cut} holds in this case.

Now suppose that \(|\delta_{T_\gamma}(S)|=1\). Let \(h\) be the unique edge of \(T_\gamma\) crossing \(S\), and let \(i\) be such that \(h\in E(T_i)\setminus E(T_{i-1})\). By the definition of the edge-profit core, \(T_i=\operatorname{core}_{\tilde x}(T,\eta_i)\) is the minimal subtree of \(T\) that contains \(r\) and contains every positive-profit edge \(e\in E(T)\cap\hat E^+\) with \(\tilde x_e\ge\eta_i\); equivalently, \(T_i\) is the union of the unique \(r\)-to-\(v\) paths in \(T\), where \(v\) is an endpoint of a positive-profit edge \(e\in E(T)\cap\hat E^+\) with \(\tilde x_e\ge\eta_i\). Since \(h\in E(T_i)\), there exist a positive-profit edge \(e^h\in E(T)\cap\hat E^+\) with \(\tilde x_{e^h}\ge\eta_i\) and an endpoint \(v^h\) of \(e^h\) such that the unique \(r\)-to-\(v^h\) path in \(T\) contains \(h\). This path is contained in \(T_i\subseteq T_\gamma\). Since \(r\notin S\) and \(h\) is the only edge of \(T_\gamma\) crossing \(S\), we must have \(v^h\in S\). By the coupling constraints~\eqref{eq:pcrpp-ip-couple}, \(\tilde y_{v^h}=\tilde x_{e^h} \ge\eta_i\). The cut constraints~\eqref{eq:pcrpp-ip-cut} give \(\tilde x(\delta(S))\ge2\tilde y_{v^h}\ge2\eta_i\). Moreover, \(\chi_{\hat E}^{E(T_i)\setminus E(T_{i-1})}(\delta(S))\ge1\), because \(h\in (E(T_i)\setminus E(T_{i-1}))\cap\delta(S)\). By the definition of \(z\), we obtain
\[
z(\delta(S))
\ge
\frac{2\eta_i}{3-\delta}
+
\left(1-\frac{2\eta_i}{3-\delta}\right)
=1 .
\]
Thus constraint~\eqref{eq:qjoin-lp-cut} also holds in this case.

Therefore \(z\) is feasible for \textsc{Q-Join-LP} on \(\hat G\) with \(Q=\operatorname{odd}(T_\gamma)\). By Lemma~\ref{lem:tjoin}, a minimum-length \(\operatorname{odd}(T_\gamma)\)-join in \(\hat G\) has length at most
\[
\sum_{e\in\hat E}\hat w_e z_e
=
\frac{1}{3-\delta}\sum_{e\in\hat E}\hat w_e\tilde x_e
+
\sum_{i:\eta_i\ge\gamma}
\left(1-\frac{2\eta_i}{3-\delta}\right)
\hat w(E(T_i)\setminus E(T_{i-1})) .
\]
As observed at the beginning of the proof, this upper bound on the minimum length of an \(\operatorname{odd}(T_\gamma)\)-join in \(\hat G\) is also an upper bound on \(w(J_{T,\gamma})\). This proves the lemma.
\end{proof}

We next use a sampling experiment over the candidate solutions generated by Algorithm~\ref{alg:best_of_many}. The sampling experiment is used only for the analysis: the random choices below select values of \(\delta\), \(T\), and \(\gamma\) whose corresponding candidate solutions are generated by the algorithm.

\begin{theorem}\label{thm:golden_ratio}
Algorithm~\ref{alg:best_of_many} is a \((1+\sqrt5)/2\)-approximation algorithm for the minimization version of the PCRPP.
\end{theorem}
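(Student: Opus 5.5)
The plan is to bound the best candidate by the expected objective of one randomly chosen candidate. Let \(L=\sum_{e}\hat w_e x^*_e\) and \(P=\sum_{e\in\hat E^+}\hat p_e(1-x^*_e)\). By Theorem~\ref{thm:lp_bound}, \(L+P=\mathrm{OPT}_{\mathrm{LP}}\le\mathrm{OPT}\), so it suffices to exhibit a distribution over triples \((\delta,T,\gamma)\) generated by the algorithm whose expected objective is at most \(\frac{1+\sqrt5}{2}(L+P)\). In this proof I would use no pruning: \(\gamma\) is the smallest inner threshold, so \(\operatorname{core}_{\tilde x}(T,\gamma)\) contains every positive-profit edge of \(T\). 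The pruning step is presumably only needed for the sharper bound in Theorem~\ref{thm:below_1_6}.

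\textbf{Step 1: a continuous threshold.} I would sample a continuous \(\theta\in(0,1]\) from a density \(f\) to be chosen, and run the algorithm with \(\delta\) equal to the smallest value in \(\{y^*_v\}\) that is at least \(\theta\). This choice produces exactly the same splitting as threshold \(\theta\), so \((\tilde x,\tilde y)\) is one the algorithm actually computes. By Lemma~\ref{lem:low_y_splitting_pcrpp}, every positive \(\tilde x_e\) with \(e\in\hat E^+\) satisfies \(\tilde x_e=x^*_e\ge\delta\ge\theta\), and \(\sum_e\hat w_e\tilde x_e\le L\).

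\textbf{Step 2: length for fixed \(\theta\).} With \(\gamma\) minimal, every \(\eta_i\) appears in the sum of Lemma~\ref{lem:qjoin_cost_bound}, and each \(\eta_i\ge\delta\). Hence the coefficients satisfy \(1-2\eta_i/(3-\delta)\le 1-2\delta/(3-\delta)\). The sets \(E(T_i)\setminus E(T_{i-1})\) partition a subtree of \(T\), so \(w(J)\le\frac{1}{3-\delta}L+\frac{3-3\delta}{3-\delta}\hat w(T)\). Taking expectation over \(T\) and adding the bound of Lemma~\ref{lem:tree_cost_bound} gives an expected length of at most
\[
\Bigl(2+\frac{1-2\delta}{3-\delta}\Bigr)L \le \Bigl(2+\frac{1-2\theta}{3-\theta}\Bigr)L ,
\]
where the last inequality uses \(\delta\ge\theta\) and the fact that this factor is decreasing in \(\delta\).

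\textbf{Step 3: profit for fixed \(\theta\).} The walk traverses the original edge of \(e\in\hat E^+\) whenever \(e\in T\), since the core keeps it. If \(x^*_e<\theta\), then \(\tilde x_e=0\) and the profit is lost. If \(x^*_e\ge\theta\), then by Theorem~\ref{thm:pcrpp_treedecomp} the edge is lost with probability \(1-x^*_e\). So the expected lost profit is at most \(\sum_e\hat p_e\bigl(\Pr[\theta>x^*_e]+(1-x^*_e)\Pr[\theta\le x^*_e]\bigr)\).

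\textbf{Step 4: choosing \(f\) (the main obstacle).} It remains to find \(f\) on \((0,1]\) such that both of the following hold:
\[
\mathbb E_\theta\!\Bigl[2+\tfrac{1-2\theta}{3-\theta}\Bigr]\le\varphi,
\qquad
\Pr[\theta>x]+(1-x)\Pr[\theta\le x]\le\varphi(1-x)\ \ \text{for all }x\in[0,1],
\]
where \(\varphi=(1+\sqrt5)/2\). This is the calculation I expect to be hardest. It mirrors the golden-ratio analysis of \citet{Blauth2026}, because the length and profit bounds have exactly the PCTSP shape, with edges \(e\) playing the role of vertices. I would first try \(\theta\) uniform on \([\tau,1]\) with \(\tau=1/\varphi^2\) or \(\tau=1/\varphi\).

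If a point mass is needed, I would also allow the trivial walk at \(r\), which is always a candidate and costs at most \(\sum_e\hat p_e\). This trivial solution equals \(P+\sum_e\hat p_e x^*_e\le P+\sum_e\hat p_e\,\cdot\) (fraction), and I would mix it in with some probability and balance the two inequalities. The constraint on profits binds for \(x\) near \(1\), which forces mass near \(1\). The length constraint forces \(\mathbb E[\theta]\) to be large. Solving the resulting one-parameter optimization should give exactly \(\varphi\). If uniform densities fail, I would instead take the density that makes the profit inequality tight on its support, i.e. solve \(1-F(x)+(1-x)F(x)=\varphi(1-x)\) for \(F\), and then verify the length inequality.
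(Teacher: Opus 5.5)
Your Steps 1--3 are correct. They are the paper's bounds with the inner threshold at the bottom: in the paper's notation \(\kappa=\delta\), so \(\gamma=\delta\) and the core keeps every positive-profit edge of \(T\). This gives length factor \(g(\theta)=\frac{7-4\theta}{3-\theta}=4-\frac{5}{3-\theta}\), and an edge with \(x^*_e=x\) is lost with probability \(1-xF(x)\), where \(F(x)=\Pr[\theta\le x]\).

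The gap is Step~4: the two inequalities you pose have no solution at \(\varphi=(1+\sqrt5)/2\). The profit inequality is equivalent to \(F(x)\ge\varphi-(\varphi-1)/x\) for all \(x\in(0,1]\). Because \(g\) is decreasing, the length term is minimized by the pointwise smallest admissible \(F\). That is exactly your last fallback, the density \((\varphi-1)/\theta^2\) on \([1/\varphi^2,1]\). A partial-fraction computation gives
\[
\mathbb E[g(\theta)]=\frac73-\frac{5(\varphi-1)}{9}\ln\frac{2\varphi+1}{2(\varphi-1)}\approx 1.91>\varphi .
\]
Mixing in the trivial walk cannot help. An edge with \(x^*_e=1\) contributes nothing to \(\sum_e\hat p_e(1-x^*_e)\), so it must be collected with probability one, which forces the mixing weight to be zero. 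Optimizing over all \(F\), your scheme certifies only about \(1.85\). With a fixed \(\delta\) and no pruning, the best is \(\max\{\frac{7-4\delta}{3-\delta},\frac{1}{1-\delta}\}=2\), attained at \(\delta=\frac12\).

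The missing ingredient is the randomized pruning you set aside, and the roles are the reverse of what you guessed. Combining the restoration bound for \(H_{T,\gamma}\) with Lemma~\ref{lem:qjoin_cost_bound}, the layer \(E(T_i)\setminus E(T_{i-1})\) is paid with coefficient \(2-\frac{2\eta_i}{3-\delta}\). This coefficient is largest for layers that lead to edges with small \(\tilde x_e\), and without pruning those layers are always paid in full.

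The paper fixes \(\delta=(3-\sqrt5)/2\) and draws \(\gamma\) with \(\Pr[\gamma\le t]=\frac{2-\delta}{3-\delta-t}\) on \([\delta,1]\), i.e.\ \(\kappa=1\). This choice gives:
\begin{itemize}
\item for every layer, \(\Pr[\gamma\le\eta_i]\bigl(2-\frac{2\eta_i}{3-\delta}\bigr)=\frac{4-2\delta}{3-\delta}\), so the expected length is at most \(\frac{5-2\delta}{3-\delta}\sum_e\hat w_ex^*_e\);
\item an edge with \(x^*_e\ge\delta\) is still collected with probability at least \(x^*_e\frac{2-\delta}{3-\delta-x^*_e}\), which costs only a factor \(\frac{3-\delta}{2-\delta}\) on its profit term;
\item edges with \(x^*_e<\delta\) cost a factor \(\frac{1}{1-\delta}\).
\end{itemize}
At \(\delta=1/\varphi^2\) all three factors equal \(\varphi\). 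Randomizing the outer threshold, as in your Step~1, is the extra device the paper uses only for Theorem~\ref{thm:below_1_6}.
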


\begin{proof}
Fix a threshold \(\delta\in[0,1)\) for the analysis. If there is a vertex \(v\in\hat V\setminus\{r\}\) with \(y_v^*\ge\delta\), then the vertices split off for this value of \(\delta\) are the same as those split off for the smallest value \(y_v^*\) that is at least \(\delta\), and this threshold is considered by Algorithm~\ref{alg:best_of_many}. If no vertex \(v\in\hat V\setminus\{r\}\) satisfies \(y_v^*\ge\delta\), then no nontrivial edge-profit core is obtained; the corresponding solution is the trivial rooted closed walk, which is included in Algorithm~\ref{alg:best_of_many} because \(W_A\) is initialized as the rooted closed walk that stays at \(r\). Thus every value of \(\delta\) used in the analysis either gives the same splitting result as a threshold considered by the algorithm, or gives only the trivial rooted closed walk already included by the algorithm.

Let \((\tilde x,\tilde y)\) be the pair obtained from \((x^*,y^*)\) after the complete splitting operations for this threshold \(\delta\). Apply Theorem~\ref{thm:pcrpp_treedecomp} to \((\tilde x,\tilde y)\), and let \(\mathcal T\) be the resulting set of trees with probability vector \(\lambda\). Fix \(\kappa\in[\delta,1]\). In the analysis, following \cite{Blauth2026}, choose \(T\in\mathcal T\) according to \(\lambda\), and then choose \(\gamma\) independently according to
\[
\Pr[\gamma\le t]=
\begin{cases}
0, & t<\delta,\\[2mm]
\dfrac{3-\delta-\kappa}{3-\delta-t}, & \delta\le t\le\kappa,\\[3mm]
1, & t>\kappa .
\end{cases}
\]

For a fixed tree \(T\), the core \(\operatorname{core}_{\tilde x}(T,\gamma)\) changes only when \(\gamma\) crosses a value \(\tilde x_e\) of an edge \(e\in E(T)\cap\hat E^+\). Therefore, by the same argument as for the outer threshold \(\delta\), every value of \(\gamma\) used in the analysis is represented either by an inner threshold considered by Algorithm~\ref{alg:best_of_many} or by the trivial rooted closed walk already included by the algorithm.

We first bound the expected length of \(W_{T,\gamma}\). For a fixed \(T\), use the notation of Lemma~\ref{lem:qjoin_cost_bound}: let \(\eta_1>\eta_2>\cdots>\eta_q\) be the distinct positive values among \(\{\tilde x_e:e\in\hat E^+\}\), set \(T_i=\operatorname{core}_{\tilde x}(T,\eta_i)\) for \(i\in\{1,\ldots,q\}\), and let \(T_0\) be the trivial tree consisting only of the root. The restoration bound used in Lemma~\ref{lem:tree_cost_bound}, together with Lemma~\ref{lem:qjoin_cost_bound}, gives, for fixed \(T\) and \(\gamma\),
\[
w(W_{T,\gamma}) \le w(H_{T,\gamma})+w(J_{T,\gamma})
\le
\frac{1}{3-\delta}\sum_{e\in\hat E}\hat w_e\tilde x_e
+
\sum_{i:\eta_i\ge\gamma}
\left(2-\frac{2\eta_i}{3-\delta}\right)
\hat w(E(T_i)\setminus E(T_{i-1})) .
\]
Taking expectation over \(\gamma\), with \(T\) fixed, gives
\[
\mathbb E\left[
\sum_{i:\eta_i\ge\gamma}
\left(2-\frac{2\eta_i}{3-\delta}\right)
\hat w(E(T_i)\setminus E(T_{i-1}))
\right]
=
\sum_{i=1}^q
\Pr[\gamma\le\eta_i]
\left(2-\frac{2\eta_i}{3-\delta}\right)
\hat w(E(T_i)\setminus E(T_{i-1})) .
\]
If \(\eta_i\in[\delta,\kappa]\), then
\[
\Pr[\gamma\le\eta_i]
\left(2-\frac{2\eta_i}{3-\delta}\right)
=
\frac{3-\delta-\kappa}{3-\delta-\eta_i}
\left(2-\frac{2\eta_i}{3-\delta}\right)
=
\frac{6-2\delta-2\kappa}{3-\delta}.
\]
If \(\eta_i\ge\kappa\), then
\[
\Pr[\gamma\le\eta_i]
\left(2-\frac{2\eta_i}{3-\delta}\right)
=
2-\frac{2\eta_i}{3-\delta}
\le
\frac{6-2\delta-2\kappa}{3-\delta},
\]
and if \(\eta_i<\delta\), then \(\Pr[\gamma\le\eta_i]=0\). Since the sets \(E(T_i)\setminus E(T_{i-1})\) are pairwise disjoint subsets of \(E(T)\), for every fixed \(T\) we obtain
\[
\mathbb E[w(H_{T,\gamma})+w(J_{T,\gamma})]
\le
\frac{1}{3-\delta}\sum_{e\in\hat E}\hat w_e\tilde x_e
+
\frac{6-2\delta-2\kappa}{3-\delta}
\hat w(T) ,
\]
where the expectation is over the choice of \(\gamma\). Taking expectation over \(T\), using Lemma~\ref{lem:tree_cost_bound}, and then using Lemma~\ref{lem:low_y_splitting_pcrpp}, gives
\[
\mathbb E[w(W_{T,\gamma})]
\le
\mathbb E[w(H_{T,\gamma})+w(J_{T,\gamma})]
\le
\frac{7-2\delta-2\kappa}{3-\delta}
\sum_{e\in\hat E}\hat w_e \tilde x_e
\le
\frac{7-2\delta-2\kappa}{3-\delta}
\sum_{e\in\hat E}\hat w_e x_e^* .
\]

We next bound the expected uncollected profit. If a positive-profit edge \(e\in\hat E^+\) belongs to \(E(\operatorname{core}_{\tilde x}(T,\gamma))\), then the restoration step places the corresponding original positive-profit edge in \(H_{T,\gamma}\), and therefore \(W_{T,\gamma}\) collects its profit. Hence the uncollected profit of \(W_{T,\gamma}\) is at most $\sum_{e\in\hat E^+\setminus E(\operatorname{core}_{\tilde x}(T,\gamma))} \hat p_e$.

Fix \(e\in\hat E^+\). If \(x_e^*<\delta\), then we use the trivial bound
\[
\Pr\!\left[
e\notin E(\operatorname{core}_{\tilde x}(T,\gamma))
\right]
\le
1
\le
\frac{1-x_e^*}{1-\delta}.
\]
Now suppose that \(x_e^*\ge\delta\). Then \(\tilde x_e=x_e^*\). By Theorem~\ref{thm:pcrpp_treedecomp}, the sampled tree contains \(e\) with probability \(x_e^*\). Moreover, if \(e\in E(T)\) and \(\gamma\le x_e^*\), then \(e\in E(\operatorname{core}_{\tilde x}(T,\gamma))\). Therefore, if \(x_e^*\in[\delta,\kappa]\), then
\[
\Pr\!\left[
e\notin E(\operatorname{core}_{\tilde x}(T,\gamma))
\right]
\le
1-
x_e^*\frac{3-\delta-\kappa}{3-\delta-x_e^*}
\le
\frac{3-\delta}{3-\delta-\kappa}(1-x_e^*).
\]
If \(x_e^*\ge\kappa\), then
\[
\Pr\!\left[
e\notin E(\operatorname{core}_{\tilde x}(T,\gamma))
\right]
\le
1-x_e^*
\le
\frac{3-\delta}{3-\delta-\kappa}(1-x_e^*).
\]
It follows that
\[
\mathbb E\left[
\sum_{e\in\hat E^+\setminus E(\operatorname{core}_{\tilde x}(T,\gamma))}
\hat p_e
\right]
\le
\max\left\{
\frac{1}{1-\delta},
\frac{3-\delta}{3-\delta-\kappa}
\right\}
\sum_{e\in\hat E^+}\hat p_e(1-x_e^*).
\]

The objective value \(A_{T,\gamma}\) of \(W_{T,\gamma}\) satisfies
\[
A_{T,\gamma}
\le
 w(W_{T,\gamma})
+
\sum_{e\in\hat E^+\setminus E(\operatorname{core}_{\tilde x}(T,\gamma))}
\hat p_e .
\]
Combining the expected length and the expected uncollected profit, we obtain
\[
\mathbb E[A_{T,\gamma}]
\le
\alpha(\delta,\kappa)
\left(
\sum_{e\in\hat E}\hat w_e x_e^*
+
\sum_{e\in\hat E^+}\hat p_e(1-x_e^*)
\right),
\]
where
\[
\alpha(\delta,\kappa)=
\max\left\{
\frac{7-2\delta-2\kappa}{3-\delta},
\frac{3-\delta}{3-\delta-\kappa},
\frac{1}{1-\delta}
\right\}.
\]
Set \(\kappa=1\) and \(\delta=(3-\sqrt5)/2\). Then
\[
\frac{7-2\delta-2\kappa}{3-\delta}
=
\frac{3-\delta}{3-\delta-\kappa}
=
\frac{1}{1-\delta}
=
\frac{1+\sqrt5}{2}.
\]
Thus \(\mathbb E[A_{T,\gamma}]\le (1+\sqrt5)\mathrm{OPT}_{\mathrm{LP}}/2\).

The random choices used in the analysis induce a probability distribution supported on candidate solutions generated by Algorithm~\ref{alg:best_of_many}. Since the algorithm returns a candidate solution with minimum objective value among the generated candidate solutions, \(\mathrm{ALG}\le \mathbb E[A_{T,\gamma}]\). By Theorem~\ref{thm:lp_bound}, \(\mathrm{OPT}_{\mathrm{LP}}\le\mathrm{OPT}\), and therefore \(\mathrm{ALG}\le (1+\sqrt5)\mathrm{OPT}/2\).
\end{proof}

In the preceding analysis, the outer threshold was fixed at \(\delta=(3-\sqrt5)/2\). We now refine the analysis by also randomizing the outer threshold \(\delta\).

\begin{theorem}\label{thm:below_1_6}
Algorithm~\ref{alg:best_of_many} is a better-than-\(1.6\)-approximation algorithm for the minimization version of the PCRPP.
\end{theorem}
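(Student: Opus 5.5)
The plan is to randomize the outer threshold \(\delta\) according to a suitable distribution \(\mu\) on \([0,1)\), while keeping the inner sampling of \(T\) and \(\gamma\) from the proof of Theorem~\ref{thm:golden_ratio}. The parameter \(\kappa=\kappa(\delta)\) may be chosen as a function of \(\delta\). The key steps are as follows.

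\textbf{Step 1: reuse the conditional bounds.} The argument of Theorem~\ref{thm:golden_ratio} goes through verbatim for each fixed \(\delta\). Conditional on \(\delta\), it gives
\[
\mathbb E[\text{length}\mid\delta]\le \frac{7-2\delta-2\kappa(\delta)}{3-\delta}\sum_{e\in\hat E}\hat w_e x^*_e .
\]
It also gives a per-edge bound on the miss probability of each positive-profit edge \(e\). Here we should not use the crude maximum, but the exact expression:
\[
\Pr[e\ \text{missed}\mid\delta]\le
\begin{cases}
1, & x_e^*<\delta,\\
1-x_e^*\frac{3-\delta-\kappa}{3-\delta-x_e^*}, & \delta\le x_e^*\le\kappa,\\
1-x_e^*, & x_e^*>\kappa.
\end{cases}
\]
The gain from randomizing comes from keeping the case distinction \(x_e^*<\delta\) versus \(x_e^*\ge\delta\) inside the expectation over \(\delta\). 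This avoids paying \(1/(1-\delta)\) uniformly.

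\textbf{Step 2: average over \(\delta\).} Taking expectation over \(\delta\sim\mu\), the length coefficient becomes
\[
L(\mu)=\int \frac{7-2\delta-2\kappa(\delta)}{3-\delta}\,d\mu(\delta).
\]
For each \(a=x_e^*\in[0,1]\), the profit coefficient becomes
\[
P_\mu(a)=\frac{1}{1-a}\int \Pr[e\ \text{missed}\mid\delta]\,d\mu(\delta),
\]
where the conditional miss probability is evaluated at \(x_e^*=a\). The algorithm's output then satisfies
\[
\mathrm{ALG}\le \max\Bigl\{L(\mu),\ \sup_{a\in[0,1)}P_\mu(a)\Bigr\}\,\mathrm{OPT}_{\mathrm{LP}}.
\]
This holds because every sampled triple \((\delta,T,\gamma)\) corresponds to a candidate solution enumerated by the algorithm, exactly as argued before. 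Combined with Theorem~\ref{thm:lp_bound}, this yields the stated ratio provided the max is \(<1.6\). The case \(a=1\) is trivial, since then \(e\) is always collected when \(\delta\le 1\).

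\textbf{Step 3: choose the distribution.} Following \cite{Blauth2026}, I would take \(\mu\) to be a distribution on an interval \([\delta_0,\delta_1]\) around the golden-ratio point \((3-\sqrt5)/2\approx0.382\), with \(\kappa\equiv1\) or \(\kappa\) slightly below 1. The density would be piecewise, for example an exponential-type density \(\propto 1/(1-\delta)^{c}\), chosen so that \(P_\mu(a)\) is roughly flat in \(a\). Since the conditional miss probabilities here coincide formally with the PCTSP vertex-miss bounds in \cite{Blauth2026} (with \(y_v\) replaced by \(x_e^*\)), their distribution and numerical verification should transfer unchanged. This matches the remark that our guarantee equals theirs.

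\textbf{Main obstacle.} The hard part is Step 3: certifying that
\[
\sup_{a\in[0,1)}P_\mu(a)<1.6 \quad\text{and}\quad L(\mu)<1.6.
\]
The function \(P_\mu\) is a piecewise integral with a breakpoint at \(\delta=a\), and the bound must hold uniformly in \(a\), including near \(a\to1\). I would first try to import the explicit distribution and case analysis from \cite{Blauth2026}, then check the identity of the bounds. Failing that, I would discretize \(\mu\) to finitely many atoms and verify the finite maximum by a bounded-derivative argument on a grid of \(a\) values.
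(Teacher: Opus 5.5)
Your Steps 1 and 2 match the paper's framework: fixed-$\delta$ bounds from Theorem~\ref{thm:golden_ratio}, the split $x_e^*<\delta$ versus $x_e^*\ge\delta$ kept inside the $\delta$-expectation, and every sampled triple represented by an enumerated candidate. The gap is that Step 3 is the whole content of this theorem and you leave it open. No distribution is fixed and nothing below $1.6$ is certified, so as written this is a plan, not a proof.

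The concrete guidance you give for Step 3 is also partly wrong. Of your two options for $\kappa$, only the second can work.

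With $\kappa\equiv 1$, your own bounds can never certify anything below the golden ratio. Write $s=1/(3-\delta)$.
\begin{itemize}
\item The conditional length coefficient is $2-s$.
\item For every $\delta$ and every $a<1$, the conditional miss ratio is at least $(3-\delta)/(3-\delta-a)$. There is equality if $\delta\le a$; if $\delta>a$ the ratio $1/(1-a)$ is larger. As $a\to1^-$ this lower bound increases to $(3-\delta)/(2-\delta)=1/(1-s)$.
\item By Jensen, $\sup_a P_\mu(a)\ge 1/(1-\bar s)$ with $\bar s=\mathbb E_\mu[s]$.
\item Hence $\max\{L(\mu),\sup_a P_\mu(a)\}\ge\max\{2-\bar s,\,1/(1-\bar s)\}\ge(1+\sqrt5)/2$ for every $\mu$.
\end{itemize}

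The missing idea is that a fixed $\kappa<1$ is essential. Edges with $x_e^*\ge\kappa$ then have ratio $1$. Moreover, the exact conditional ratio $\frac{(3-\delta)(1-a)-a(1-\kappa)}{(1-a)(3-\delta-a)}$ falls back to $1$ at $a=\kappa$, so the worst edge value is interior (about $0.948$ in the paper) rather than $a\to1$.

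A distribution concentrated near $(3-\sqrt5)/2$ also fails the length term. For every $\delta\le 1/2$ we have $(7-2\delta-2\kappa)/(3-\delta)\ge(5-2\delta)/(3-\delta)\ge 1.6$, so $\mu$ must put substantial mass above $1/2$. At the same time, $P_\mu(a)\ge 1+\mu(\{\delta>a\})\,a/(1-a)$, which forces the upper tail of $\mu$ to be small.

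The paper fixes $\kappa=0.99678328$ and samples $\delta\in[\kappa_0,\kappa]$, with $\kappa_0=0.36621005$, from the density $\nu(3-\delta)(\kappa-\delta)^\beta$, $\beta=1.98094420$. This density is decreasing and vanishes at $\kappa$, the opposite shape from the $(1-\delta)^{-c}$ profile you suggest. The proof then runs as follows.
\begin{itemize}
\item The factor $3-\delta$ cancels the denominator of the length coefficient, which gives the closed form $g(\kappa,\kappa_0)<1.5987$.
\item Edges with $x_e^*<\kappa_0$ cost at most $1/(1-\kappa_0)<1.578$.
\item For $x_e^*\in[\kappa_0,\kappa]$, the collection probability is lower-bounded using concavity of $\phi_\xi(\delta)=(3-\delta-\kappa)(3-\delta)/(3-\delta-\xi)$ in $\delta$, which yields the explicit $h_\xi$.
\item $\max_\xi h_\xi/(1-\xi)<1.5988$ is certified rigorously by a grid of mesh $10^{-8}$ plus a derivative bound (Lemma~\ref{lem:boundanalysis}).
\end{itemize}
Your fallback of importing the distribution of Blauth et al.\ is in effect what the paper does. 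But that explicit choice and its numerical verification are exactly what must be supplied for the theorem to be proved.
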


\begin{proof}
Let \(0\le\kappa_0<\kappa\le1\) and \(\beta>0\) be parameters to be chosen later. In the analysis, choose \(\delta\in[\kappa_0,\kappa]\) with density
\[
f(\delta)=\nu(3-\delta)(\kappa-\delta)^\beta,
\qquad
\nu^{-1}=
\frac{(3-\kappa)(\kappa-\kappa_0)^{\beta+1}}{\beta+1}
+
\frac{(\kappa-\kappa_0)^{\beta+2}}{\beta+2}.
\]
For a fixed value of \(\delta\), the analysis in the proof of Theorem~\ref{thm:golden_ratio} gives
\[
\mathbb E[w(W_{T,\gamma})]\le \mathbb E\!\left[w(H_{T,\gamma})+w(J_{T,\gamma})\right]
\le
\frac{7-2\delta-2\kappa}{3-\delta}
\sum_{e\in\hat E}\hat w_e x_e^* .
\]
Taking expectation over \(\delta\) gives
\[
\mathbb E[w(W_{T,\gamma})]
\le
g(\kappa,\kappa_0)
\sum_{e\in\hat E}\hat w_e x_e^* ,
\]
where
\[
g(\kappa,\kappa_0)=
\nu\left(
\frac{(7-4\kappa)(\kappa-\kappa_0)^{\beta+1}}{\beta+1}
+
\frac{2(\kappa-\kappa_0)^{\beta+2}}{\beta+2}
\right).
\]

We next bound the expected uncollected profit as in Theorem~\ref{thm:golden_ratio}. Fix \(e\in\hat E^+\). If \(x_e^*<\kappa_0\), then the trivial bound gives
\[
\Pr\!\left[
e\notin E(\operatorname{core}_{\tilde x}(T,\gamma))
\right]
\le
1
\le
\frac{1-x_e^*}{1-\kappa_0}.
\]
If \(x_e^*\ge\kappa\), then for every sampled value of \(\delta\) we have \(\delta\le\kappa\le x_e^*\), and hence Lemma~\ref{lem:low_y_splitting_pcrpp} gives \(\tilde x_e=x_e^*\). Also \(\gamma\le\kappa\le x_e^*\). Therefore, whenever \(e\in E(T)\), we have \(e\in E(\operatorname{core}_{\tilde x}(T,\gamma))\). Since Theorem~\ref{thm:pcrpp_treedecomp} gives \(\Pr[e\in E(T)]=x_e^*\), it follows that
\[
\Pr\!\left[
e\notin E(\operatorname{core}_{\tilde x}(T,\gamma))
\right]
\le
1-x_e^* .
\]

It remains to consider \(x_e^*\in[\kappa_0,\kappa]\). For a sampled value \(\delta\le x_e^*\), Lemma~\ref{lem:low_y_splitting_pcrpp} gives \(\tilde x_e=x_e^*\). Hence, whenever \(e\in E(T)\) and \(\gamma\le x_e^*\), the edge \(e\) belongs to \(\operatorname{core}_{\tilde x}(T,\gamma)\). For this fixed value of \(\delta\), Theorem~\ref{thm:pcrpp_treedecomp} gives \(\Pr[e\in E(T)]=x_e^*\), and the distribution of \(\gamma\) gives $\Pr[\gamma\le x_e^*] =(3-\delta-\kappa)/(3-\delta-x_e^*)$. Integrating over sampled values of \(\kappa_0\le \delta\le x_e^*\), we obtain
\[
\Pr\!\left[
e\in E(\operatorname{core}_{\tilde x}(T,\gamma))
\right]
\ge
x_e^*\nu
\int_{\kappa_0}^{x_e^*}
(\kappa-\delta)^\beta
\frac{(3-\delta)(3-\delta-\kappa)}{3-\delta-x_e^*}
\,d\delta .
\]

For \(\xi\in[\kappa_0,\kappa]\), define
\[
\phi_\xi(\delta)=
\frac{(3-\delta-\kappa)(3-\delta)}{3-\delta-\xi}.
\]
For every \(\xi\le\kappa\), the function \(\phi_\xi\) is concave on \([\kappa_0,\kappa]\), since
\[
\phi_\xi(\delta)=3-\delta-\kappa+\xi
-\frac{\xi(\kappa-\xi)}{3-\delta-\xi},
\qquad
\phi_\xi''(\delta)=
-\frac{2\xi(\kappa-\xi)}{(3-\delta-\xi)^3}\le0.
\]
Hence, for every \(\delta\in[\kappa_0,\kappa]\),
\[
\phi_\xi(\delta)
\ge
\frac{\kappa-\delta}{\kappa-\kappa_0}\phi_\xi(\kappa_0)
+
\frac{\delta-\kappa_0}{\kappa-\kappa_0}\phi_\xi(\kappa).
\]
Applying this inequality with \(\xi=x_e^*\) and evaluating the resulting integrals gives
\[
\Pr\!\left[
e\notin E(\operatorname{core}_{\tilde x}(T,\gamma))
\right]
\le
h_{x_e^*}(\kappa,\kappa_0),
\]
where, for \(\xi\in[\kappa_0,\kappa]\),
\[
\begin{aligned}
h_\xi(\kappa,\kappa_0)=
1-\frac{\xi\nu}{\kappa-\kappa_0}
\Bigg[
&(\phi_\xi(\kappa_0)-\phi_\xi(\kappa))
\frac{(\kappa-\kappa_0)^{\beta+2}-(\kappa-\xi)^{\beta+2}}{\beta+2}\\
&+\phi_\xi(\kappa)(\kappa-\kappa_0)
\frac{(\kappa-\kappa_0)^{\beta+1}-(\kappa-\xi)^{\beta+1}}{\beta+1}
\Bigg].
\end{aligned}
\]

Combining the three cases, and using \(1/(1-\kappa_0)\ge1\), for every \(e\in\hat E^+\) we have
\[
\Pr\!\left[
e\notin E(\operatorname{core}_{\tilde x}(T,\gamma))
\right]
\le
\max\left\{
\frac{1}{1-\kappa_0},
\max_{\xi\in[\kappa_0,\kappa]}
\frac{h_\xi(\kappa,\kappa_0)}{1-\xi}
\right\}
(1-x_e^*).
\]
Therefore
\[
\mathbb E\left[
\sum_{e\in\hat E^+\setminus E(\operatorname{core}_{\tilde x}(T,\gamma))}
\hat p_e
\right]
\le
\max\left\{
\frac{1}{1-\kappa_0},
\max_{\xi\in[\kappa_0,\kappa]}
\frac{h_\xi(\kappa,\kappa_0)}{1-\xi}
\right\}
\sum_{e\in\hat E^+}\hat p_e(1-x_e^*).
\]

By the definition of \(A_{T,\gamma}\), $A_{T,\gamma}
\le
w(W_{T,\gamma})
+
\sum_{e\in\hat E^+\setminus E(\operatorname{core}_{\tilde x}(T,\gamma))}
\hat p_e$. Combining the expected length and the expected uncollected profit, we obtain
\[
\mathbb E[A_{T,\gamma}]
\le
\alpha(\kappa_0,\kappa,\beta)
\left(
\sum_{e\in\hat E}\hat w_e x_e^*
+
\sum_{e\in\hat E^+}\hat p_e(1-x_e^*)
\right),
\]
where
\[
\alpha(\kappa_0,\kappa,\beta)
=
\max\left\{
g(\kappa,\kappa_0),
\frac{1}{1-\kappa_0},
\max_{\xi\in[\kappa_0,\kappa]}
\frac{h_\xi(\kappa,\kappa_0)}{1-\xi}
\right\}.
\]

A numerical search for small values of \(\alpha(\kappa_0,\kappa,\beta)\) gives the following parameter choice:
\[
\kappa_0=0.36621005,\qquad
\kappa=0.99678328,\qquad
\beta=1.98094420.
\]
For these values, direct computation gives the upper bounds
\[
g(\kappa,\kappa_0)<1.59862255,
\qquad
\frac{1}{1-\kappa_0}<1.57780982 .
\]
By Lemma~\ref{lem:boundanalysis} in Appendix~\ref{app:boundanalysis},
\[
\max_{\xi\in[\kappa_0,\kappa]}
\frac{h_\xi(\kappa,\kappa_0)}{1-\xi}
<1.59872206.
\]
This proves that $\alpha(0.36621005,0.99678328,1.98094420)<1.6$. Consequently,
\[
\mathbb E[A_{T,\gamma}]
\le
\alpha(0.36621005,0.99678328,1.98094420)\,
\mathrm{OPT}_{\mathrm{LP}} .
\]

As in the proof of Theorem~\ref{thm:golden_ratio}, the random choices used in the analysis induce a probability distribution supported on candidate solutions generated by Algorithm~\ref{alg:best_of_many}. Since the algorithm returns a candidate solution with minimum objective value among the generated candidate solutions, \(\mathrm{ALG}\le\mathbb E[A_{T,\gamma}]\). By Theorem~\ref{thm:lp_bound}, \(\mathrm{OPT}_{\mathrm{LP}}\le\mathrm{OPT}\), and therefore
\[
\mathrm{ALG}
\le
\alpha(0.36621005,0.99678328,1.98094420)\,\mathrm{OPT}.
\]
Since \(\alpha(0.36621005,0.99678328,1.98094420)<1.6\), this gives a polynomial time approximation algorithm for the PCRPP with an approximation ratio strictly smaller than \(1.6\).
\end{proof}

\section{Computational experiments}
\label{sec:experiments}

This section reports a computational evaluation of Algorithm~\ref{alg:best_of_many}. Since our focus is on approximation algorithms with worst-case guarantees, we compare Algorithm~\ref{alg:best_of_many} with the reported exact optima on the benchmark set and with another approximation algorithm that also has a worst-case guarantee. For the latter comparison, we use the PCTSP-reduction algorithm analyzed in Appendix~\ref{app:pctsp-reduction}. The experiments evaluate the quality of the solutions returned by Algorithm~\ref{alg:best_of_many} and its performance relative to the PCTSP-reduction algorithm.

\subsection{Benchmark instances and objective conversion}
\label{sec:benchmark}
We use the 118 benchmark instances of \citet{Araoz2009}. These instances are grouped into five families: ALBAIDA, D, G, P, and R. The exact values reported in \citet{Araoz2009} are for the maximization objective \eqref{eq:obj_araoz}, whereas all approximation guarantees in this paper are stated for the minimization objective \eqref{eq:obj_min}. Hence, for each instance \(I\), we convert the reported optimum by
\[
 \mathrm{OPT}(I)
    =
 \sum_{e\in E(I)} p_e
    -
 \mathrm{OPT}_{\max}(I),
\]
where \(\mathrm{OPT}_{\max}(I)\) is the reported optimal value for \eqref{eq:obj_araoz}.

We denote by \(\mathrm{ALG}(I)\) the value returned by Algorithm~\ref{alg:best_of_many}. We denote by \(\mathrm{OPT}_{\mathrm{LP}}(I)\) the optimum value of the \textsc{PCRPP-LP} used in this paper. It is important to emphasize that \(\mathrm{OPT}_{\mathrm{LP}}(I)\) is not the value of a linear programming relaxation written directly on the original input graph. It is the optimum value of the canonical \textsc{PCRPP-LP} on the preprocessed complete graph constructed in Section~\ref{sec:preprocessing}. This is the lower bound on the optimum value used in the analysis of the approximation algorithm.

For each instance, we report the final optimality gap
\[
 \mathrm{ALG~gap}(I)
    =
    100\cdot
 \frac{\mathrm{ALG}(I)-\mathrm{OPT}(I)}
         {\mathrm{OPT}(I)}
\]
and the LP gap
\[
 \mathrm{LP~gap}(I)
    =
    100\cdot
 \frac{\mathrm{OPT}(I)-\mathrm{OPT}_{\mathrm{LP}}(I)}
         {\mathrm{OPT}(I)}.
\]
The first quantity measures the quality of the solution returned by Algorithm~\ref{alg:best_of_many}, against the reported exact optimum. The second measures the strength of the \textsc{PCRPP-LP} lower bound on the preprocessed complete graph.

The detailed per-instance results are reported in Appendix~\ref{app:computational-results}. That table includes, for every instance, the numbers of vertices and edges, the converted optimum \(\mathrm{OPT}\), the value \(\mathrm{ALG}\) returned by Algorithm~\ref{alg:best_of_many}, the value \(\mathrm{RED}\) returned by the PCTSP-reduction algorithm in Appendix~\ref{app:pctsp-reduction}, the LP value \(\mathrm{OPT}_{\mathrm{LP}}\), the three relative gaps, and the total running times of the two algorithms. In the appendix table, the better value between \(\mathrm{ALG}\) and \(\mathrm{RED}\) is shown in bold; if the two values are equal, both are shown in bold.

\subsection{Implementation details}
\label{sec:implementation}
All experiments were run in Python on a computer with an AMD Ryzen 7 5800H CPU and 16 GB RAM. In the implementation of Algorithm~\ref{alg:best_of_many}, we solve the linear programming relaxation by a cutting-plane method, using HiGHS through its Python interface. The cut constraints~\eqref{eq:pcrpp-ip-cut} in \textsc{PCRPP-LP} are exponential in number, and the preprocessed complete graph may contain many edge variables. We therefore start with a linear program containing the non-cut constraints on an initial set of edge variables, and add both violated cut constraints and missing edge variables dynamically. After solving the current linear program, we separate the cut constraints by min-cut computations and check omitted edge variables by reduced-cost pricing. Whenever violated constraints of the form~\eqref{eq:pcrpp-ip-cut} are found, or omitted edge variables with negative reduced cost are found, we add them to the current linear program and solve it again. We repeat this process until no violated cut constraint remains and no omitted edge variable has negative reduced cost, up to a numerical tolerance. Since separation and pricing are both performed after each reoptimization, the final current linear program satisfies all cut constraints and no omitted edge variable can improve the objective. The resulting solution is therefore an optimal solution to \textsc{PCRPP-LP} up to a numerical tolerance.

For the complete splitting routines, we use a constructive procedure based on min-cut computations. As in the proof of Lemma~\ref{lem:low_y_splitting_pcrpp}, the feasible splitting operations in our setting are non-degenerate. Indeed, when splitting off a vertex \(v\), a degenerate operation would choose a single incident edge \(vu\) and decrease its value. Since \(u\) is not being split off, \(y_u\) is kept fixed, while the size of the singleton cut \(\delta(u)\), and hence the required \(r\)-to-\(u\) cut size, decreases. Thus such an operation cannot preserve the required \(r\)-to-\(u\) cut size. The implementation therefore scans choices of two incident edges \(vu\) and \(vw\) that have positive value in the current edge vector. For a fixed choice of \(vu\) and \(vw\), the largest admissible splitting amount \(\varepsilon\) is computed by min-cut computations so that the required cut sizes between \(r\) and all vertices other than \(r\) and \(v\) are preserved. This is sufficient for maintaining feasibility of \(\textsc{PCRPP-LP}\), although the general splitting-off statement can be formulated with preservation of all pairwise minimum cut sizes among vertices other than \(v\). Once a choice with positive admissible amount \(\varepsilon\) is found, the split is performed immediately. The operation decreases the current values on \(vu\) and \(vw\) by \(\varepsilon\), and increases the current value on \(uw\) by \(\varepsilon\). This process is repeated until the current degree \(x(\delta(v))\) becomes zero up to a numerical tolerance.

Another implementation detail concerns the tree decompositions for different outer thresholds \(\delta\). In Algorithm~\ref{alg:best_of_many}, after an optimal solution \((x^*,y^*)\) of \textsc{PCRPP-LP} is obtained, each outer threshold \(\delta\) determines the vertices with \(y_v^*<\delta\) that are split off, resulting in the pair \((\tilde x,\tilde y)\). For this fixed threshold, Theorem~\ref{thm:pcrpp_treedecomp} is applied to \((\tilde x,\tilde y)\). In the proof of that theorem, the auxiliary complete graph \(G^A\) is constructed, and the vectors \((\bar x,\bar y)\) on \(G^A\) are defined from \((\tilde x,\tilde y)\) by the formulas in that proof. Then complete splittings are performed in \(G^A\) at all vertices except \(r\) and \(r'\), where \(e_0=rr'\) is the auxiliary edge, in nondecreasing order of the corresponding values \(y_v^*\). The resulting splitting-off operations are undone in reverse order to construct trees in \(G^A\), following the proof of Lemma~5 in \citet{Blauth2026}. These trees are finally converted into trees in \(\hat G\). In the implementation, we do not apply the edge-profit tree decomposition independently for every threshold \(\delta\). Instead, before fixing \(\delta\), we construct \(G^A\) and define \((\bar x,\bar y)\) from \((x^*,y^*)\) by the same formulas used in the proof of Theorem~\ref{thm:pcrpp_treedecomp}. We then perform complete splittings in \(G^A\) at all vertices except \(r\) and \(r'\), in the same nondecreasing order of the values \(y_v^*\), and record the resulting splitting-off operations once. For a fixed threshold \(\delta\), we undo this recorded sequence in reverse order and stop at the position determined by \(\delta\). We then construct the corresponding trees in \(G^A\), and convert these trees into trees in \(\hat G\) as in the proof of Theorem~\ref{thm:pcrpp_treedecomp}. Since different choices of \(\delta\) only determine where this same recorded reverse-order process is stopped, it is easy to verify from the construction that the resulting trees in \(\hat G\) are the same as those obtained by first splitting off the vertices with \(y_v^*<\delta\) to obtain \((\tilde x,\tilde y)\) and then applying Theorem~\ref{thm:pcrpp_treedecomp} to this pair \((\tilde x,\tilde y)\). Thus, this implementation avoids repeating the same splitting-off work for every threshold \(\delta\). It only improves the running time and does not change the candidate solutions considered by Algorithm~\ref{alg:best_of_many}.

The min-cut computations used for cut separation and for complete splitting are implemented by a standard maximum-flow routine. For parity correction, we solve the minimum-length \(Q\)-join problem via a minimum-weight perfect matching computation, using NetworkX's matching routine.

\subsection{Solution quality}
\label{sec:solution_quality}
Table~\ref{tab:quality-summary} summarizes the solution quality and the gap of \textsc{PCRPP-LP}. Algorithm~\ref{alg:best_of_many} finds an optimal solution on 34 of the 118 instances. The optimality gap of Algorithm~\ref{alg:best_of_many} is at most \(5\%\) on 81 instances and at most \(10\%\) on 113 instances. The average optimality gap is \(3.39\%\), and the maximum optimality gap is \(12.12\%\), attained on instance G22. The average gap of \textsc{PCRPP-LP} is \(2.09\%\), and the maximum gap of \textsc{PCRPP-LP} is \(7.17\%\), attained on instance G33.

\begin{table}
\centering
\caption{Solution quality and \(\mathrm{LP~gap}\). The columns ``Opt.'', ``\(\le 5\%\)'', and ``\(\le 10\%\)'' count instances according to \(\mathrm{ALG~gap}\). All gap columns are percentages.}
\label{tab:quality-summary}
\footnotesize
\setlength{\tabcolsep}{3.5pt}
\begin{tabular}{lrrrrrrrr}
\toprule
Family
& \# inst.
& Opt.
& \(\le 5\%\)
& \(\le 10\%\)
& Avg. \(\mathrm{ALG~gap}\)
& Max \(\mathrm{ALG~gap}\)
& Avg. \(\mathrm{LP~gap}\)
& Max \(\mathrm{LP~gap}\) \\
\midrule
ALBAIDA & 2   & 0  & 1  & 2   & 5.65 & 6.94  & 4.15 & 4.70 \\
D       & 36  & 6  & 21 & 36  & 4.23 & 9.67  & 2.45 & 6.07 \\
G       & 36  & 15 & 20 & 31  & 4.22 & 12.12 & 2.45 & 7.17 \\
P       & 24  & 7  & 19 & 24  & 2.67 & 8.42  & 2.40 & 5.77 \\
R       & 20  & 6  & 20 & 20  & 1.01 & 3.44  & 0.19 & 0.66 \\
\midrule
All     & 118 & 34 & 81 & 113 & 3.39 & 12.12 & 2.09 & 7.17 \\
\bottomrule
\end{tabular}
\end{table}

\subsection{Comparison with an algorithm with a worst-case guarantee}
\label{sec:comparison}
Since Algorithm~\ref{alg:best_of_many} is an approximation algorithm with a worst-case guarantee, we compare it with another approximation algorithm that also has a worst-case guarantee. For this purpose, we use the PCTSP-reduction algorithm obtained from the reduction described in Section~\ref{sec:contributions}. In the experiments, this reduction algorithm uses the better-than-\(1.6\)-approximation algorithm of \citet{Blauth2026} for the reduced PCTSP instance. By the analysis in Appendix~\ref{app:pctsp-reduction}, the resulting PCTSP-reduction algorithm has a better-than-\(3.2\)-approximation ratio for the PCRPP. In our implementation of this PCTSP-reduction algorithm, we run the algorithm of \citet{Blauth2026} using the same computational conventions as in Section~\ref{sec:implementation}: the linear programming relaxation, splitting operations, minimum-cut computations, and minimum-length \(Q\)-join computations are handled in the same way, and the tree decomposition is implemented in the same style.

For each instance \(I\), we denote by \(\mathrm{RED}(I)\) the objective value of the walk returned by this PCTSP-reduction algorithm, and define
\[
 \mathrm{RED~gap}(I)
    =
    100\cdot
 \frac{\mathrm{RED}(I)-\mathrm{OPT}(I)}
         {\mathrm{OPT}(I)}.
\]
Thus, \(\mathrm{RED}(I)\) is the value used to evaluate the solution quality of this reduction-based approximation algorithm on the benchmark instances.

Table~\ref{tab:reduction-summary} compares Algorithm~\ref{alg:best_of_many} with the PCTSP-reduction algorithm. The PCTSP-reduction algorithm has an average optimality gap of \(8.90\%\) and a maximum optimality gap of \(26.83\%\), attained on instance G16, while the corresponding values for Algorithm~\ref{alg:best_of_many} are \(3.39\%\) and \(12.12\%\). Algorithm~\ref{alg:best_of_many} returns a strictly smaller objective value on 101 instances. The PCTSP-reduction algorithm returns a strictly smaller objective value on 5 instances, and the two methods tie on 12 instances. These results should be interpreted as a comparison between two approximation algorithms with worst-case guarantees. The PCTSP-reduction algorithm converts edge profits into vertex penalties, whereas Algorithm~\ref{alg:best_of_many} treats edge profits directly. On these benchmark instances, this direct treatment of edge profits gives smaller objective values on most instances.

\begin{table}
\centering
\caption{Comparison with the PCTSP-reduction algorithm. The gap columns are computed according to \(\mathrm{RED~gap}\). The last three columns compare the objective values \(\mathrm{ALG}\) and \(\mathrm{RED}\).}
\label{tab:reduction-summary}
\small
\begin{tabular}{lrrrrrr}
\toprule
Family
& \# inst.
& Avg. RED gap
& Max RED gap
& ALG better
& RED better
& Tie \\
\midrule
ALBAIDA & 2   & 11.30 & 11.79 & 2   & 0 & 0 \\
D       & 36  & 8.48  & 20.81 & 29  & 4 & 3 \\
G       & 36  & 11.04 & 26.83 & 27  & 1 & 8 \\
P       & 24  & 9.22  & 19.05 & 23  & 0 & 1 \\
R       & 20  & 5.19  & 9.43  & 20  & 0 & 0 \\
\midrule
All     & 118 & 8.90  & 26.83 & 101 & 5 & 12 \\
\bottomrule
\end{tabular}
\end{table}

\subsection{Running time}
\label{sec:running_time}
Table~\ref{tab:runtime-summary} reports average and maximum running times. The average total running time of Algorithm~\ref{alg:best_of_many} is \(573.43\) seconds, and its maximum total running time is \(8295.91\) seconds, attained on instance D35. The average total running time of the PCTSP-reduction algorithm is \(287.12\) seconds, and its maximum total running time is \(8159.58\) seconds, also attained on instance D35. For Algorithm~\ref{alg:best_of_many}, the average time spent on solving the linear programming relaxation is \(165.64\) seconds, the average time spent on complete splitting-off is \(403.71\) seconds, and the average remaining time is \(4.08\) seconds. Here ``Other'' means the total running time of Algorithm~\ref{alg:best_of_many} minus its linear programming time and complete splitting-off time. All running times are measured in seconds.

\begin{table}
\centering
\caption{Running-time summary on an AMD Ryzen 7 5800H machine with 16 GB RAM. Times are in seconds.}
\label{tab:runtime-summary}
\footnotesize
\setlength{\tabcolsep}{3.5pt}
\begin{tabular}{lrrrrrrr}
\toprule
Family
& Avg. ALG
& Max ALG
& Avg. lin. prog.
& Avg. split
& Avg. other
& Avg. RED
& Max RED \\
\midrule
ALBAIDA & 2337.32 & 2717.35 & 672.02 & 1655.18 & 10.12 & 771.42 & 1176.25 \\
D       & 1304.74 & 8295.91 & 369.71 & 927.99  & 7.04  & 685.13 & 8159.58 \\
G       & 127.15  & 1198.95 & 17.39  & 108.60  & 1.16  & 35.69  & 305.46 \\
P       & 162.75  & 1291.73 & 27.08  & 134.13  & 1.54  & 54.39  & 578.36 \\
R       & 376.78  & 2193.78 & 180.78 & 189.51  & 6.48  & 254.10 & 1118.60 \\
\midrule
All     & 573.43  & 8295.91 & 165.64 & 403.71  & 4.08  & 287.12 & 8159.58 \\
\bottomrule
\end{tabular}
\end{table}

The running time of Algorithm~\ref{alg:best_of_many} is mainly spent on solving the linear programming relaxation and on complete splitting-off. As explained in Section~\ref{sec:implementation}, the construction of the edge-profit tree decompositions relies on complete splitting-off operations. In our implementation, these operations require a polynomial number of maximum-flow or minimum-cut computations, and this is the dominant component of the running time apart from solving the linear programming relaxation in Table~\ref{tab:runtime-summary}. The remaining time includes the preprocessing that constructs the preprocessed complete graph \(\hat G\), constructing the trees from the recorded splitting-off operations, the pruning step, the relevant \(Q\)-join computations, and the evaluation of candidate solutions. Algorithm~\ref{alg:best_of_many} is slower than the PCTSP-reduction algorithm on average mainly because the two algorithms work on complete graphs of different sizes. For the same original instance, Algorithm~\ref{alg:best_of_many} constructs the preprocessed complete graph \(\hat G\) on the vertex set obtained after the vertex-copying step in Section~\ref{sec:preprocessing}. The PCTSP-reduction algorithm instead constructs the metric complete graph \(G^R\) only on the root and the representative vertices corresponding to positive-profit edges. Thus \(\hat G\) typically has many more vertices and many more edges than \(G^R\). This leads to a larger linear programming relaxation and substantially more splitting-off work, which accounts for the larger running time of Algorithm~\ref{alg:best_of_many}.

Overall, the experiments indicate that Algorithm~\ref{alg:best_of_many} can be implemented on the benchmark instances and that, on most instances, the objective values of the returned solutions are close to the reported exact optima. Compared with the PCTSP-reduction algorithm, which is also an approximation algorithm with a worst-case guarantee, Algorithm~\ref{alg:best_of_many} returns better objective values on most instances. The running-time results show that the main computational work is solving the linear programming relaxation and performing complete splitting-off, while the remaining steps account for only a small fraction of the total running time.

\section{Conclusion}
\label{sec:conclusion}

We study the PCRPP, a variant of the rural postman problem. The original formulation maximizes the collected profit minus the walk length. In this paper, we analyze the corresponding minimization version, whose objective value is the walk length plus the total profit of the edges not traversed by the walk. For the same instance, this minimization version has the same optimal solutions as the original maximization formulation.

Our main result is a polynomial time approximation algorithm with an approximation ratio strictly smaller than \(1.6\) for the minimization version of the PCRPP. The algorithm treats edge profits directly, rather than converting them into vertex penalties as in the PCTSP-reduction approach. It is based on a linear programming relaxation, splitting-off techniques, an edge-profit tree decomposition, pruning, and parity correction. The computational results show that the objective values of the returned walks are close to the reported exact optimum values on most instances. The algorithm also returns better solutions on most instances than the PCTSP-reduction algorithm that uses the PCTSP algorithm of \citet{Blauth2026} after converting edge profits into vertex penalties.

\appendix

\section{The PCTSP-reduction algorithm}
\label{app:pctsp-reduction}

We use the reduced PCTSP instance constructed in Section~\ref{sec:contributions}. Let \(G^R\) be the complete metric graph in that construction, and let \(R=\{r\}\cup\{s_e:p_e>0\}\) be its vertex set, where \(s_e\) is the representative vertex corresponding to the positive-profit edge \(e\). We describe the PCTSP-reduction algorithm and prove its \(2\rho\)-approximation guarantee, where \(\rho\) denotes the approximation ratio of the PCTSP algorithm applied to this reduced PCTSP instance. The PCTSP-reduction algorithm first applies the \(\rho\)-approximation algorithm for the PCTSP to \(G^R\). Suppose that the returned solution to the PCTSP instance visits the representative vertices \(s_{e_1},s_{e_2},\ldots,s_{e_k}\) in cyclic order, where \(e_i=u_iv_i\). We then construct a feasible solution to the PCRPP instance that traverses these selected positive-profit edges in this order. If \(k=0\), we output the rooted closed walk containing only \(r\). Otherwise, set \(t_0=r\). For each \(i=1,\ldots,k\), choose one of the two endpoints of \(e_i\) whose shortest-path distance from \(t_{i-1}\) in the original graph is minimum, move from \(t_{i-1}\) to this chosen endpoint along a shortest path, traverse \(e_i\), and let \(t_i\) be the endpoint reached after this traversal.
 After traversing \(e_k\), we return from \(t_k\) to \(r\) along a shortest path.

It remains to bound the objective value of the resulting walk. It is easy to verify from the construction of the reduced PCTSP instance that its optimum value is at most the optimum value of the original PCRPP instance, because any feasible solution to the PCRPP instance naturally gives a feasible solution to the reduced PCTSP instance with no larger objective value. We therefore only compare the returned solution to the PCTSP instance on \(G^R\) with the feasible solution to the PCRPP instance constructed from it. The case \(k=0\) is immediate, so assume \(k\ge1\). Let the returned solution to the PCTSP instance have length \(L\). Let \(D_0\) be the length of \(rs_{e_1}\) in \(G^R\), let \(D_i\) be the length of \(s_{e_i}s_{e_{i+1}}\) in \(G^R\) for \(i=1,\ldots,k-1\), and let \(D_k\) be the length of \(s_{e_k}r\) in \(G^R\). Then \(L=D_0+\cdots+D_k\). Since the edge \(rs_{e_1}\) has length \(D_0\) in \(G^R\), there is a shortest path of length \(D_0\) from \(r\) to \(s_{e_1}\) in the subdivided graph. This path reaches \(s_{e_1}\) through one of the two edges replacing \(e_1\), and that edge has length \(w_{e_1}/2\). Hence the shortest path length from \(r\) to one endpoint of \(e_1\) in the original graph is at most \(D_0-w_{e_1}/2\), and reaching this endpoint and then traversing \(e_1\) has length at most \(D_0+w_{e_1}/2\le 2D_0\). For each \(i=1,\ldots,k-1\), since the edge \(s_{e_i}s_{e_{i+1}}\) has length \(D_i\) in \(G^R\), there is a shortest path of length \(D_i\) from \(s_{e_i}\) to \(s_{e_{i+1}}\) in the subdivided graph. By the construction of the subdivided graph, this path corresponds to a walk \(W_i\) in the original graph from an endpoint of \(e_i\) to an endpoint of \(e_{i+1}\), of length \(D_i-(w_{e_i}+w_{e_{i+1}})/2\). After traversing \(e_i\), the constructed walk is located at \(t_i\), which is an endpoint of \(e_i\). Thus the shortest path length from \(t_i\) to the endpoint of \(e_{i+1}\) where \(W_i\) ends is at most \(w_{e_i}\) plus the length of \(W_i\). Since the algorithm chooses an endpoint of \(e_{i+1}\) with minimum shortest path length from \(t_i\), the part of the constructed walk from \(t_i\) to \(t_{i+1}\) has length at most \(D_i-(w_{e_i}+w_{e_{i+1}})/2+w_{e_i}+w_{e_{i+1}} = D_i+(w_{e_i}+w_{e_{i+1}})/2\le 2D_i\). The part from \(t_k\) back to \(r\) is bounded in the same way and has length at most \(D_k+w_{e_k}/2\le 2D_k\). Therefore the length of the constructed walk is at most \(2(D_0+\cdots+D_k)=2L\). The constructed walk traverses every positive-profit edge whose representative vertex is visited by the returned solution to the PCTSP instance. Therefore the total profit of positive-profit edges not traversed by the constructed walk is at most the total penalty paid by the returned solution to the PCTSP instance. It follows that the objective value of the constructed walk is at most twice the objective value of the returned solution to the PCTSP instance. Since the returned solution to the PCTSP instance has objective value at most \(\rho\) times the optimum value of the reduced PCTSP instance, and this optimum value is at most the optimum value of the original PCRPP instance, the PCTSP-reduction algorithm is a \(2\rho\)-approximation algorithm for the PCRPP.

\section{Verification of the constant bound}
\label{app:boundanalysis}

\begin{lemma}
\label{lem:boundanalysis}
Let $\kappa_0=0.36621005$, $\kappa=0.99678328$, $\beta=1.98094420$, and let \(h_\xi(\kappa,\kappa_0)\) be the function defined in the proof of Theorem~\ref{thm:below_1_6}. Then
\[
\max_{\xi\in[\kappa_0,\kappa]}
\frac{h_\xi(\kappa,\kappa_0)}{1-\xi}
<1.59872206 .
\]
\end{lemma}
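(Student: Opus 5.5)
The lemma is a purely numerical claim about a single explicit one-variable function on a compact interval, so the plan is a rigorous interval-style verification rather than a structural argument. First I will rewrite \(h_\xi(\kappa,\kappa_0)\) in closed form as an explicit function \(F(\xi)=h_\xi(\kappa,\kappa_0)/(1-\xi)\) of \(\xi\) alone. With the numbers \(\kappa_0,\kappa,\beta\) fixed, \(\nu\) is a fixed positive constant. The quantities
\[
\phi_\xi(\kappa_0)=\frac{(3-\kappa_0-\kappa)(3-\kappa_0)}{3-\kappa_0-\xi},\qquad
\phi_\xi(\kappa)=\frac{3(3-\kappa)-3\xi}{3-\kappa-\xi}\cdot\frac{3-2\kappa}{3}
\]
are rational in \(\xi\) with poles outside \([\kappa_0,\kappa]\). (For the second, simply use \(\phi_\xi(\kappa)=(3-2\kappa)(3-\kappa)/(3-\kappa-\xi)\).) The remaining terms \((\kappa-\xi)^{\beta+1}\) and \((\kappa-\xi)^{\beta+2}\) are smooth on \([\kappa_0,\kappa]\), since \(\beta+1>1\). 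Hence \(F\) is continuous on \([\kappa_0,\kappa]\), and its derivative can be bounded explicitly there, because \(1-\xi\ge 1-\kappa>0.003\).

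Next I will handle the two ends of the interval separately. At \(\xi=\kappa_0\) the bracket vanishes, so \(h=1\) and \(F(\kappa_0)=1/(1-\kappa_0)<1.5779\), which is comfortably below the target. Near \(\xi=\kappa\) the denominator \(1-\xi\) is small, so I must check that \(h_\xi\) is correspondingly small. Substituting \(\xi=\kappa\) shows that \(h_\kappa\) is at most of order \(1-\kappa\), since \(\nu\) normalizes the density so that the bracket nearly cancels the leading \(1\).

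On the interior I will do the following:
\begin{enumerate}
\item Partition \([\kappa_0,\kappa]\) into a fine grid \(\xi_0<\xi_1<\cdots<\xi_N\).
\item Evaluate \(F\) at the grid points with rigorous rounding, using interval arithmetic.
\item Bound \(\max F\) on each subinterval by \(F(\xi_j)+L_j(\xi_{j+1}-\xi_j)\), where \(L_j\) is an explicit upper bound on \(|F'|\) over that subinterval.
\end{enumerate}
To obtain \(L_j\), I will use monotonicity of the individual factors. Each of \(\xi\), \((\kappa-\xi)^{\beta+1}\), \((\kappa-\xi)^{\beta+2}\), \(\phi_\xi(\kappa_0)\), \(\phi_\xi(\kappa)\) and \(1/(1-\xi)\) is monotone in \(\xi\). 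This lets me bound \(F\) on each subinterval directly: I replace each factor by its extreme value in the appropriate direction, which gives a valid enclosure without computing derivatives at all.

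The main obstacle is the neighbourhood of \(\xi=\kappa\), where \(1-\xi\approx 0.0032\). There the numerator \(h_\xi\) is a difference of nearly equal quantities, so naive monotone enclosures will blow up. There I will instead expand \(h_\xi\) around \(\xi=\kappa\): write \(h_\xi=h_\kappa+(\kappa-\xi)(\cdots)\) with the remainder controlled, and bound the ratio by \(h_\kappa/(1-\kappa)\) plus an explicit correction. Alternatively, I will use a much finer grid with high-precision arithmetic on a last short subinterval. The numerical search suggests the maximum \(1.5987\ldots\) is attained in the interior, so the margin of about \(10^{-7}\) relative to the stated constant should be achievable with a sufficiently fine certified grid.
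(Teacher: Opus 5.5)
Your plan is essentially the paper's proof: certified evaluation of \(F\) on a fine grid, a derivative bound to control the gaps between grid points, and the small denominator near \(\xi=\kappa\) tamed by expanding about the exact value \(h_\kappa(\kappa,\kappa_0)=1-\kappa\) (so \(F(\kappa)=1\)); the paper packages this as the single global bound \(|F'(\xi)|\le 32/(1-\kappa)\), obtained from \(\bigl|\tfrac{d}{d\xi}h_\xi(\kappa,\kappa_0)\bigr|<31\) and \(|h_\xi(\kappa,\kappa_0)|\le 1-\kappa+31(\kappa-\xi)\), on a uniform grid of spacing \(10^{-8}\), whereas your local interval enclosures would need far fewer evaluations and would also certify rounding error. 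Two small corrections: the grid maximum is about \(1.59862256\) (near \(\xi\approx0.948\)), so the margin to \(1.59872206\) is about \(10^{-4}\), not \(10^{-7}\), which is exactly the slack \(32\cdot 10^{-8}/(1-\kappa)\) that the paper's global constant uses up; and your first displayed formula for \(\phi_\xi(\kappa)\) simplifies to the constant \(3-2\kappa\), so rely on the parenthetical one.
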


\begin{proof}
Let \(F(\xi)=h_\xi(\kappa,\kappa_0)/(1-\xi)\) and \(L=\kappa-\kappa_0\). For the above parameter values, \(0.63<L<0.631\), \(1-\kappa=0.00321672\), and \(1.9<\beta<2\). In particular, throughout \(\xi\in[\kappa_0,\kappa]\), we have \(1-\xi\ge 1-\kappa>0\), \(0\le \kappa-\xi\le L<1\), and all powers of \(\kappa-\xi\) and \(L\) appearing below are well defined and nonnegative. Also, \(L>0\), \(\nu>0\), and \(\beta+1,\beta+2>0\). We first derive the bound \(\nu/L<8\). By the definition of \(\nu\) in Theorem~\ref{thm:below_1_6},
\[
\begin{aligned}
L\nu^{-1}
&=
\frac{(3-\kappa)L^{\beta+2}}{\beta+1}
+
\frac{L^{\beta+3}}{\beta+2} \\
&>
\frac{2L^4}{3}
+
\frac{L^5}{4}
>
\frac{2(0.63)^4}{3}
+
\frac{(0.63)^5}{4}
>
\frac{1}{8} .
\end{aligned}
\]
Here we used \(3-\kappa>2\), \(\beta+1<3\), \(\beta+2<4\), and \(0<L<1\), so \(L^{\beta+2}>L^4\) and \(L^{\beta+3}>L^5\).

Taking the derivative of \(h_\xi(\kappa,\kappa_0)\) with respect to \(\xi\) gives
\[
\begin{aligned}
\left|\frac{d}{d\xi}h_\xi(\kappa,\kappa_0)\right|
&\le
\frac{\nu}{L}
\Bigg[
\left|
\bigl(\phi_\xi(\kappa_0)-\phi_\xi(\kappa)\bigr)
\frac{L^{\beta+2}-(\kappa-\xi)^{\beta+2}}{\beta+2}
+
\phi_\xi(\kappa)L
\frac{L^{\beta+1}-(\kappa-\xi)^{\beta+1}}{\beta+1}
\right| \\
&\quad
+\xi\Bigg|
\frac{d}{d\xi}\bigl(\phi_\xi(\kappa_0)-\phi_\xi(\kappa)\bigr)
\frac{L^{\beta+2}-(\kappa-\xi)^{\beta+2}}{\beta+2}
+
\bigl(\phi_\xi(\kappa_0)-\phi_\xi(\kappa)\bigr)
(\kappa-\xi)^{\beta+1} \\
&\quad\quad
+
\frac{d}{d\xi}\phi_\xi(\kappa)\,
L\frac{L^{\beta+1}-(\kappa-\xi)^{\beta+1}}{\beta+1}
+
\phi_\xi(\kappa)L(\kappa-\xi)^\beta
\Bigg|
\Bigg].
\end{aligned}
\]
As defined in the proof of Theorem~\ref{thm:below_1_6}, \(\phi_\xi(\delta)=(3-\delta-\kappa)(3-\delta)/(3-\delta-\xi)\). For \(\xi\in[\kappa_0,\kappa]\) and \(\delta\in\{\kappa_0,\kappa\}\), we have \(3-\delta-\xi>0\). Using \(\kappa_0=0.36621005\), \(\kappa=0.99678328\), \(\beta=1.98094420\), and \(0.63<L<0.631\), we have \(|\phi_\xi(\kappa_0)-\phi_\xi(\kappa)|<4.65\), \(\left|d(\phi_\xi(\kappa_0)-\phi_\xi(\kappa))/d\xi\right|<4\), \(0\le \phi_\xi(\kappa)<2.01\), \(0\le d\phi_\xi(\kappa)/d\xi<2\), \(0\le (L^{\beta+2}-(\kappa-\xi)^{\beta+2})/(\beta+2)<0.07\), \(0\le(\kappa-\xi)^{\beta+1}<0.399\), \(0\le (L^{\beta+1}-(\kappa-\xi)^{\beta+1})/(\beta+1)<0.138\), and \(0\le(\kappa-\xi)^\beta<0.631\). Substituting these bounds gives $\left|\frac{d}{d\xi}h_\xi(\kappa,\kappa_0)\right|<31$.

The definition of \(h_\xi(\kappa,\kappa_0)\) gives \(h_\kappa(\kappa,\kappa_0)=1-\kappa\). Since \(\left|d h_\xi(\kappa,\kappa_0)/d\xi\right|<31\) on \([\kappa_0,\kappa]\), for every \(\xi\in[\kappa_0,\kappa]\),
\[
|h_\xi(\kappa,\kappa_0)|
\le
|h_\kappa(\kappa,\kappa_0)|
+
\int_\xi^\kappa
\left|\frac{d}{ds}h_s(\kappa,\kappa_0)\right|\,ds
\le
1-\kappa+31(\kappa-\xi).
\]
Since \(1-\xi>0\) on \([\kappa_0,\kappa]\), and \(F'(\xi)=h'_\xi(\kappa,\kappa_0)/(1-\xi) +h_\xi(\kappa,\kappa_0)/(1-\xi)^2\), we obtain
\[
\begin{aligned}
|F'(\xi)|
&\le
\frac{31}{1-\xi}
+
\frac{1-\kappa+31(\kappa-\xi)}{(1-\xi)^2} =
\frac{62}{1-\xi}
-
\frac{30(1-\kappa)}{(1-\xi)^2}.
\end{aligned}
\]
The last expression is increasing in \(\xi\) on \([\kappa_0,\kappa]\), since its derivative is
\[
\frac{62(1-\xi)-60(1-\kappa)}{(1-\xi)^3}>0 \qquad
\forall\xi\in[\kappa_0,\kappa].
\]
Here the denominator is positive because \(1-\xi>0\), and the numerator is positive because \(1-\xi\ge 1-\kappa>0\). Therefore it is at most its value at \(\xi=\kappa\), namely \(32/(1-\kappa)\). Hence
\[
|F'(\xi)|\le \frac{32}{1-\kappa}
\qquad
\forall\xi\in[\kappa_0,\kappa].
\]

Now set \(\Delta=10^{-8}\) and $\Xi_\Delta
=
\{\kappa_0+j\Delta:\ j=0,1,\ldots,
\lfloor(\kappa-\kappa_0)/\Delta\rfloor\}\cup\{\kappa\}$. We evaluate \(F\) on all points of \(\Xi_\Delta\) using high-precision decimal arithmetic. The computation shows that \(\max_{\xi\in\Xi_\Delta}F(\xi)\) is attained at \(\xi=0.94817979\), where \(F(0.94817979)\approx 1.59862256\). Hence $\max_{\xi\in\Xi_\Delta}F(\xi)<1.59862257$. Since \(F\) is continuous on \([\kappa_0,\kappa]\), let \(\xi^*\in[\kappa_0,\kappa]\) be a point at which \(F\) attains its maximum. By the definition of \(\Xi_\Delta\), there exists \(\xi^\Delta\in\Xi_\Delta\) such that \(|\xi^\Delta-\xi^*|\le \Delta\). Using the derivative bound above, we obtain
\[
\begin{aligned}
\max_{\xi\in[\kappa_0,\kappa]}F(\xi)
&=F(\xi^*)\le F(\xi^\Delta)
+
\frac{32}{1-\kappa}|\xi^*-\xi^\Delta| \le \max_{\xi\in\Xi_\Delta}F(\xi)+ \frac{32}{1-\kappa}\Delta\\
&<
1.59862257+0.00009949=1.59872206.
\end{aligned}
\]
\end{proof}

\section{Detailed computational results}
\label{app:computational-results}

Table~\ref{tab:instance-results} reports the per-instance computational results. Here \(\mathrm{OPT}\) is the optimum value for the minimization objective, \(\mathrm{ALG}\) is the objective value returned by Algorithm~\ref{alg:best_of_many}, \(\mathrm{RED}\) is the objective value returned by the PCTSP-reduction algorithm using the PCTSP algorithm of \citet{Blauth2026}, and \(\mathrm{OPT}_{\mathrm{LP}}\) is the optimum value of \textsc{PCRPP-LP} on the preprocessed complete graph. The running times are in seconds. For each instance, the better value between \(\mathrm{ALG}\) and \(\mathrm{RED}\) is shown in bold; if the two values are equal, both are shown in bold.

\scriptsize
\setlength{\tabcolsep}{3pt}

\begin{longtable}{lrrrrrrrrrrr}
\caption{Per-instance computational results.}
\label{tab:instance-results}\\
\toprule
Instance
& \(|V|\)
& \(|E|\)
& \(\mathrm{OPT}\)
& \(\mathrm{ALG}\)
& \(\mathrm{RED}\)
& \(\mathrm{OPT}_{\mathrm{LP}}\)
& ALG gap
& RED gap
& LP gap
& ALG time
& RED time \\
\midrule
\endfirsthead

\toprule
Instance
& \(|V|\)
& \(|E|\)
& \(\mathrm{OPT}\)
& \(\mathrm{ALG}\)
& \(\mathrm{RED}\)
& \(\mathrm{OPT}_{\mathrm{LP}}\)
& ALG gap
& RED gap
& LP gap
& ALG time
& RED time \\
\midrule
\endhead

\midrule
\multicolumn{12}{r}{Continued on next page}\\
\midrule
\endfoot

\bottomrule
\endlastfoot

\midrule
\multicolumn{12}{l}{\textbf{ALBAIDA}} \\
ALBAIDAANoRPP & 102 & 160 & 11717 & \textbf{12530} & 13098 & 11296 & 6.94 & 11.79 & 3.59 & 2717.35 & 1176.25 \\
ALBAIDABNoRPP & 90 & 144 & 10502 & \textbf{10961} & 11637 & 10008 & 4.37 & 10.81 & 4.70 & 1957.28 & 366.60 \\

\midrule
\multicolumn{12}{l}{\textbf{D}} \\
D00 & 16 & 32 & 1133 & \textbf{1133} & \textbf{1133} & 1133 & 0.00 & 0.00 & 0.00 & 0.41 & 0.20 \\
D01 & 16 & 31 & 1278 & \textbf{1278} & 1329 & 1278 & 0.00 & 3.99 & 0.00 & 2.31 & 0.61 \\
D02 & 16 & 31 & 1158 & \textbf{1185} & \textbf{1185} & 1154 & 2.33 & 2.33 & 0.35 & 2.91 & 0.80 \\
D03 & 16 & 32 & 1304 & \textbf{1304} & 1343 & 1296 & 0.00 & 2.99 & 0.61 & 1.35 & 0.44 \\
D04 & 16 & 31 & 1395 & 1429 & \textbf{1427} & 1389 & 2.44 & 2.29 & 0.43 & 2.24 & 0.64 \\
D05 & 16 & 31 & 1252 & \textbf{1252} & 1282 & 1220 & 0.00 & 2.40 & 2.56 & 1.69 & 0.57 \\
D06 & 16 & 32 & 1402 & \textbf{1474} & 1534 & 1387 & 5.14 & 9.42 & 1.07 & 4.23 & 0.87 \\
D07 & 16 & 31 & 1413 & 1502 & \textbf{1461} & 1369 & 6.30 & 3.40 & 3.11 & 4.31 & 1.03 \\
D08 & 16 & 31 & 1383 & \textbf{1445} & 1562 & 1323 & 4.48 & 12.94 & 4.34 & 4.42 & 1.16 \\
D09 & 36 & 72 & 1814 & \textbf{1814} & 1879 & 1814 & 0.00 & 3.58 & 0.00 & 16.42 & 9.69 \\
D10 & 36 & 72 & 1765 & \textbf{1765} & \textbf{1765} & 1765 & 0.00 & 0.00 & 0.00 & 3.37 & 5.35 \\
D11 & 36 & 72 & 2207 & \textbf{2325} & 2433 & 2177 & 5.35 & 10.24 & 1.36 & 46.08 & 10.42 \\
D12 & 36 & 72 & 2031 & \textbf{2061} & 2308 & 2019.5 & 1.48 & 13.64 & 0.57 & 44.90 & 15.18 \\
D13 & 36 & 72 & 2048 & 2177 & \textbf{2157} & 2031 & 6.30 & 5.32 & 0.83 & 42.54 & 10.25 \\
D14 & 36 & 72 & 2153 & \textbf{2249} & 2325 & 2101 & 4.46 & 7.99 & 2.42 & 102.19 & 14.59 \\
D15 & 36 & 72 & 2211 & 2410 & \textbf{2325} & 2148 & 9.00 & 5.16 & 2.85 & 112.27 & 18.33 \\
D16 & 36 & 72 & 2095 & \textbf{2175} & 2241 & 2014 & 3.82 & 6.97 & 3.87 & 123.40 & 20.73 \\
D17 & 36 & 72 & 2369 & \textbf{2598} & 2625 & 2265 & 9.67 & 10.81 & 4.39 & 125.30 & 20.61 \\
D18 & 64 & 128 & 2151 & \textbf{2185} & 2375 & 2113.5 & 1.58 & 10.41 & 1.74 & 514.57 & 120.61 \\
D19 & 64 & 128 & 2293 & \textbf{2393} & 2510 & 2265 & 4.36 & 9.46 & 1.22 & 700.89 & 162.67 \\
D20 & 64 & 128 & 2349 & \textbf{2378} & 2427 & 2312 & 1.23 & 3.32 & 1.58 & 209.81 & 89.23 \\
D21 & 64 & 128 & 2529 & \textbf{2628} & 2904 & 2468 & 3.91 & 14.83 & 2.41 & 623.29 & 250.34 \\
D22 & 64 & 128 & 2610 & \textbf{2793} & 2873 & 2532 & 7.01 & 10.08 & 2.99 & 626.20 & 178.64 \\
D23 & 64 & 128 & 2578 & \textbf{2775} & 2851 & 2440 & 7.64 & 10.59 & 5.35 & 567.41 & 170.54 \\
D24 & 64 & 128 & 2733 & \textbf{2931} & 3059 & 2568 & 7.24 & 11.93 & 6.04 & 738.15 & 193.12 \\
D25 & 64 & 128 & 2884 & \textbf{3023} & 3381 & 2807 & 4.82 & 17.23 & 2.67 & 737.80 & 295.49 \\
D26 & 64 & 128 & 3129 & \textbf{3353} & 3516 & 2939 & 7.16 & 12.37 & 6.07 & 850.13 & 188.82 \\
D27 & 100 & 200 & 3131 & \textbf{3232} & 3245 & 3060 & 3.23 & 3.64 & 2.27 & 3157.33 & 930.56 \\
D28 & 100 & 200 & 3480 & \textbf{3647} & 3751 & 3400 & 4.80 & 7.79 & 2.30 & 2728.59 & 632.08 \\
D29 & 100 & 200 & 2943 & \textbf{3155} & 3247 & 2874 & 7.20 & 10.33 & 2.34 & 3913.36 & 2735.59 \\
D30 & 100 & 200 & 3367 & \textbf{3530} & 3827 & 3264 & 4.84 & 13.66 & 3.06 & 4728.08 & 4506.35 \\
D31 & 100 & 200 & 3670 & \textbf{3870} & 4154 & 3487 & 5.45 & 13.19 & 4.99 & 4073.35 & 832.13 \\
D32 & 100 & 200 & 3311 & \textbf{3490} & 3564 & 3208 & 5.41 & 7.64 & 3.11 & 4163.01 & 1365.85 \\
D33 & 100 & 200 & 3869 & \textbf{4110} & 4674 & 3759.5 & 6.23 & 20.81 & 2.83 & 6127.37 & 2817.81 \\
D34 & 100 & 200 & 4012 & \textbf{4228} & 4426 & 3859 & 5.38 & 10.32 & 3.81 & 3575.13 & 903.64 \\
D35 & 100 & 200 & 3737 & \textbf{3886} & 4266 & 3561.5 & 3.99 & 14.16 & 4.70 & 8295.91 & 8159.58 \\

\midrule
\multicolumn{12}{l}{\textbf{G}} \\
G00 & 16 & 24 & 6 & \textbf{6} & \textbf{6} & 6 & 0.00 & 0.00 & 0.00 & 0.04 & 0.00 \\
G01 & 16 & 24 & 11 & \textbf{11} & \textbf{11} & 11 & 0.00 & 0.00 & 0.00 & 0.04 & 0.01 \\
G02 & 16 & 24 & 9 & \textbf{9} & \textbf{9} & 9 & 0.00 & 0.00 & 0.00 & 0.04 & 0.01 \\
G03 & 16 & 24 & 15 & \textbf{15} & \textbf{15} & 15 & 0.00 & 0.00 & 0.00 & 0.11 & 0.02 \\
G04 & 16 & 24 & 16 & \textbf{16} & 18 & 16 & 0.00 & 12.50 & 0.00 & 0.10 & 0.04 \\
G05 & 16 & 24 & 12 & \textbf{12} & 14 & 12 & 0.00 & 16.67 & 0.00 & 0.19 & 0.04 \\
G06 & 16 & 24 & 17 & \textbf{17} & 20 & 17 & 0.00 & 17.65 & 0.00 & 0.37 & 0.16 \\
G07 & 16 & 24 & 16 & \textbf{16} & \textbf{16} & 16 & 0.00 & 0.00 & 0.00 & 0.19 & 0.07 \\
G08 & 16 & 24 & 15 & \textbf{15} & 18 & 15 & 0.00 & 20.00 & 0.00 & 0.17 & 0.08 \\
G09 & 36 & 60 & 21 & \textbf{21} & \textbf{21} & 21 & 0.00 & 0.00 & 0.00 & 0.42 & 0.06 \\
G10 & 36 & 60 & 26 & \textbf{26} & 28 & 26 & 0.00 & 7.69 & 0.00 & 0.24 & 0.17 \\
G11 & 36 & 60 & 26 & \textbf{26} & 28 & 26 & 0.00 & 7.69 & 0.00 & 0.87 & 0.21 \\
G12 & 36 & 60 & 40 & \textbf{42} & 45 & 38 & 5.00 & 12.50 & 5.00 & 6.35 & 1.17 \\
G13 & 36 & 60 & 37 & \textbf{38} & 41 & 35 & 2.70 & 10.81 & 5.41 & 5.90 & 0.91 \\
G14 & 36 & 60 & 39 & 43 & \textbf{42} & 37 & 10.26 & 7.69 & 5.13 & 7.98 & 1.24 \\
G15 & 36 & 60 & 45 & \textbf{48} & 54 & 43 & 6.67 & 20.00 & 4.44 & 15.95 & 3.65 \\
G16 & 36 & 60 & 41 & \textbf{41} & 52 & 40.5 & 0.00 & 26.83 & 1.22 & 9.81 & 3.70 \\
G17 & 36 & 60 & 44 & \textbf{45} & 50 & 43 & 2.27 & 13.64 & 2.27 & 7.61 & 4.93 \\
G18 & 64 & 112 & 43 & \textbf{43} & 45 & 43 & 0.00 & 4.65 & 0.00 & 9.93 & 0.94 \\
G19 & 64 & 112 & 50 & \textbf{53} & \textbf{53} & 47.5 & 6.00 & 6.00 & 5.00 & 12.83 & 1.21 \\
G20 & 64 & 112 & 47 & \textbf{47} & 51 & 47 & 0.00 & 8.51 & 0.00 & 6.68 & 0.99 \\
G21 & 64 & 112 & 64 & \textbf{68} & 74 & 60.5 & 6.25 & 15.62 & 5.47 & 57.86 & 21.25 \\
G22 & 64 & 112 & 66 & \textbf{74} & 77 & 64.5 & 12.12 & 16.67 & 2.27 & 50.01 & 10.97 \\
G23 & 64 & 112 & 70 & \textbf{75} & 80 & 67 & 7.14 & 14.29 & 4.29 & 67.17 & 13.78 \\
G24 & 64 & 112 & 83 & \textbf{91} & 99 & 79.5 & 9.64 & 19.28 & 4.22 & 151.19 & 72.34 \\
G25 & 64 & 112 & 77 & \textbf{86} & \textbf{86} & 72.5 & 11.69 & 11.69 & 5.84 & 88.35 & 34.48 \\
G26 & 64 & 112 & 78 & \textbf{87} & 93 & 76 & 11.54 & 19.23 & 2.56 & 113.49 & 39.99 \\
G27 & 100 & 180 & 71 & \textbf{73} & 75 & 70 & 2.82 & 5.63 & 1.41 & 65.62 & 4.07 \\
G28 & 100 & 180 & 79 & \textbf{87} & 89 & 77.5 & 10.13 & 12.66 & 1.90 & 75.91 & 17.87 \\
G29 & 100 & 180 & 77 & \textbf{84} & 87 & 75.5 & 9.09 & 12.99 & 1.95 & 66.65 & 12.87 \\
G30 & 100 & 180 & 100 & \textbf{107} & 111 & 97.25 & 7.00 & 11.00 & 2.75 & 380.47 & 40.54 \\
G31 & 100 & 180 & 103 & \textbf{110} & 113 & 100.38 & 6.80 & 9.71 & 2.55 & 329.95 & 69.14 \\
G32 & 100 & 180 & 110 & \textbf{116} & 124 & 104 & 5.45 & 12.73 & 5.45 & 336.04 & 93.92 \\
G33 & 100 & 180 & 136 & \textbf{148} & 151 & 126.25 & 8.82 & 11.03 & 7.17 & 1198.95 & 233.98 \\
G34 & 100 & 180 & 130 & \textbf{139} & 153 & 122 & 6.92 & 17.69 & 6.15 & 596.57 & 305.46 \\
G35 & 100 & 180 & 132 & \textbf{137} & 151 & 124.5 & 3.79 & 14.39 & 5.68 & 913.49 & 294.66 \\

\midrule
\multicolumn{12}{l}{\textbf{P}} \\
P01 & 11 & 13 & 60 & \textbf{60} & \textbf{60} & 60 & 0.00 & 0.00 & 0.00 & 0.09 & 0.03 \\
P02 & 14 & 33 & 293 & \textbf{306} & 307 & 291 & 4.44 & 4.78 & 0.68 & 2.16 & 0.76 \\
P03 & 28 & 58 & 134 & \textbf{145} & 149 & 132 & 8.21 & 11.19 & 1.49 & 20.86 & 5.15 \\
P04 & 17 & 35 & 105 & \textbf{106} & 125 & 99 & 0.95 & 19.05 & 5.71 & 5.56 & 1.57 \\
P05 & 20 & 35 & 197 & \textbf{197} & 209 & 197 & 0.00 & 6.09 & 0.00 & 3.13 & 1.31 \\
P06 & 24 & 46 & 130 & \textbf{130} & 147 & 125 & 0.00 & 13.08 & 3.85 & 9.00 & 2.42 \\
P07 & 23 & 47 & 182 & \textbf{182} & 188 & 171.5 & 0.00 & 3.30 & 5.77 & 16.72 & 3.10 \\
P08 & 17 & 40 & 166 & \textbf{171} & 189 & 157 & 3.01 & 13.86 & 5.42 & 6.57 & 2.33 \\
P09 & 14 & 26 & 100 & \textbf{101} & 105 & 95 & 1.00 & 5.00 & 5.00 & 2.02 & 0.63 \\
P10 & 12 & 20 & 95 & \textbf{103} & 104 & 92 & 8.42 & 9.47 & 3.16 & 0.99 & 0.32 \\
P11 & 9 & 14 & 27 & \textbf{27} & 31 & 27 & 0.00 & 14.81 & 0.00 & 0.17 & 0.11 \\
P12 & 7 & 18 & 23 & \textbf{23} & 25 & 23 & 0.00 & 8.70 & 0.00 & 0.19 & 0.07 \\
P13 & 7 & 10 & 40 & \textbf{40} & 41 & 40 & 0.00 & 2.50 & 0.00 & 0.13 & 0.02 \\
P14 & 28 & 79 & 417 & \textbf{446} & 451 & 406 & 6.95 & 8.15 & 2.64 & 54.06 & 14.70 \\
P15 & 26 & 37 & 479 & \textbf{484} & 534 & 470 & 1.04 & 11.48 & 1.88 & 10.03 & 3.08 \\
P16 & 31 & 94 & 409 & \textbf{416} & 444 & 403.5 & 1.71 & 8.56 & 1.34 & 124.88 & 33.62 \\
P17 & 19 & 44 & 209 & \textbf{214} & 222 & 207 & 2.39 & 6.22 & 0.96 & 6.61 & 4.26 \\
P18 & 23 & 37 & 183 & \textbf{189} & 204 & 179 & 3.28 & 11.48 & 2.19 & 4.66 & 1.49 \\
P19 & 33 & 54 & 322 & \textbf{338} & 352 & 311 & 4.97 & 9.32 & 3.42 & 30.79 & 13.18 \\
P20 & 50 & 98 & 622 & \textbf{634} & 665 & 598 & 1.93 & 6.91 & 3.86 & 348.43 & 85.28 \\
P21 & 49 & 110 & 542 & \textbf{573} & 632 & 524 & 5.72 & 16.61 & 3.32 & 392.55 & 121.29 \\
P22 & 50 & 184 & 1595 & \textbf{1641} & 1732 & 1561 & 2.88 & 8.59 & 2.13 & 1291.73 & 578.36 \\
P23 & 50 & 158 & 856 & \textbf{913} & 991 & 838 & 6.66 & 15.77 & 2.10 & 1283.47 & 319.59 \\
P24 & 41 & 125 & 840 & \textbf{844} & 893 & 817 & 0.48 & 6.31 & 2.74 & 291.15 & 112.78 \\

\midrule
\multicolumn{12}{l}{\textbf{R}} \\
R00 & 20 & 37 & 44158 & \textbf{44158} & 48290 & 44158 & 0.00 & 9.36 & 0.00 & 1.78 & 0.82 \\
R01 & 20 & 47 & 68689 & \textbf{68689} & 75165 & 68506 & 0.00 & 9.43 & 0.27 & 7.04 & 4.69 \\
R02 & 20 & 47 & 61951 & \textbf{61951} & 62975 & 61951 & 0.00 & 1.65 & 0.00 & 4.97 & 2.17 \\
R03 & 20 & 75 & 131531 & \textbf{134619} & 134978 & 131461 & 2.35 & 2.62 & 0.05 & 29.00 & 9.64 \\
R04 & 20 & 60 & 83475 & \textbf{83888} & 90903 & 82966 & 0.49 & 8.90 & 0.61 & 7.74 & 3.62 \\
R05 & 30 & 70 & 72481 & \textbf{72481} & 73960 & 72481 & 0.00 & 2.04 & 0.00 & 44.46 & 30.95 \\
R06 & 30 & 112 & 132498 & \textbf{133900} & 137718 & 132371 & 1.06 & 3.94 & 0.10 & 405.94 & 261.22 \\
R07 & 30 & 70 & 78598 & \textbf{79071} & 81447 & 78125 & 0.60 & 3.62 & 0.60 & 43.06 & 11.16 \\
R08 & 30 & 111 & 134271 & \textbf{138891} & 143226 & 134142.5 & 3.44 & 6.67 & 0.10 & 220.66 & 102.77 \\
R09 & 30 & 111 & 136052 & \textbf{136052} & 148485 & 136052 & 0.00 & 9.14 & 0.00 & 88.86 & 41.55 \\
R10 & 40 & 130 & 114125 & \textbf{115409} & 117771 & 113368 & 1.13 & 3.19 & 0.66 & 134.10 & 189.21 \\
R11 & 40 & 103 & 89530 & \textbf{92304} & 94376 & 89439 & 3.10 & 5.41 & 0.10 & 113.93 & 70.12 \\
R12 & 40 & 82 & 78861 & \textbf{78861} & 83773 & 78742 & 0.00 & 6.23 & 0.15 & 40.75 & 13.95 \\
R13 & 40 & 203 & 259628 & \textbf{263882} & 273048 & 258719 & 1.64 & 5.17 & 0.35 & 553.21 & 366.11 \\
R14 & 40 & 203 & 251283 & \textbf{256939} & 273369 & 250803 & 2.25 & 8.79 & 0.19 & 714.29 & 508.17 \\
R15 & 50 & 203 & 204271 & \textbf{204326} & 215354 & 203783 & 0.03 & 5.43 & 0.24 & 1109.57 & 648.26 \\
R16 & 50 & 162 & 133707 & \textbf{134072} & 135973 & 133633 & 0.27 & 1.69 & 0.06 & 2193.78 & 668.35 \\
R17 & 50 & 130 & 108586 & \textbf{110964} & 113860 & 108446 & 2.19 & 4.86 & 0.13 & 399.51 & 208.52 \\
R18 & 50 & 203 & 197692 & \textbf{199417} & 202800 & 197399 & 0.87 & 2.58 & 0.15 & 874.01 & 1118.60 \\
R19 & 50 & 203 & 197364 & \textbf{198718} & 203493 & 197145 & 0.69 & 3.11 & 0.11 & 548.86 & 822.07 \\

\end{longtable}

%
%


\begin{thebibliography}{99}

\bibitem[Ar\'aoz et al.(2006)]{Araoz2006}
Ar\'aoz, J., Fern\'andez, E., \& Zoltan, C. (2006). Privatized rural postman problems. \textit{Computers \& Operations Research}, \textit{33}(12), 3432--3449. \url{https://doi.org/10.1016/j.cor.2005.02.013}.

\bibitem[Ar\'aoz et al.(2009)]{Araoz2009}
Ar\'aoz, J., Fern\'andez, E., \& Meza, O. (2009). Solving the prize-collecting rural postman problem. \textit{European Journal of Operational Research}, \textit{196}(3), 886--896. \url{https://doi.org/10.1016/j.ejor.2008.04.037}.

\bibitem[Ar\'aoz et al.(2009)]{AraozFranquesa2009}
Ar\'aoz, J., Fern\'andez, E., \& Franquesa, C. (2009). The clustered prize-collecting arc routing problem. \textit{Transportation Science}, \textit{43}(3), 287--300. \url{https://doi.org/10.1287/trsc.1090.0270}.

\bibitem[Archetti and Speranza(2015)]{ArchettiSperanza2015}
Archetti, C., \& Speranza, M. G. (2015). Arc routing problems with profits. In \'A. Corber\'an \& G. Laporte (Eds.), \textit{Arc Routing: Problems, Methods, and Applications} (pp. 281--299). SIAM. \url{https://doi.org/10.1137/1.9781611973679.ch12}.

\bibitem[Archetti et al.(2014)]{ArchettiGuastarobaSperanza2014}
Archetti, C., Guastaroba, G., \& Speranza, M. G. (2014). An ILP-refined tabu search for the directed profitable rural postman problem. \textit{Discrete Applied Mathematics}, \textit{163}, 3--16. \url{https://doi.org/10.1016/j.dam.2012.06.002}.

\bibitem[\'Avila et al.(2016)]{Avila2016}
\'Avila, T., Corber\'an, \'A., Plana, I., \& Sanchis, J. M. (2016). A branch-and-cut algorithm for the profitable windy rural postman problem. \textit{European Journal of Operational Research}, \textit{249}(3), 1092--1101. \url{https://doi.org/10.1016/j.ejor.2015.10.016}.

\bibitem[Balas(1989)]{Balas1989}
Balas, E. (1989). The prize-collecting traveling salesman problem. \textit{Networks}, \textit{19}(6), 621--636. \url{https://doi.org/10.1002/net.3230190602}.

\bibitem[Bienstock et al.(1993)]{Bienstock1993}
Bienstock, D., Goemans, M. X., Simchi-Levi, D., \& Williamson, D. (1993). A note on the prize-collecting traveling salesman problem. \textit{Mathematical Programming}, \textit{59}(1), 413--420. \url{https://doi.org/10.1007/BF01581256}.

\bibitem[Black et al.(2013)]{BlackEgleseWohlk2013}
Black, D., Eglese, R., \& W{\o}hlk, S. (2013).
The time-dependent prize-collecting arc routing problem.
\textit{Computers \& Operations Research}, \textit{40}(2), 526--535.
\url{https://doi.org/10.1016/j.cor.2012.08.001}.

\bibitem[Blauth and N\"agele(2023)]{Blauth2023}
Blauth, J., \& N\"agele, M. (2023). An improved approximation guarantee for prize-collecting TSP. In \textit{Proceedings of the 55th Annual ACM Symposium on Theory of Computing} (pp. 1848--1861). ACM. \url{https://doi.org/10.1145/3564246.3585159}.

\bibitem[Blauth et al.(2026)]{Blauth2026}
Blauth, J., Klein, N., \& N\"agele, M. (2026).
A better-than-1.6-approximation for prize-collecting TSP.
\textit{Mathematical Programming}, \textit{216}(1--2), 87--109.
\url{https://doi.org/10.1007/s10107-025-02221-4}.

\bibitem[Colombi and Mansini(2014)]{ColombiMansini2014}
Colombi, M., \& Mansini, R. (2014). New results for the directed profitable rural postman problem. \textit{European Journal of Operational Research}, \textit{238}(3), 760--773. \url{https://doi.org/10.1016/j.ejor.2014.05.006}.

\bibitem[Corber\'an et al.(2011)]{CorberanFernandezFranquesaSanchis2011}
Corber\'an, A., Fern\'andez, E., Franquesa, C., \& Sanchis, J. M. (2011).
The windy clustered prize-collecting arc-routing problem.
\textit{Transportation Science}, \textit{45}(3), 317--334.
\url{https://doi.org/10.1287/trsc.1110.0370}.

\bibitem[Edmonds and Johnson(1973)]{Edmonds1973}
Edmonds, J., \& Johnson, E. L. (1973). Matching, Euler tours and the Chinese postman. \textit{Mathematical Programming}, \textit{5}, 88--124. \url{https://doi.org/10.1007/BF01580113}.

\bibitem[Frank(1992)]{Frank1992}
Frank, A. (1992). On a theorem of Mader. \textit{Discrete Mathematics}, \textit{101}, 49--57. \url{https://doi.org/10.1016/0012-365X(92)90589-8}.

\bibitem[Goemans and Williamson(1995)]{GoemansWilliamson1995}
Goemans, M. X., \& Williamson, D. P. (1995). A general approximation technique for constrained forest problems. \textit{SIAM Journal on Computing}, \textit{24}(2), 296--317. \url{https://doi.org/10.1137/S0097539793242618}.

\bibitem[Goemans(2009)]{Goemans2009}
Goemans, M. X. (2009). Combining approximation algorithms for the prize-collecting TSP. \textit{arXiv preprint arXiv:0910.0553}. \url{https://arxiv.org/abs/0910.0553}.

\bibitem[Gr\"otschel et al.(1981)]{Grotschel1981}
Gr\"otschel, M., Lov\'asz, L., \& Schrijver, A. (1981).
The ellipsoid method and its consequences in combinatorial optimization.
\textit{Combinatorica}, \textit{1}(2), 169--197.
\url{https://doi.org/10.1007/BF02579273}.

\bibitem[Khachiyan(1979)]{Khachiyan1979}
Khachiyan, L. G. (1979).
A polynomial algorithm in linear programming (English translation).
\textit{Soviet Mathematics Doklady}, \textit{20}, 191--194.

\bibitem[Lov\'asz(1976)]{Lovasz1976}
Lov\'asz, L. (1976). On some connectivity properties of Eulerian graphs. \textit{Acta Mathematica Academiae Scientiarum Hungarica}, \textit{28}, 129--138. \url{https://doi.org/10.1007/BF01902503}.

\bibitem[Mader(1978)]{Mader1978}
Mader, W. (1978). A reduction method for edge-connectivity in graphs. \textit{Annals of Discrete Mathematics}, \textit{3}, 145--164. \url{https://doi.org/10.1016/S0167-5060(08)70504-1}.

\bibitem[Orloff(1974)]{Orloff1974}
Orloff, C. S. (1974). A fundamental problem in vehicle routing.
\textit{Networks}, \textit{4}(1), 35--64.
\url{https://doi.org/10.1002/net.3230040105}.

\bibitem[Palma(2011)]{Palma2011}
Palma, G. (2011). A tabu search heuristic for the prize-collecting rural postman problem. \textit{Electronic Notes in Theoretical Computer Science}, \textit{281}, 85--100. \url{https://doi.org/10.1016/j.entcs.2011.11.027}.

\bibitem[Pan and Zhu(2024)]{PanZhu2024}
Pan, P., \& Zhu, H. (2024). Approximation algorithms for the restricted \(k\)-Chinese postman problems with penalties. \textit{Optimization Letters}, \textit{18}, 307--318. \url{https://doi.org/10.1007/s11590-023-01992-z}.

\bibitem[Zhu and Pan(2021)]{ZhuPan2021}
Zhu, H., \& Pan, P. (2021). The restricted Chinese postman problems with penalties. \textit{Operations Research Letters}, \textit{49}(6), 851--854. \url{https://doi.org/10.1016/j.orl.2021.10.002}.
\end{thebibliography}
\end{document}